\newtheorem{lemma}{Lemma}[section]
\newtheorem{theorem}[lemma]{Theorem}
\newtheorem{definition}[lemma]{Definition}
\newtheorem{claim}[lemma]{Claim}
\newcommand{\norm}[1]{\left\lVert#1\right\rVert}
\newcommand{\R}{\mathbb{R}}
\newcommand{\N}{\mathbb{N}}
\renewcommand{\emptyset}{\varnothing}
\renewcommand{\epsilon}{\varepsilon}
\renewcommand{\tilde}{\widetilde}
\renewcommand{\hat}{\widehat}
\DeclareMathOperator{\poly}{poly}
\DeclareMathOperator*{\E}{\mathbb{E}}
\DeclareMathOperator*{\Var}{Var}
\newcommand{\set}[1]{\left\{{#1}\right\}}
\newcommand{\B}{\set{0,1}}
\newcommand{\eps}{\epsilon}
\newcommand{\Tr}{\mathrm{Tr}}
\newcommand{\dmax}{d_{\mathrm{max}}}
\newcommand{\wmax}{w_{\mathrm{max}}}
\newcommand{\wmin}{w_{\mathrm{min}}}
\newcommand{\Ker}{\operatorname{ker}}
\newcommand{\Ber}{\operatorname{Bernoulli}}
\newcommand{\Image}{\operatorname{Im}}
\newcommand{\Span}{\operatorname{span}}
\newcommand{\onesvec}{\mathbf{1}}
\newcommand{\zerosvec}{\mathbf{0}}
\newcommand{\Sparsify}{\mathsf{Sparsify}}
\newcommand{\lm}{\rho}
\newcommand{\Dnote}[1]{{\color{green} \textbf{Dean note:} #1}}
\newcommand{\Jnote}[1]{{\color{red} \textbf{Jack note:} #1}}
\def\textprob#1{\textmd{\textsc{#1}}}
\newcounter{myalgctr}
\newenvironment{myalg}{
   \smallskip\noindent
   \refstepcounter{myalgctr}
   }{\par\smallskip}  
\numberwithin{myalgctr}{section}
\newtheoremstyle{named}{}{}{\itshape}{}{\bfseries}{.}{.5em}{\thmnote{#3}}
\theoremstyle{named}
\newtheorem{namedtheorem}{Theorem}[section]
\DeclareRobustCommand\widecheck[1]{{\mathpalette\@widecheck{#1}}}
\def\@widecheck#1#2{%
	\setbox\z@\hbox{\m@th$#1#2$}%
	\setbox\tw@\hbox{\m@th$#1%
		\widehat{%
			\vrule\@width\z@\@height\ht\z@
			\vrule\@height\z@\@width\wd\z@}$}%
	\dp\tw@-\ht\z@
	\@tempdima\ht\z@ \advance\@tempdima2\ht\tw@ \divide\@tempdima\thr@@
	\setbox\tw@\hbox{%
		\raise\@tempdima\hbox{\scalebox{1}[-1]{\lower\@tempdima\box
				\tw@}}}%
	{\ooalign{\box\tw@ \cr \box\z@}}}
\begin{document}

\title{Spectral Sparsification via Bounded-Independence Sampling}

\author{Dean Doron\thanks{Research supported by a Motwani Postdoctoral Fellowship.} \\ ~~~~~Department of Computer Science ~~~~~\\ Stanford University \\ \texttt{ddoron@stanford.edu}  \and Jack Murtagh\thanks{Research supported by NSF grant CCF-1763299.} \\ School of Engineering \& Applied Sciences \\ Harvard University \\ \texttt{jmurtagh@g.harvard.edu} \\ \;  \and Salil Vadhan\thanks{Research supported by NSF grant CCF-1763299 and a Simons Investigator Award.} \\ School of Engineering \& Applied Sciences \\ Harvard University \\ \texttt{salil\textunderscore vadhan@harvard.edu}  \and David Zuckerman\thanks{Research supported in part by NSF Grant CCF-1705028 and a Simons Investigator Award (\#409864).} \\ ~~~~~Department of Computer Science~~~~~ \\ University of Texas at Austin \\ \texttt{diz@cs.utexas.edu}}

\date{}

\maketitle

\begin{abstract}
We give a deterministic, nearly logarithmic-space algorithm for mild spectral sparsification
of undirected graphs. Given a weighted, undirected graph $G$ on $n$ vertices described by a binary string of length $N$, an integer $k\leq \log n$, and an error parameter
$\eps > 0$, our algorithm runs in space $\tilde{O}(k\log (N\cdot\wmax/\wmin))$ where $\wmax$ and $\wmin$ are the maximum and minimum edge weights in $G$, and produces a weighted graph $H$ with $\tilde{O}(n^{1+2/k}/\eps^2)$ edges that
spectrally approximates $G$, in the sense of Spielmen and Teng \cite{ST04}, up to an error of $\eps$.

Our algorithm is based on a new bounded-independence analysis of Spielman and Srivastava's effective resistance based edge sampling algorithm \cite{SS08} and 
uses results from recent work on space-bounded Laplacian solvers \cite{MRSV17}. 
In particular, we demonstrate an inherent tradeoff (via upper and lower bounds) between the amount of (bounded) independence used in the edge sampling algorithm, denoted by $k$ above, and the resulting sparsity that can be achieved. 
 
\end{abstract}

\thispagestyle{empty}
\newpage


\section{Introduction} \label{sec:intro}
The graph sparsification problem is the following: given a weighted, undirected graph $G$, compute a graph $H$ that has very few edges but is a close approximation to $G$ for some definition of approximation. In general, graph sparsifiers are useful for developing more efficient graph-theoretic approximation algorithms. Algorithms whose complexity depend on the number of edges in the graph will be more efficient when run on the sparser graph $H$, and if $H$ approximates $G$ in an appropriate way, the result on $H$ may be a good approximation to the desired result on $G$. In this work, we present an algorithm that can be implemented deterministically in small space and achieves sparsification in the \emph{spectral sense} of Spielman and Teng \cite{ST04}. (See \cref{sec:intro_main_result} below for a more formal statement of our main result.)

\subsection{Background}
Motivated by network design and motion planning, Chew \cite{chew1989there} studied \emph{graph spanners}, which are sparse versions of graphs that approximately preserve the shortest distance between each pair of vertices. Bencz{\'u}r and Karger \cite{benczur1996approximating} defined \emph{cut sparsifiers} whose notion of approximation is that every cut of $H$ has size within a $(1\pm\eps)$ factor of the size of the corresponding cut in $G$.  They showed that every graph $G$ on $n$ vertices has a cut sparsifier $H$ with $O(n\cdot \log n/\epsilon^2)$ edges and gave a randomized algorithm for computing such cut sparsifiers. Their algorithm runs in nearly linear time (i.e., $\tilde{O}(m)$ where $m$ is the number of edges in $G$ and the $\tilde{O}(\cdot)$ notation hides polylogarithmic factors) and they used it to give a faster algorithm for approximating minimum $s\mbox{-}t$ cuts. 

Spielman and Teng introduced \emph{spectral sparsifiers}, which define approximation between the graph and its sparsifier in terms of the quadratic forms of their \emph{Laplacians} \cite{ST04}. The Laplacian of an  undirected graph is the matrix $L=D-A$ where $A$ is the adjacency matrix of the graph and $D$ is the diagonal matrix of vertex degrees (i.e. $D_{ii}$ equals the weighted degree of vertex $i$). $H$ is said to be an \emph{$\eps$-spectral approximation} of $G$ if for all vectors $v\in\mathbb{R}^n$, we have that $$v^{\top}\tilde{L}v\in (1\pm\eps)\cdot v^{\top}L v,$$ where $\tilde{L}$ and $L$ are the Laplacians of $H$ and $G$, respectively. Spectral sparsifiers generalize cut sparsifiers, which can be seen by observing that when $v\in\{0,1\}^n$, $v$ is the characteristic vector of some set of vertices $S\subseteq [n]$ and $v^{\top}L v$ equals the sum of the weights of the edges cut by $S$.

Spielman and Teng showed that all graphs have spectral sparsifiers with $O(n\cdot \log^{O(1)} n/\epsilon^2)$ edges and gave a nearly linear time randomized algorithm for computing them with high constant probability. Their spectral sparsifiers were a key ingredient that they used to develop the first nearly linear time algorithm for solving Laplacian systems. These fast Laplacian solvers spawned a flurry of improvements and simplifications \cite{cohen2014solving, kelner2013simple, koutis2014approaching, koutis2011nearly, kyng2016sparsified, kyng2016approximate, lee2013efficient, peng2014efficient} as well as extensions to directed graphs \cite{cohen2016faster, cohen2017almost, cohen2018solving} and to the space-bounded setting \cite{doron2017probabilistic, MRSV17, ahmadinejad2019high}. Spectral sparsification and the nearly linear time Laplacian solvers that use them have been critical primitives that have enabled the development of faster algorithms for a wide variety of problems including max flow \cite{liu2019faster, christiano2011electrical, cohen2017negative, kelner2014almost, lee2013new}, random generation of spanning trees, \cite{kelner2009faster, madry2014fast, schild2018almost}, and other problems in computer science \cite{orecchia2012approximating, koutis2009combinatorial}.

Spielman and Srivastava \cite{SS08} gave a spectral sparsification algorithm that both simplified and improved upon the algorithm of Spielman and Teng. They show that randomly sampling edges, independently with probabilities proportional to their \emph{effective resistances} produces a good spectral sparsifier with high probability. Viewing a graph as an electrical network, the effective resistance of an edge $(a,b)$ is the potential difference induced between them when a unit of current is injected at $a$ and extracted at $b$ (or vice versa). More formally, the effective resistance of an edge $(a,b)$ in a graph with Laplacian $L$ is
\begin{equation}
\label{eq:er_def}
R_{ab}=(e_a-e_b)^{\top}L^{+}(e_a-e_b)
\end{equation}
where $e_i$ denotes the $i$th standard basis vector and $L^{+}$ denotes the \emph{Moore-Penrose pseudoinverse} of $L$\footnote{$L^+$ is a matrix with the same kernel as $L$ that acts as an inverse of $L$ on the orthogonal complement of the kernel. See \cref{sec:moore_pen} for a formal definition.}. 

Spielman and Srivastava proved the following theorem.
\begin{theorem}[spectral sparsification via effective resistance sampling\footnote{In their original paper, \cite{SS08}, they fix the number of edges in the sparsifier in advance resulting in a slightly different theorem statement and analysis. The version we cite here and what we model our algorithm after was presented later in \cite{SNotes}.} \cite{SS08,SNotes}]
\label{thm:intro_ss_alg}
Let $G=(V,E,w)$ be a weighted graph on $n$ vertices and for each edge $(a,b)\in E$ with weight $w_{ab}$, define $p_{ab}=\min\{1,4\cdot\log n\cdot w_{ab}\cdot R_{ab}/\eps^2\}$, where $R_{ab}$ is the effective resistance of $(a,b)$ as defined in \cref{eq:er_def}.  Construct a sparsifier $H$ by sampling edges from $G$ independently such that each edge $(a,b)$ in $G$ is added to $H$ with probability $p_{ab}$. For edges that get added to $H$, reweight them with weight $w_{ab}/p_{ab}$. Let $L$ and $\tilde{L}$ be the Laplacians of $G$ and $H$, respectively. Then, with high probability,
\begin{enumerate}
\item $H$ has $O(n\cdot(\log n)/\eps^2)$ edges, and,
\item $\tilde{L}$ $\eps$-spectrally approximates $L$.
\end{enumerate}
Furthermore, this procedure can be implemented to run in time $\tilde{O}(\frac{m}{\eps^2}\cdot\log(w_{\mathrm{max}}/w_{\mathrm{min}}))$, where $m$ is the number of edges in $G$ and $w_{\mathrm{max}},w_{\mathrm{min}}$ are the maximum and minimum edge weights of $G$, respectively. 
\end{theorem}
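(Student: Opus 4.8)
The plan is to follow the standard reduction of $\eps$-spectral approximation to an operator-norm bound on a sum of independent rank-one PSD matrices, and then to handle the edge count and the running time as separate, largely self-contained pieces. I would first fix notation $b_{ab} = e_a - e_b$, so that $L = \sum_{(a,b) \in E} w_{ab}\, b_{ab} b_{ab}^\top$; let $\Pi$ be the orthogonal projection onto $\Image(L)$, let $L^{+/2}$ be the positive square root of $L^+$, and set $y_{ab} = \sqrt{w_{ab}}\, L^{+/2} b_{ab} \in \Image(L)$. The point of this normalization is that $\sum_{(a,b)} y_{ab} y_{ab}^\top = L^{+/2} L L^{+/2} = \Pi$ and $\norm{y_{ab}}^2 = w_{ab}\, b_{ab}^\top L^+ b_{ab} = w_{ab} R_{ab}$. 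Writing $X_{ab} \sim \Ber(p_{ab})$ for the independent sampling indicators, one gets $L^{+/2} \tilde{L} L^{+/2} = \sum_{(a,b)} (X_{ab}/p_{ab})\, y_{ab} y_{ab}^\top =: M$, with $\E[M] = \Pi$. Since $\tilde{L}$ is a reweighted sub-sum of the $b_{ab}b_{ab}^\top$, we have $\Ker(L) \subseteq \Ker(\tilde{L})$, and from this a short linear-algebra computation shows that $\tilde{L}$ is an $\eps$-spectral approximation of $L$ if and only if $\norm{M - \Pi}_{\mathrm{op}} \le \eps$, the norm being taken over the $\mathrm{rank}(L) \le n$ dimensional space $\Image(L)$. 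This reduces item~2 to a matrix tail bound.

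To prove that tail bound I would split the edges into the ``saturated'' set $\{p_{ab} = 1\}$ and the rest: the saturated edges contribute the same deterministic term to $M$ and to $\Pi$, so they cancel, and it suffices to control $\norm{\sum_{p_{ab} < 1} S_{ab}}_{\mathrm{op}}$ with $S_{ab} = (X_{ab}/p_{ab} - 1)\, y_{ab} y_{ab}^\top$. The key numerical observation is that for an unsaturated edge $\norm{y_{ab}}^2/p_{ab} = w_{ab} R_{ab}/p_{ab} = \eps^2/(4\log n)$; this simultaneously bounds $\norm{S_{ab}}_{\mathrm{op}} \le \eps^2/(4\log n)$ and, after a one-line variance computation using $\sum y_{ab} y_{ab}^\top \preceq \Pi$, bounds the variance proxy $\norm{\sum \E[S_{ab}^2]}$ by $\eps^2/(4\log n)$ as well. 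Feeding these into the matrix Bernstein inequality (in dimension $\le n$) makes the relevant exponent $\Theta(\log n)$, so the failure probability is $2n \cdot n^{-\Theta(1)} = o(1)$, and replacing the constant $4$ by a larger one drives it below any prescribed $1/\poly(n)$; alternatively one can invoke Rudelson's sampling lemma exactly as in \cite{SS08}. I expect this step --- and in particular the bookkeeping that isolates the saturated edges and pins the variance to a $\eps^2/(4\log n)$ multiple of $\Pi$ --- to be the main technical obstacle; the remaining ingredients are comparatively routine.

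For item~1, the expected number of sampled edges is $\sum_{(a,b) \in E} p_{ab} \le (4\log n/\eps^2) \sum_{(a,b) \in E} w_{ab} R_{ab}$, and I would invoke Foster's network theorem --- which via the trace identity $\sum_{(a,b)\in E} w_{ab} R_{ab} = \Tr(L^+ L) = \mathrm{rank}(L) \le n-1$ is itself a one-liner --- to conclude $\E[\abs{E(H)}] \le 4(n-1)\log n/\eps^2$. Since $\abs{E(H)}$ is a sum of independent indicators (and its deterministic, saturated part is already $O(n\log n/\eps^2)$, by the same estimate bounding the number of saturated edges), a Chernoff bound gives $\abs{E(H)} = O(n\log n/\eps^2)$ with high probability, which can be combined with item~2 by a union bound.

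Finally, for the running-time claim the only nontrivial task is computing the $p_{ab}$, i.e.\ estimating the effective resistances $R_{ab} = \norm{L^{+/2} b_{ab}}^2$. Here I would use the Spielman--Srivastava idea: draw a random sign matrix $Q$ of dimension $O(\log n/\eps^2) \times m$ and compute --- via that many Laplacian solves together with sparse matrix--vector products against the signed incidence matrix $B$ and the diagonal weight matrix $W$ --- a matrix $Z$ with $Z L^{1/2} = Q W^{1/2} B$; by the Johnson--Lindenstrauss lemma, $\norm{Z b_{ab}}^2 \in (1 \pm \eps) R_{ab}$ for all edges simultaneously with high probability. Each Laplacian solve to $1/\poly(n)$ accuracy runs in $\tilde{O}(m \log(\wmax/\wmin))$ time, the logarithm coming from the bit complexity / condition number of $L$ (which is $\poly(n) \cdot \wmax/\wmin$); the sampling itself is $O(m)$ time; and using $(1\pm\eps)$-approximate resistances only perturbs each $p_{ab}$ by a constant factor, which is absorbed into the constants above. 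Altogether this runs in $\tilde{O}\bigl(\tfrac{m}{\eps^2} \log(\wmax/\wmin)\bigr)$ time, as claimed.
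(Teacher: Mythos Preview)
Your proof is correct and is essentially the standard Spielman--Srivastava argument. Note, however, that the paper does not supply its own proof of \cref{thm:intro_ss_alg}: the theorem is stated in the introduction as background and attributed to \cite{SS08,SNotes}, with no proof given.

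The closest the paper comes is the proof of its generalization, \cref{thm:main} (the bounded-independence version), in \cref{sec:upper-bound}. That proof shares your reduction --- $\tilde{L} \approx_\eps L$ iff $X = L^{+/2}\tilde{L}L^{+/2}$ satisfies $X \approx_\eps \Pi$ (\cref{lem:reduction_to_X}) --- and your use of Foster's theorem for the edge count. Where it diverges is the concentration step: rather than matrix Bernstein or Rudelson, the paper bounds $\Tr(\E[(X-\Pi)^k])$ via the moment inequality of Chen--Gittens--Tropp (\cref{thm:normbound}) and then applies Markov's inequality to $\Tr((X-\Pi)^k)$. The reason for working with the $k$th trace moment instead of an exponential tail bound is that $\E[(X-\Pi)^k]$ depends only on $k$-tuples of the $X_{ab}$, and hence is identical under $k$-wise and full independence (\cref{claim:kwise}); this is precisely what makes the bounded-independence analysis go through. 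Your matrix-Bernstein route is the cleaner and sharper choice for the fully independent case stated here, but it would not survive the drop to $k$-wise independence, which is why the paper takes the moment route. The paper also does not reproduce the running-time analysis you sketch; that is left entirely to the cited references.
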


The sparsity achieved by the Spielman and Srivastava sparsifiers was improved by Batson, Spielman and Srivastava \cite{batson2014twice}, who gave a deterministic algorithm for computing $\eps$-spectral sparsifiers with $O(n/\eps^2)$ edges, which is asymptotically optimal, however, their algorithm is less efficient, running in time $O(m\cdot n^3/\eps^2)$. Work on these optimal sparsifiers continued with another slightly faster deterministic algorithm \cite{zouzias2012matrix} followed by an $O(n^{2+\eps})$-time randomized algorithm \cite{allen2015spectral}, and culminating in the randomized algorithms of Lee and Sun who achieved almost-linear time \cite{lee2015constructing} and finally nearly-linear time \cite{lee2017sdp}.

\subsection{Our Main Result}
\label{sec:intro_main_result}
In this work we study the deterministic \emph{space complexity} of computing spectral sparsifiers. Our main result is a deterministic, nearly-logarithmic space algorithm for computing \emph{mild spectral sparsifiers}, that is, graphs with $O(n^{1+\alpha}/\eps^2)$ edges for any constant $\alpha>0$. 

\begin{theorem}[see also \cref{thm:derand}]
\label{thm:intro_main_result}
Let $G$ be a connected, weighted, undirected graph on $n$ vertices, $k \in \mathbb{N}$ an independence parameter 
and $\eps > 0$ an error parameter. There is a deterministic algorithm that on input $G$, $k$, and $\eps$,
outputs a weighted graph $H$ that is an
$\eps$-spectral sparsifier of $G$
and has $O(n^{1+2/k}\cdot(\log n)/\eps^2)$ edges.
The algorithm runs in space $O(k \log (N\cdot w)+\log (N\cdot w) \log\log (N\cdot w))$, where $w=\wmax/\wmin$ is the ratio of the maximum and minimum edge weights in $G$ and $N$ is the length of the input.
\end{theorem}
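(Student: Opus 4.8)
The plan is to combine three ingredients: (i) a bounded-independence version of the Spielman--Srivastava sampling analysis (\cref{thm:intro_ss_alg}), showing that $k$-wise independent sampling of edges with probabilities $p_{ab}$ still yields an $\eps$-spectral sparsifier, at the cost of increasing the edge count from $O(n\log n/\eps^2)$ to $O(n^{1+2/k}\log n/\eps^2)$; (ii) the space-bounded Laplacian solver machinery of \cite{MRSV17} to approximate the effective resistances $R_{ab}$ (equivalently the probabilities $p_{ab}$) deterministically in space $\tilde O(\log(N w))$; and (iii) a derandomization of the $k$-wise independent sampling via an explicit small-space $k$-wise independent generator, followed by an exhaustive search over its seeds to find one good seed deterministically. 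Steps (i)--(iii) are assembled in the order: first fix approximate probabilities $\hat p_{ab}$, then argue that a $k$-wise independent choice of the indicator variables $X_{ab}\in\B$ gives a good sparsifier \emph{with positive probability} (indeed with probability close to $1$), then enumerate all seeds of the generator, for each candidate sparsifier verify the spectral approximation guarantee deterministically in small space (again using the Laplacian solver to check quadratic forms, or rather a sufficient deterministic certificate of it), and output the first one that passes.

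For step (i), the heart of the matter is to control, for a $k$-wise independent family $\{X_{ab}\}$, the deviation of $\Pi := \sum_{(a,b)} (X_{ab}/p_{ab})\, w_{ab}\, L^{+/2}(e_a-e_b)(e_a-e_b)^\top L^{+/2}$ from its mean $\Pi_0$ (the projection onto the image of $L$). With full independence this is a matrix Chernoff bound; with only $k$-wise independence we instead bound $\E\,\norm{\Pi-\Pi_0}$ via $\E\,\Tr\big((\Pi-\Pi_0)^k\big)$, the $k$-th moment method. Expanding this trace gives a sum over length-$k$ closed walks in the edge set, and $k$-wise independence is exactly enough to factor the expectation of each monomial the same way full independence would; the combinatorics then reduces to counting such walks weighted by effective resistances, using $\sum_{(a,b)} p_{ab} \le O(n\log n/\eps^2)$ (Foster's theorem: $\sum w_{ab} R_{ab} = n-1$) to bound the number of ``relevant'' edges. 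Optimizing the resulting tail bound $\Pr[\norm{\Pi-\Pi_0}>\eps] \le \eps^{-k}\E\Tr((\Pi-\Pi_0)^k)$ against the number of edges we are forced to keep yields the $n^{2/k}$ blowup. I expect this moment computation — getting the walk-counting bound tight enough to land at exactly $n^{1+2/k}$ rather than something weaker — to be the main technical obstacle.

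For step (ii), I will invoke \cite{MRSV17}: there is a deterministic $\tilde O(\log(Nw))$-space algorithm that, given $G$ and a pair $(a,b)$, outputs a multiplicative $(1\pm\delta)$-approximation of $R_{ab}$ (via approximate Laplacian solving applied to $e_a-e_b$), hence of $p_{ab}$, for $\delta$ a small constant or even $1/\poly(n)$; absorbing this $\delta$ into the $\eps$ of the final guarantee only changes constants. Care is needed to handle the truncation of $p_{ab}$ to a rational of bounded bit-length and to handle edges with $p_{ab}=1$ (kept deterministically); none of this affects the space bound since each probability is recomputed on the fly.

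For step (iii), I will use a standard explicit construction of a $k$-wise independent distribution over $\B^m$ whose seed has length $O(k\log m) = O(k\log(Nw))$ and which is computable bit-by-bit in space $O(k\log(Nw))$. Iterating over all $2^{O(k\log(Nw))} = (Nw)^{O(k)}$ seeds costs only $O(k\log(Nw))$ additional space for the counter. For each seed we form the candidate $H$ implicitly (its edge set is determined by the seed together with the on-the-fly $\hat p_{ab}$), and we must \emph{verify} deterministically in space $\tilde O(\log(Nw))$ that $H$ is an $\eps$-spectral approximation of $G$ and has the right number of edges; the edge count is a direct count, and the spectral check can be done by again appealing to \cite{MRSV17}-style machinery to compare the quadratic forms $v^\top \tilde L v$ and $v^\top L v$ on an appropriate small deterministic test set of vectors, or more robustly by estimating $\norm{L^{+/2}\tilde L L^{+/2} - \Pi_0}$ up to $\eps$ in small space. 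Step (i) guarantees at least one (in fact, almost every) seed passes, so the search succeeds. Composing the space bounds — $O(k\log(Nw))$ for the generator and seed enumeration, plus $\tilde O(\log(Nw))$ for each Laplacian-solver call, reused across calls — gives the claimed $O(k\log(Nw) + \log(Nw)\log\log(Nw))$ bound.
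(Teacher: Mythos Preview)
Your high-level plan matches the paper's exactly: bounded-independence moment analysis of Spielman--Srivastava sampling, space-efficient approximation of effective resistances via \cite{MRSV17}, enumeration over seeds of a $k$-wise independent generator, and a small-space verification step for each candidate.

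The one substantive difference is in how step (i) is executed. You propose to bound $\E\,\Tr((\Pi-\Pi_0)^k)$ by expanding into length-$k$ walks and doing the combinatorics directly. The paper avoids this entirely with a cleaner trick: since only the $k$-th moment is needed, one may replace the $k$-wise independent $Z_{ab}$'s by \emph{fully} independent copies $\hat Z_{ab}$ without changing $\E[Z^k]$, and then apply the off-the-shelf matrix moment inequality of Chen--Gittens--Tropp to the fully independent family, bounding $(\E\|\hat Z\|^k)^{1/k}$ by $\sqrt{er}\,\|\sum\E\hat Z_{ab}^2\|^{1/2}+2er\,(\E\max\|\hat Z_{ab}\|^k)^{1/k}$ with $r=2\log n$. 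Both terms are controlled using $\|\hat Z_{ab}\|\le 1/s$ and $\sum\E\hat Z_{ab}^2\preceq \Pi/s$, which gives the $n^{1+2/k}$ factor directly with no walk-counting. Your route should also work in principle (it amounts to reproving a matrix Rosenthal-type bound by hand) but is considerably more laborious; the paper's ``switch to full independence, then quote a black box'' is the simpler path.

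For step (iii), your verification sketch is on the right track but vague. The paper's concrete procedure is to form $M=\big((\tilde L-L)L^{+}/\eps\big)^2$ (replacing $L^{+}$ by a small-space approximation via \cite{MRSV17}), compute $\Tr(M^t)$ for $t\approx \log\Tr(M)/\log(1+\alpha)$ by iterated matrix multiplication in space $O(\log(Nw)\log t)$, and compare it to $\Tr(M)$; this distinguishes $\tilde L\approx_\eps L$ from $\tilde L\not\approx_{\eps\sqrt{1+\alpha}}L$. Your ``test set of vectors'' idea is not developed enough to be a verification procedure; the trace-of-powers approach is what makes this step go through in the stated space.
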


The closest analogue to spectral sparsifiers in the space-bounded derandomization literature is the \emph{derandomized square} of Rozenman and Vadhan \cite{RV05}, a graph operation that produces a sparse approximation to the square of a graph.\footnote{The \emph{square} of a graph $G$ is a graph on the same vertex set whose edges correspond to all walks of length 2 in $G$.} The derandomized square was introduced to give an alternative proof to Reingold's celebrated result that \textprob{Undirected S-T Connectivity} can be solved in deterministic logspace \cite{Reingold08}. Murtagh, Sidford, Reingold, and Vadhan \cite{MRSV17} showed that the derandomized square actually produces a \emph{spectral sparsifier} of the square of a graph and this was a key observation they used to develop a deterministic, nearly logarithmic space algorithm for solving Laplacian systems. Later the sparsification benefits of the derandomized square were also used in nearly logarithmic space algorithms for deterministically approximating random walk probabilities and for solving Laplacian systems in Eulerian directed graphs \cite{MRSV19, ahmadinejad2019high}.   

For a $d$-regular graph $G$ on $n$ vertices, its square $G^2$ has degree $d^2$ and the derandomized square computes an $\eps$-spectral approximation to $G^2$ with degree $O(d/\eps^2)$. On the other hand, applying our sparsification to $G^2$ results in an $\eps$-spectral approximation with on average $O(n^{\alpha}/\eps^2)$ edges adjacent to each vertex for any constant $\alpha$, which is independent of $d$ and much sparser when $d=\omega(n^{\alpha})$. Also, our algorithm can sparsify any undirected graph, not just squares. Our algorithm does not replace the derandomized square, however, because the derandomized square can be iterated very space efficiently, a property that is used in all of its applications thus far. Nevertheless, given the success of spectral sparsification and Laplacian solvers in the nearly-linear time context and the fruit borne of porting these techniques to the logspace setting, we are hopeful that our spectral sparsifiers will have further applications in derandomization of space-bounded computation.

\subsection{Techniques}

Our deterministic space-efficient algorithm is modeled after the effective resistance based sampling algorithm of Spielman and Srivastava (\cref{thm:intro_ss_alg}). Although the Spielman and Srivastava procedure is randomized and does not achieve optimal sparsity, the known algorithms that do (\cite{batson2014twice,zouzias2012matrix,allen2015spectral,lee2015constructing,lee2017sdp}) are more involved and often sequential in nature so do not seem as amenable to small-space implementations.  

To derandomize the Spielman-Srivastava algorithm, we follow the standard approach of first reducing the number of random bits used to logarithmic, and then enumerating over all random choices of the resulting algorithm. Following \cite{luby86,alon86},  a natural way to reduce the number of random bits used is to do the edge sampling only $k$-wise independently for some $k \ll |E|$ rather than sampling every edge independently from all other edges.

Let $k$ be our bounded-independence parameter. Namely, we are only guaranteed that
every subset of $k$ edges is chosen independently (with the right marginals), however there
may be correlations between the choices in tuples of size $k+1$. It is well known that
such a sampling can be performed using fewer random bits. By \cite{SS08}, we know
that $k = |E|$ will, with high probability, produce an $\eps$-spectral sparsifier
with $O(n\cdot\log n/\eps^2)$ edges in expectation. What about much
smaller values of $k$? In \cref{sec:upper-bound}, we prove the following:

\begin{theorem}[informal; see \cref{thm:main}]\label{thm1:informal}
Let $G$ be a connected weighted undirected graph on $n$ vertices with Laplacian $L$, $k \in \mathbb{N}$ an independence parameter 
and $\eps > 0$ an error parameter. Let
$H$ be the graph which is the output of Spielman and Srivastava's sampling-based
sparsification algorithm (\cref{thm:intro_ss_alg}), when the edge sampling is done in a $k$-wise
independent manner, and let $\tilde{L}$ be the Laplacian of $H$.
Then, with high constant probability, $\tilde{L}$ $\eps$-approximates $L$
and $H$ has $O(n^{1+2/k}\cdot(\log n)/\eps^2)$ edges.
\end{theorem}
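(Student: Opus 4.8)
Set $W := (L^{+})^{1/2}$ (the positive semidefinite square root of the pseudoinverse), and for each edge $e=(a,b)$ let $y_e := \sqrt{w_e}\,W(e_a-e_b)\in\Image(L)=\onesvec^{\perp}$. Then $\sum_e y_e y_e^{\top}=\Pi$, the orthogonal projection onto $\Image(L)$, so $\tau_e:=\norm{y_e}^2=w_e R_e\le 1$ and $\sum_e\tau_e=\Tr(\Pi)=n-1$. Write $X_e\in\B$ for the ($k$-wise independent) indicator that $e$ survives the sampling and put $Z_e:=(X_e/p_e-1)\,y_e y_e^{\top}$; then the reweighted Laplacian $\tilde L$ of $H$ satisfies $W\tilde L W-\Pi=\sum_e Z_e=:M$, and it is standard that $\tilde L$ $\eps$-approximates $L$ if and only if $\norm{M}\le\eps$ as an operator on $\Image(L)$. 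We take $p_e=\min\{1,\,\beta\tau_e\}$ with $\beta:=\Theta\!\big(n^{2/k}(\log n)/\eps^2\big)$, i.e.\ the Spielman--Srivastava probabilities with $\log n$ replaced by $\Theta(n^{2/k}\log n)$, sampled $k$-wise independently. The expected number of surviving edges is $\sum_e p_e\le\beta(n-1)=O\!\big(n^{1+2/k}(\log n)/\eps^2\big)$, and since the $X_e$ are in particular pairwise independent, a Chebyshev bound shows $H$ exceeds twice this bound only with small constant probability; so it remains to bound $\Pr[\norm{M}>\eps]$.

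Assume $k$ is even (for odd $k$, replace $k$ by $k-1$, which only changes $2/k$ to $2/(k-1)$ in the exponent). Since $M$ is symmetric with at most $n-1$ nonzero eigenvalues and $k$ is even, $\norm{M}^k\le\Tr(M^k)\le(n-1)\norm{M}^k$, so by Markov it suffices to bound $\E[\Tr(M^k)]$. The crucial point is that this expectation is insensitive to the limited independence: expanding $\Tr(M^k)=\sum_{e_1,\dots,e_k}\big(\prod_{i=1}^k(X_{e_i}/p_{e_i}-1)\big)\Tr\!\big(\prod_{i=1}^k y_{e_i}y_{e_i}^{\top}\big)$ and taking expectations, the coefficient $\E\big[\prod_{i=1}^k(X_{e_i}/p_{e_i}-1)\big]$ of each term depends only on the joint law of the at most $k$ distinct variables $X_{e_i}$ occurring in it, which $k$-wise independence makes a product law. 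Hence $\E_{k\text{-wise}}[\Tr(M^k)]=\E_{\text{fully independent}}[\Tr(M^k)]$, and we are reduced to a standard matrix $k$-th-moment estimate for a sum of independent rank-one terms.

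For $M=\sum_e Z_e$ with the $Z_e$ independent, mean-zero, symmetric, and $\norm{Z_e}\le\tau_e/p_e\le 1/\beta$ almost surely, the matrix Rosenthal/Khintchine inequality bounds the Schatten norm: $\big(\E\norm{M}_{S_k}^k\big)^{1/k}\lesssim\sqrt{k}\,\norm{A^{1/2}}_{S_k}+k\,\big(\E\max_e\norm{Z_e}^k\big)^{1/k}$, where $A:=\sum_e\E[Z_e^2]=\sum_e a_e\,y_e y_e^{\top}$ with $a_e:=\tfrac{1-p_e}{p_e}\tau_e$. Because $k$ is even, $\norm{M}_{S_k}^k=\Tr(M^k)$ and $\norm{A^{1/2}}_{S_k}^k=\Tr(A^{k/2})\le\norm{A}^{k/2-1}\Tr(A)$. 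Now $A\preceq\tfrac1\beta\sum_e y_e y_e^{\top}=\tfrac1\beta\Pi$ gives $\norm{A}\le 1/\beta$, and $\Tr(A)=\sum_e a_e\tau_e\le\tfrac1\beta\sum_e\tau_e=(n-1)/\beta$; also $\norm{Z_e}\le 1/\beta$. Substituting, $\big(\E[\Tr(M^k)]\big)^{1/k}\lesssim\sqrt{k}\,\beta^{-1/2}n^{1/k}+k\beta^{-1}$, and our choice $\beta=\Theta\!\big(n^{2/k}(\log n)/\eps^2\big)$ (so also $\beta=\Omega(k\cdot n^{2/k}/\eps^2)$ since $k\le\log n$) makes both terms at most $\eps/4$; hence $\E[\Tr(M^k)]\le\eps^k/4$ and $\Pr[\norm{M}>\eps]\le 1/4$. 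A union bound with the sparsity event completes the proof; since the space-bounded algorithm only has constant-factor approximations to the $R_e$, one notes this merely changes constants.

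The main obstacle is the matrix $k$-th-moment estimate invoked above. One option is to quote a matrix Rosenthal inequality as a black box, having observed (as in the second paragraph) that its application is legitimate under $k$-wise independence because the moment expansion only couples at most $k$ of the $Z_e$. The cleaner and more self-contained route is to prove the special case needed directly, by expanding $\E[\Tr(M^k)]$ over set partitions of $[k]$: partitions with a singleton block contribute $0$, the $(k-1)!!$ perfect matchings give the leading term, and the two delicate points are (i) never passing to entrywise absolute values inside the trace --- so that each matching's contribution remains a signed trace that telescopes to $\Tr(\Pi^{\,\cdot})=n-1$ rather than blowing up to $(n-1)^{\Omega(k)}$, which one achieves by recognizing it as $\Tr$ of a $(k/2)$-th power of an explicit positive semidefinite operator (assembled from the tensors $y_e\otimes y_e$) of operator norm $O(1/\beta)$ and trace $O(n/\beta)$ --- and (ii) checking that partitions with a block of size $\ge 3$ are lower order, each contributing an extra factor of at least $\beta^{-1}$.
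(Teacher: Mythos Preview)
Your approach is essentially the same as the paper's: both reduce to bounding $\E[\Tr(M^k)]$ for $M=L^{+/2}\tilde L L^{+/2}-\Pi$, observe that the expansion of $\Tr(M^k)$ involves at most $k$ distinct $Z_e$'s per term so $k$-wise independence matches full independence on this moment, and then apply a matrix $k$-th-moment inequality (the paper cites Chen--Gittens--Tropp, which is the same Rosenthal-type bound you invoke) together with the estimates $\norm{Z_e}\le 1/\beta$ and $\sum_e\E[Z_e^2]\preceq\Pi/\beta$. The only cosmetic difference is that the paper bounds $\E\norm{M}^k$ and pays an extra factor $n$ via $\Tr\le n\norm{\cdot}$, whereas you work directly in Schatten-$k$ norm and pay the same $n$ through $\Tr(A^{k/2})\le\norm{A}^{k/2-1}\Tr(A)$; your final paragraph's self-contained partition expansion is a pleasant alternative the paper does not pursue.
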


A first thing to observe is that $k = \log n$ gives the same result
as in \cite{SS08}. More importantly, the above shows that the result \emph{interpolates}: Even
for a constant $k$, \cref{thm1:informal} gives a \emph{mild sparsification} that sparsifies
dense graphs to $O(n^{1+\alpha})$ expected edges, where $\alpha > 0$
is an arbitrarily small constant.

We prove \cref{thm1:informal} by extending the arguments in \cite{SS08, SNotes}.
For every edge $(a,b) \in E$, we define a random matrix $X_{ab}$ that
corresponds to the choice made by the sparsification algorithm, in such
a way that $X = \sum_{(a,b) \in E}X_{a,b}$ relates to the resulting 
Laplacian $\tilde{L}$.\footnote{Specifically, $X = L^{+/2}\tilde{L}L^{+/2}$, where $L^{+/2}$ is
the square-root of the pseudoinverse of $L$.} Let $\Pi$ be the orthogonal 
projection onto the image of $L$. 
Following \cite{SS08, SNotes}, we show that $\tilde{L}$ $\eps$-spectrally approximates $L$ (equivalently, that
$H$ is an $\eps$-spectral sparsifier for $G$) with high probability if $X-\Pi$ has bounded moments. Deriving a tail bound that relies on the
first $k$ moments alone, we can proceed with the analysis as if the $X_{ab}$'s 
were \emph{truly independent}. More specifically, we bound $\Tr(\E_{X}[(X-\Pi)^k])$ 
using a matrix concentration result due to Chen, Gittens and Tropp \cite{CGT12}.
For the complete details, as well as how our argument differs from \cite{SS08, SNotes},
see \cref{sec:upper-bound}. 

\paragraph{Getting a Deterministic Algorithm.}
\cref{thm1:informal} readily gives a simple, randomness-efficient 
algorithm, as $k$-wise independent sampling of edges only requires $O(k\cdot\log (N\cdot w))$ random bits \cite{J74,alon86} (See \cref{lem:sampling}). However, more work is needed to obtain a space-efficient deterministic algorithm. First, we need to be able to compute the marginal sampling probabilities, which depend on the effective resistances $R_{ab}$. Fortunately, the recent work of Murtagh et al.\ \cite{MRSV17} allows us to approximate the effective resistances using only $O(\log (N\cdot w) \log\log (N\cdot w))$ space and we show that the $k$-wise independent sampling procedure can tolerate the approximation. 

Next, to obtain a deterministic algorithm, we can enumerate over all possible random choices of the algorithm in space $O(k\cdot \log (N\cdot w))$ and compute a candidate sparsifier $H$ for each. We are guaranteed that at least one (indeed, most) of the resulting graphs $H$ is a good sparsifier for $G$ but how can we identify which one? To do this, it suffices for us, given Laplacians $L$ and $\tilde{L}$, to distinguish the case that $\tilde{L}$ is an $\epsilon$-spectral approximation of $L$ from the case that $\tilde{L}$ is not a $2\cdot\epsilon$-spectral approximation of $L$. We reduce that problem to that of approximating the spectral radius of 
$$
M = \left( \frac{(\tilde{L}-L)L^{+}}{\eps} \right)^{2}.
$$
where $L^+$ is the pseudoinverse of $L$, which can be approximated in nearly logarithmic space by \cite{MRSV17}. In fact, it will be sufficient to check whether the trace of a logarithmically high power of $M$
is below a certain threshold to deduce that the spectral radius of $M$ does not exceed $1$. 
In \cref{sec:verify}, we show that the latter case implies that $\tilde{L}$ indeed
$\eps$-approximates $L$. 

The deterministic, nearly logarithmic space Laplacian solver of \cite{MRSV17} only worked for \emph{multigraphs}, i.e. graphs with integer edge weights. To get our result for arbitrary weighted graphs, we extend the work of \cite{MRSV17} and give a deterministic, nearly logarithmic space Laplacian solver for arbitrary undirected weighted graphs. Combining this extension with the $k$-wise independent analysis of the edge sampling algorithm (\cref{thm1:informal}) and the verification procedure described above lets us prove our main result \cref{thm:intro_main_result}.

\subsection{Lower Bounds for Bounded-Independence Sampling}
Having established an upper bound on the amount of independence required
for the edge-sampling procedure (\cref{thm1:informal}), a natural goal would be to come up with 
a corresponding lower bound. \cref{thm1:informal} tells us that
in order to sparsify to $\tilde{O}(n^{1+\alpha})$ expected edges, we can use $k$-wise independent sampling for $k = 2/\alpha$. Can a substantially smaller choice of $k$ perform just as well? In \cref{sec:lower-bound}, we show that our upper bound of $k = 2/\alpha$ is tight up to a small constant factor. 

\begin{theorem}[informal; see \cref{thm:main-lower}]\label{thm3:informal}
For every small enough $\alpha > 0$ there exist infinitely many connected graphs $G = (V = [n],E)$ with all effective resistances equal
that are $d$-regular with $d=\Omega(n^\alpha)$ and a 
distribution $\mathcal{D} \sim \B^{|E|}$ that is $k$-wise independent for $k= \left\lfloor 4/3\alpha\right\rfloor$
with marginals $1/2$ that would fail to produce an $\epsilon$-spectral sparsifier of $G$ to within any $\epsilon > 0$ with high probability.
\end{theorem}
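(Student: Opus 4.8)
The plan is to produce, for infinitely many $n$, a connected graph $G$ together with an explicit $k$-wise independent distribution $\mathcal D\sim\B^{|E|}$ with marginals $1/2$ under which the sampled (reweighted) graph $H$ is \emph{disconnected} with high probability. Since $G$ is connected, this already defeats spectral sparsification for every $\eps\in(0,1)$: if some vertex $v$ is isolated in $H$ then $e_v^{\top}\tilde L e_v=0$ while $e_v^{\top}Le_v=d>0$, so $e_v^{\top}\tilde L e_v\notin(1\pm\eps)\,e_v^{\top}Le_v$. Note that marginals $1/2$ are \emph{generous} here: for the graph we use they exceed the Spielman--Srivastava marginals $p_{ab}=\Theta(\log n/(d\eps^2))$, and with \emph{full} independence marginals $1/2$ would already produce an $\eps$-sparsifier for every constant $\eps$ --- so this is a failure in a regime where sampling ought to be easy.

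First I would fix the graph. Put $r:=\lfloor 1/\alpha\rfloor$, and for each integer $q\ge 2$ let $G=G_q:=K_q^{\square r}$ be the $r$-dimensional Hamming graph on $n:=q^{r}$ vertices; it is connected and $d$-regular with $d=r(q-1)$, and since $q=n^{1/r}\ge n^{\alpha}$ and $d\ge q/2$ we have $d=\Omega(n^{\alpha})$. As $G$ is edge-transitive, all effective resistances coincide: an automorphism carrying edge $(a,b)$ to $(a',b')$ induces a permutation matrix $P$ with $PLP^{\top}=L$, hence $PL^{+}P^{\top}=L^{+}$, whence $R_{a'b'}=(P(e_a-e_b))^{\top}L^{+}(P(e_a-e_b))=R_{ab}$ by \cref{eq:er_def}. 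What makes this symmetric graph useful is that it nonetheless has many edge-disjoint bottlenecks: take the independent set $I:=\{x\in\Z_q^{r}:\sum_i x_i=0\}$, of size $q^{r-1}$; for distinct $u,v\in I$ the stars $\partial u,\partial v$ (the sets of $d$ edges at $u$, at $v$) are edge-disjoint, since their only possible common edge $\{u,v\}$ is absent, and deleting all of $\partial v$ isolates $v$.

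Next I would build $\mathcal D$. For each $v\in I$, let $\mu_v$ be the restriction, to the $d$ coordinates indexed by $\partial v$, of the uniform distribution on the dual of a binary BCH code of length $\ell=\Theta(d)$ and designed distance $k+1$. By the BCH bound that distribution is $k$-wise independent (the BCH code has minimum distance $>k$), has all marginals $1/2$ (the dual code, being cyclic and nontrivial, has full support), and has $\vec 0$-probability $2^{-\dim}\ge(\ell+1)^{-\lceil k/2\rceil}$; restricting to the $\partial v$-coordinates preserves $k$-wise independence and balancedness and only increases the $\vec 0$-probability, so $\mu_v$ is $k$-wise independent with marginals $1/2$ and $\Pr_{\mu_v}[\vec 0]=\Omega(d^{-\lceil k/2\rceil})$ (essentially optimal, by the Rao bound; cf.\ \cite{alon86}). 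On the remaining edges $E\setminus\bigcup_{v\in I}\partial v$ put independent unbiased bits, and let $\mathcal D$ be the product of these $|I|+1$ pieces: any $k$ edges meet at most $k$ of the independent blocks and within each block the $\le k$ of them involved are uniform and independent, so $\mathcal D$ is $k$-wise independent with marginals $1/2$. The sampled graph $H$ is then the set of edges receiving bit $1$, each reweighted by $2$ (so that $\E\tilde L=L$).

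Finally I would run the numbers. The events $\{\text{all of }\partial v\text{ deleted}\}_{v\in I}$ are independent, each of probability $\Omega(d^{-\lceil k/2\rceil})$, so with $d=\Theta(q)$,
\[
\Pr_{\mathcal D}\bigl[H\text{ disconnected}\bigr]\ \ge\ 1-\bigl(1-\Omega(d^{-\lceil k/2\rceil})\bigr)^{q^{r-1}}\ \ge\ 1-\exp\!\bigl(-\Omega\bigl(q^{\,r-1-\lceil k/2\rceil}\bigr)\bigr).
\]
For $k=\lfloor 4/(3\alpha)\rfloor$ and $r=\lfloor 1/\alpha\rfloor$ and $\alpha$ below an absolute constant, one checks $r-1-\lceil k/2\rceil\ge 1$, so this probability is $1-\exp(-n^{\Omega(1)})$, which finishes the argument. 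I expect this balancing to be the crux, and the source of the precise constant in the statement: we need a graph that is at once \emph{edge-transitive} (forcing all $R_e$ equal) \emph{and} rich in edge-disjoint minimum cuts --- but for such a graph the minimum cut is an entire vertex-star of size $d$, so the number of usable cuts cannot exceed the independence number $\approx n^{1-1/r}$ --- together with a $k$-wise independent law whose restriction to a $d$-bit block has support meeting the Rao bound, $\Theta(d^{\lceil k/2\rceil})$, rather than the naive $d^{\Theta(k)}$. It is precisely the inequality $n^{1-1/r}\gg d^{\lceil k/2\rceil}$ that forces $r\approx 1/\alpha$ and the range $k\lesssim 2/\alpha$ in which the construction fails, and the stated $k=\lfloor 4/(3\alpha)\rfloor$ sits comfortably inside it.
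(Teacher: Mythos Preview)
Your argument is correct, and it is genuinely different from the paper's. The paper takes an edge-transitive graph of girth $g>k$ (the Lazebnik--Ustimenko--Woldar family, which is $(g,\gamma)$-Moorish with $\gamma=4/3$) and defines $\mathcal D$ by a uniformly random bipartition $V=V_0\uplus V_1$, keeping exactly the edges internal to a part. Any $\le k$ edges in a girth-$(k{+}1)$ graph form a forest, so the probability they all survive is $2\cdot 2^{-(\#\text{vertices})}=2^{-(\#\text{edges})}$; this gives $k$-wise independence with marginals $1/2$ in one line, and the sample is disconnected with probability $1-2^{1-n}$. The constant $4/3$ in $k=\lfloor 4/(3\alpha)\rfloor$ is exactly the $\gamma$ of the Moorish family: to get $d\approx n^{\alpha}$ one needs girth $g\approx \gamma/\alpha$, and the construction is $(g-1)$-wise independent.

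Your route avoids girth entirely. You use Hamming graphs (edge-transitive, hence equal effective resistances, but girth $3$), exploit a large independent set to get many edge-disjoint vertex stars, and plant on each star a near-optimal $k$-wise independent distribution (BCH dual) whose all-zeros atom has mass $\Omega(d^{-\lceil k/2\rceil})$. The product structure makes the isolation events independent, and the arithmetic $r-1-\lceil k/2\rceil\ge 1$ with $r=\lfloor 1/\alpha\rfloor$ forces an isolated vertex with probability $1-\exp(-n^{\Omega(1)})$. This is more machinery than the paper's one-line distribution, and the disconnection probability is a bit weaker, but it buys something real: your construction already works for $k$ up to roughly $2/\alpha$ (your inequality $n^{1-1/r}\gg d^{\lceil k/2\rceil}$ allows any $k\le 2r-4$), whereas the paper's approach is pinned at $k\approx 4/(3\alpha)$ by the best known Moorish graphs and only reaches $k\approx 2/\alpha$ conditionally on $(g,2)$-Moorish families existing. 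So for the stated $k=\lfloor 4/(3\alpha)\rfloor$ your proof has slack, and in fact proves a stronger unconditional lower bound than the paper does.
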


Our family of ``bad graphs'' will be dense graphs having large girth. Namely, 
given a girth $g$ and an integer $d \ge 3$, we consider graphs $G = (V=[n],E)$
satisfying $d \ge n^{\gamma/g}+1$ for some constant $0<\gamma < 2$ \cite{LUW95}. 
Getting an infinite family of graphs with $\gamma$ approaching $2$ (and
specifically attaining the \emph{Moore bound}), even non-explicitly, has been the subject of extensive study (see \cite{exoo2008dynamic} and references therein).
See also \cref{sec:dense} for a further discussion. Given a sparsification parameter $\alpha>0$, we set $k\approx\gamma/\alpha$ and take a graph $G$ on $n$ vertices with girth $g=k+1$ and degree $d>n^{\gamma/g}+1$.

Our construction of the distribution $\mathcal{D}$ is inspired by Alon and Nussboim \cite{AN08}: choose a partition of the vertices $V = V_0 \uplus V_1$ uniformly at random,
and for every edge $e = (u,v) \in E$, include it in the sample if and only if either $u,v \in V_0$ or $u,v \in V_1$.
Clearly, sampling edges according to $\mathcal{D}$ results in a disconnected graph almost surely. However, we show that $\mathcal{D}$ is indeed $k$-wise independent, relying on the fact
that the girth of $G$ is $k + 1$.

To obtain \cref{thm3:informal} we use the family of graphs given by Lazebnik et al.\ \cite{LUW95}
who obtained $\gamma = 4/3$. Indeed, any improvement in $\gamma$ would bring
our upper bound of $k \approx 2/\alpha$ and lower bound of $k \approx \gamma /\alpha$ closer together. 

\subsection{Open Problems}
An interesting open problem is to achieve improved sparsity, e.g. $O(n\cdot(\log n)/\eps^2)$ matching \cite{SS08}. Our algorithm would require space $\Omega(\log^2 n)$ to achieve this sparsity, due to setting $k=\Omega(\log n).$ We remark that previous work implies that this can be done in \emph{randomized} logarithmic space. Indeed, Doron et al. \cite{doron2017probabilistic} gave a randomized algorithm for solving Laplacian systems in logarithmic space (without $\log\log(\cdot)$ factors), and this implies that one can approximate effective resistances and hence implement the Spielman-Srivastava edge sampling with full independence in randomized logspace.  It is also an interesting question whether there is a nearly logspace algorithm (even randomized) that produces spectral sparsifiers of optimal sparsity (i.e., $O(n/\eps^2)$ edges). 

Finally, there has been recent progress on sparsifying Eulerian digraphs in the nearly-linear time literature \cite{cohen2016faster, cohen2017almost, cohen2018solving, chu2018graph}. Given the recent advance of a nearly-logarithmic space solver for Eulerian Laplacian systems \cite{ahmadinejad2019high}, an interesting question is sparsifying Eulerian graphs in small space.

\section{Preliminaries}

We will work with undirected weighted graphs, $G = (V,E,w)$, where $w$ is a vector of length $|E|$
and each edge $(a,b)\in E$ is associated with a positive weight $w_{ab}>0$.
At times we refer to undirected \emph{multigraphs}, which are weighted graphs where all of the weights are integers. 
The adjacency matrix of $G$ is a symmetric, real-valued matrix $A$ in which
$A_{ij}= w_{ij}$ if $(i,j) \in E$ and $A_{ij} = 0$ otherwise. 

For any matrix $A$, its
\emph{spectral norm} $\norm{A}$ is $\max_{\norm{x}=1}\norm{Ax}_2$, which is also
the largest singular value of $A$. 
For any square matrix $A$, its  \emph{spectral radius}, denoted $\lm(A)$, is the largest absolute value of its eigenvalues. 
When $A$ is real and symmetric,
the spectral norm equals the spectral radius. The spectral norm is sub-multiplicative,
i.e., $\norm{AB} \le \norm{A}\norm{B}$. We denote by $A^{\top}$ the transpose of $A$.
We denote by $\onesvec$ the all-ones
vector, by $\zerosvec$ the all-zeros vector, and 
$e_a$ is the vector with $1$ in the $a$-th coordinate
and $0$ elsewhere, where $e_a$'s dimension will be understood from context (i.e., $e_a$ is the $a$-th standard basis vector). 

The \emph{trace} of a matrix $A\in \mathbb{R}^{n\times n}$, is $\Tr(A)=\sum_{i\in[n]}A_{ii}$,
which also equals the sum of its eigenvalues. The trace is invariant under cyclic permutations, i.e., $\Tr(AB) = \Tr(BA)$. 
The expectation of a \emph{random matrix} is the matrix of the coordinate-wise expectations. More formally, if $A$ is a random matrix, then $\E[A]=\hat{A}$ where $\hat{A}_{ij}=\E[A_{ij}]$ for all $i,j\in[n]$. The trace and the expectation are both linear functions of a matrix and they commute. That is, for all random matrices $A$, we have $\Tr(\E[A])=\E[\Tr(A)]$ (see, e.g.,\cite{qiu2014cognitive}).

\subsection{PSD Matrices and Spectral Approximation}
A symmetric matrix $A\in\mathbb{R}^{n\times n}$ is \emph{positive semi-definite} (PSD), denoted $A \succeq 0$,  if
for every $x \in \mathbb{R}^{n}$ it holds that $x^{\top}Ax \ge 0$,
or equivalently, if all its eigenvalues are non-negative.  We write
$A \succeq B$ if $A-B \succeq 0$.

\begin{definition}\label{def:approx}
Let $A$ and $B$ be $n \times n$ symmetric PSD matrices. For 
a real $\eps > 0$, we say
that $A$ is an $\eps$-\emph{spectral approximation} of $B$, denoted $A \approx_{\eps} B$, if
$$
(1-\eps)B \preceq A \preceq (1+\eps)B.
$$
\end{definition}
When $A$ and $B$ share an eigenvector basis $v_1,\ldots,v_n$, \cref{def:approx} is
equivalent to requiring $(1-\eps)\mu_i \le \lambda_i \le (1+\eps)\mu_i$,
where $\lambda_{1},\ldots,\lambda_n$ are the
eigenvalues of $A$ corresponding to  $v_1,\ldots,v_n$ and $\mu_1,\ldots,\mu_n$ are
the eigenvalues of  $B$ corresponding to  $v_1,\ldots,v_n$.
\subsection{The Moore-Penrose Pseudoinverse}
\label{sec:moore_pen}

Let $A$ be any linear operator. The \emph{Moore-Penrose pseudoinverse} of
$A$, denoted $A^{+}$, is the unique matrix that satisfies the following:
\begin{enumerate}
\item $AA^{+}A = A$,
\item $A^{+}AA^{+} = A^{+}$, and,
\item both $AA^{+}$ and $A^{+}A$ are Hermitian.
\end{enumerate}
If $A = U \Sigma V^{\top}$ is the singular value decomposition (SVD)
of $A$, the pseudoinverse is given by $A^{+} = V\Sigma^{+}U^{\top}$
where $\Sigma^{+}$ is the matrix obtained by taking the reciprocal of each nonzero
diagonal element of $\Sigma$, and leaving the zeros intact. 
When $A$ is a symmetric PSD matrix, the SVD coincides with the eigen-decomposition
and so if $\lambda_{1},\ldots,\lambda_{n}$ are the eigenvalues of $A$
then $A^{+}$ shares the same eigenvector basis and has eigenvalues
$\lambda_{1}^{+},\ldots,\lambda_{n}^{+}$, where 
$$ \lambda_{i}^{+} = 
\begin{cases}
1/\lambda_i & \mathrm{if}~\lambda_i \neq 0, \\
0 & \mathrm{if}~\lambda_i = 0.
\end{cases}
$$ 
Also note that if $A$ is real then $A^{+}$
is real-valued as well.

A \emph{square root} of a matrix $A$ is any matrix $X$ that
satisfies $X^{2} = A$. When $A$ is symmetric and PSD, it has a unique symmetric PSD square root, which we write as $A^{1/2}$. If $A = U\Sigma U^{\top}$ is the eigen-decomposition of
$A$ then $A^{1/2} = U \sqrt{\Sigma} U^{\top}$ where
$\sqrt{\Sigma}$ is obtained by taking the square root
of each diagonal element of $\Sigma$. We denote by $A^{+/2}$ the
matrix
$( A^{+} )^{1/2} = ( A^{1/2} )^{+}$.

\subsection{The Graph Laplacian and Effective Resistance} \label{sec:intro-laplacian}

Given a graph $G$ on $n$ vertices with an adjacency matrix $A$ and degree matrix $D$ (i.e.,
$D$ is a diagonal matrix where $D_{ii} = \sum_{j=1}^{n}A_{ij}$ equals the weighted degree of vertex $i$ in $G$), the $\emph{Laplacian}$ of $G$ is the matrix
$$
L = D-A.
$$
For every undirected weighted graph $G = (V,E,w)$, its Laplacian $L$ is symmetric and PSD, with
smallest eigenvalue $0$. The zero eigenvalue has multiplicity one
if and only if $G$ is connected. In this case,
$\ker(L_G) = \Span(\set{\onesvec})$.
For every edge $(a,b) \in E$, define the \emph{edge Laplacian} of $(a,b)$ to be
$$
L_{ab} = (e_{a}-e_{b})(e_{a}-e_{b})^{\top} = (e_{b}-e_{a})(e_{b}-e_{a})^{\top}.
$$
Note that $L = \sum_{(a,b) \in E} w_{ab} \cdot L_{ab}$.

It is often helpful to associate $G$ with an electric circuit, where
an edge $(a,b) \in E$ corresponds to a resistor of resistance $1/w_{ab}$. 
For each pair of vertices $a$ and $b$, the \emph{effective resistance}
between them, denoted by $R_{ab}$, is the 
energy of the electrical flow that sends one unit of current from $a$ to $b$.
The effective resistance can be calculated using the pseudoinverse
of the Laplacian:
$$
R_{ab} = (e_{a}-e_{b})^{\top}L^{+}(e_{a}-e_{b}).
$$ 
(See \cite{Bol13} for more information on Laplacians and viewing graphs as electrical networks). A useful fact about effective resistances is Foster's Theorem:
\begin{theorem}[\cite{foster1949average}]\label{claim:foster}
For every undirected weighted graph $G = (V,E, w)$ on $n$ vertices it holds that
$$
\sum_{(a,b) \in E}w_{ab}\cdot R_{ab} = n-1.
$$
\end{theorem}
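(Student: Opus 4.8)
The plan is to collapse the sum into the trace of an orthogonal projection. First I would rewrite the effective resistance of a single edge using the cyclic invariance of the trace: since $R_{ab}$ is a scalar,
$$
R_{ab} = (e_a-e_b)^{\top}L^{+}(e_a-e_b) = \Tr\bigl((e_a-e_b)^{\top}L^{+}(e_a-e_b)\bigr) = \Tr\bigl(L^{+}(e_a-e_b)(e_a-e_b)^{\top}\bigr) = \Tr(L^{+}L_{ab}),
$$
where $L_{ab} = (e_a-e_b)(e_a-e_b)^{\top}$ is the edge Laplacian. Then, using linearity of the trace together with the identity $L = \sum_{(a,b)\in E} w_{ab}\cdot L_{ab}$ recorded in \cref{sec:intro-laplacian},
$$
\sum_{(a,b)\in E} w_{ab}\cdot R_{ab} = \sum_{(a,b)\in E} w_{ab}\cdot \Tr(L^{+}L_{ab}) = \Tr\Bigl(L^{+}\sum_{(a,b)\in E} w_{ab}\cdot L_{ab}\Bigr) = \Tr(L^{+}L).
$$

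Next I would identify $L^{+}L$ with $\Pi$, the orthogonal projection onto $\Image(L)$. From the defining properties of the Moore--Penrose pseudoinverse, $L^{+}L$ is Hermitian, and $(L^{+}L)^2 = L^{+}(LL^{+}L) = L^{+}L$, so $L^{+}L$ is an orthogonal projection. Since $L$ is real symmetric, $L^{+}$ shares its eigenvector basis (with eigenvalues inverted on the nonzero part), so $\Image(L^{+}) = \Image(L) = (\ker L)^{\perp}$, and hence $L^{+}L$ is precisely the orthogonal projection onto $(\ker L)^{\perp}$. Because the trace of an orthogonal projection equals the dimension of its range, we get $\Tr(L^{+}L) = \operatorname{rank}(L)$. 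Finally, as noted in \cref{sec:intro-laplacian}, connectivity of $G$ implies $\ker(L) = \Span(\set{\onesvec})$ is one-dimensional, so $\operatorname{rank}(L) = n-1$, which completes the proof. (For a graph with $c$ connected components one obtains $n-c$ in the same way; the stated form implicitly uses connectivity, consistent with the hypotheses elsewhere in the paper.)

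I do not expect a genuine obstacle here: the argument is a short manipulation built entirely from facts already assembled in \cref{sec:moore_pen} and \cref{sec:intro-laplacian}. The one step deserving a sentence of care is the identification $\Tr(L^{+}L) = \operatorname{rank}(L)$ — i.e., checking via the pseudoinverse axioms and the symmetry of $L$ that $L^{+}L$ really is the orthogonal projection onto $(\ker L)^{\perp}$, rather than some oblique projection — but this is routine.
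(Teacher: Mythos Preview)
The paper does not prove this statement; it is quoted as a classical fact from \cite{foster1949average} without proof. Your argument is correct and is the standard modern proof. One small simplification available within the paper: the identification of $L^{+}L$ with the projection $\Pi$ onto $\Image(L)$ is exactly \cref{lem:proj_matrix}, so you could cite that directly instead of re-deriving it from the pseudoinverse axioms. Your parenthetical remark is also apt: as stated, the identity requires $G$ to be connected (otherwise the right-hand side is $n-c$ for $c$ components), and the paper is indeed working under a standing connectivity assumption.
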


\subsection{Bounded-Independence Sampling} \label{sec:limited-independence}

Given a probability vector $p \in [0,1]^m$, let $\Ber(p)$ denote the distribution $X$
over $\B^m$ where the bits are independent and for each $i \in [m]$, $\E[X_i] = p_i$.
For a set $I  \subseteq [m]$ and a string $z \in \B^{m}$, we let $z|_{I} \in \B^{|I|}$
be the restriction of $z$ to the indices in $I$.

\begin{definition}
We say a distribution $X \sim \B^{m}$ is $k$-wise independent with marginals
$p \in [0,1]^m$ if for every set $I \subseteq [m]$ with $|I| \le k$, it holds that 
$X|_{I} = \Ber(p|_{I})$.
We refer to $X$ as a $k$-wise independent sample space with marginals
$p$.
\end{definition}

Consider $G = (V,E,w)$ with $|E| = m$.
Throughout, when we say \emph{sampling edges in a $k$-wise independent
manner}, we refer to the process of picking an element $x \in \B^{m}$ from a 
$k$-wise independent sample space uniformly at random and taking those edges $e \in E$ for which $x_e = 1$.

For $p \in [0,1]^m$ and a positive integer $t$, we define $\lfloor p \rfloor_t$
to be the vector $p'$ obtained by truncating every element of $p$ after
$t$ bits. Thus, for each $i \in [m]$, $p'_{i} = 2^{-t}\lfloor 2^{t}p_{i} \rfloor$, and so
$|p_i - p'_i| \le 2^{-t}$. 
The following lemma states that we can construct small $k$-wise independent 
sample spaces with any specified marginals.
\begin{lemma}[following \cite{J74,alon86}]
\label{lem:sampling}
For every $m,k,t \in \mathbb{N}$ and $p \in [0,1]^m$ there exists an explicit
$k$-wise independent distribution $X \sim \B^{m}$ with marginals $\lfloor p \rfloor_t$, that
can be sampled with $r=O(k \cdot \max\set{t,\log m})$ truly random bits. 
Furthermore, given $\rho\in\{0,1\}^r$, the element $x\in\mathrm{Supp}(X)$ corresponding to the random bits $\rho$ can be computed in $O(k \cdot \max\set{t,\log m})$ space. 
\end{lemma}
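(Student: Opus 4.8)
The plan is to instantiate the standard Reed--Solomon-type construction of $k$-wise independent sample spaces (following \cite{J74,alon86}) over a field of characteristic two, and then to read off the prescribed truncated marginals by a coordinate-wise thresholding step. Concretely: set $\ell = \max\{\lceil\log_2 m\rceil, t\}$ and $q = 2^{\ell}$ (so $\ell = O(\max\{t,\log m\})$ and $q \ge \max\{m,2^t\}$), fix a representation of $\F_q$ as $\F_2[z]/(g)$ for an irreducible $g\in\F_2[z]$ of degree $\ell$, identify $\F_q$ with $\B^{\ell}$, and fix an injection $e\mapsto\zeta_e$ of the index set $[m]$ into $\F_q$ (possible since $q\ge m$). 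Interpret the $r = k\ell$ random bits $\rho$ as coefficients $a_0,\dots,a_{k-1}\in\F_q$ of a polynomial $f(z)=\sum_{j<k}a_jz^j$, set $Y_e = f(\zeta_e)\in\B^{\ell}$, let $Z_e\in\{0,\dots,2^t-1\}$ be the integer given by the top $t$ bits of $Y_e$, and output $x_e = 1$ iff $Z_e < \lfloor 2^t p_e\rfloor$. Note $r = k\ell = O(k\max\{t,\log m\})$, matching the claimed randomness bound.

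To verify correctness I would check two things. First, the marginals: since $k\ge 1$, the constant term $a_0$ is uniform and independent of $a_1,\dots,a_{k-1}$, so $f(\zeta_e)$ is uniform over $\F_q$, hence $Z_e$ is uniform over $\{0,\dots,2^t-1\}$ and $\Pr[x_e=1] = \lfloor 2^t p_e\rfloor/2^t = (\lfloor p\rfloor_t)_e$. Second, $k$-wise independence: for any index set $I$ with $|I|\le k$, extend it to a set of exactly $k$ distinct indices; invertibility of the Vandermonde matrix at the distinct points $\zeta_e$ (equivalently, Lagrange interpolation) shows that $(f(\zeta_e))_{e\in I}$ is uniform over $\F_q^{k}$, so the variables $Y_e$, $e\in I$, are mutually independent and uniform. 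Since each $x_e$ is a deterministic function of $Y_e$ alone, the thresholding cannot introduce correlations, and $X|_I = \Ber((\lfloor p\rfloor_t)|_I)$ follows. The only mild subtleties here are choosing $\ell\ge t$ so that marginals at resolution $2^{-t}$ are realizable from a uniform field element, and the observation that per-coordinate post-processing preserves bounded independence; working in characteristic two (rather than over $\Z_p$ as in \cite{J74}) is precisely what makes the truncated marginals $\lfloor p\rfloor_t$ land on dyadic rationals.

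For the space bound I would compute $x_e$ from $(\rho,e)$ by evaluating $f(\zeta_e)$ with Horner's rule, keeping a single running element of $\F_q$ and re-reading each $a_j$ in place from $\rho$; addition in $\F_{2^\ell}$ is bitwise XOR and multiplication is schoolbook polynomial multiplication followed by reduction modulo $g$, both using $O(\ell)$ working space, so together with an $O(\ell)$-bit loop counter the total is $O(k\ell) = O(k\max\{t,\log m\})$ as required. The one point that genuinely requires attention — and the step I expect to be the main (albeit minor) obstacle — is producing the representation of $\F_{2^\ell}$, i.e.\ an explicit irreducible degree-$\ell$ polynomial $g$ over $\F_2$: this can be obtained by a brute-force search over the monic degree-$\ell$ candidates combined with a standard irreducibility test (repeated squaring to check $z^{2^\ell}\equiv z$ and a few $\gcd$ computations), all of which run in $O(\ell)$ space and hence fit within the stated budget, or one may simply invoke a known explicit family of such polynomials. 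Once $g$ is fixed everything above is deterministic and computable in the claimed space.
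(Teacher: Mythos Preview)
The paper does not actually prove this lemma; it is stated as a known consequence of \cite{J74,alon86} and used as a black box. Your proposal supplies a complete and correct proof via the standard polynomial-evaluation construction over $\F_{2^\ell}$ followed by per-coordinate thresholding, which is precisely what those references amount to in this setting.

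Two minor remarks on the write-up. First, the step ``extend $I$ to a set of exactly $k$ distinct indices'' is unnecessary and is awkward when $m<k$ (or even when $q<k$); it is cleaner to argue directly that for any $|I|\le k$ distinct points the linear map $(a_0,\dots,a_{k-1})\mapsto (f(\zeta_e))_{e\in I}$ is surjective (its matrix contains a full-rank Vandermonde block), so the evaluations are jointly uniform over $\F_q^{|I|}$. Second, your space accounting is in fact better than you claim: Horner's rule with on-the-fly access to the coefficients uses only $O(\ell)$ workspace, not $O(k\ell)$; the $O(k\ell)$ bound you state still matches the lemma and is in any case what is needed in \cref{sec:algorithm}, where the seed $\rho$ is enumerated on the work tape.
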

\subsection{Auxiliary Claims}

We will need the following claims, whose proofs we will defer to \cref{proofs-appendix}.
\begin{claim}
\label{lem:norm_of_sum}
Let $A,B, C$ be $n \times n$ symmetric PSD matrices and suppose that $B\preceq C$. Then
\[
\|A+B\|\leq \|A+C\|.
\]
\end{claim}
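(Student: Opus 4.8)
The plan is to use the variational (Rayleigh-quotient) characterization of the spectral norm of a symmetric matrix, together with the PSD hypotheses. First I would observe that since $A$, $B$, and $C$ are symmetric PSD, the sums $A+B$ and $A+C$ are again symmetric PSD. For any symmetric matrix $M$ one has $\norm{M} = \max(\lambda_{\max}(M), -\lambda_{\min}(M))$, but when $M \succeq 0$ all eigenvalues are nonnegative, so this collapses to $\norm{M} = \lambda_{\max}(M) = \max_{\norm{x}=1} x^\top M x$. Thus
\[
\norm{A+B} = \max_{\norm{x}=1} x^\top (A+B) x, \qquad \norm{A+C} = \max_{\norm{x}=1} x^\top (A+C) x.
\]

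Next I would invoke the hypothesis $B \preceq C$, which by definition says $x^\top B x \le x^\top C x$ for every $x \in \mathbb{R}^n$. Adding $x^\top A x$ to both sides gives $x^\top (A+B) x \le x^\top (A+C) x$ for every $x$, and in particular for every unit vector $x$. Taking the maximum over unit vectors on both sides and using the displayed identities yields $\norm{A+B} \le \norm{A+C}$, as desired.

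The argument is essentially a one-liner once the variational formula is in place, so there is no real obstacle; the only point worth flagging is that the PSD assumption on $A$ (and on $B$, $C$) is genuinely needed, not cosmetic. Without it, $A+B$ could have a large-magnitude \emph{negative} eigenvalue, and monotonicity of $\lambda_{\max}$ under $B \preceq C$ says nothing about $\lambda_{\min}$; the conclusion $\norm{A+B} \le \norm{A+C}$ can then fail. So the write-up should make explicit the reduction $\norm{M} = \lambda_{\max}(M)$ for $M \succeq 0$ and the closure of the PSD cone under addition, after which the monotonicity of the Rayleigh quotient finishes the proof.
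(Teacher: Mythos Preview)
Your proposal is correct and is essentially the same argument as the paper's: both use the Rayleigh-quotient identity $\norm{M}=\max_{\norm{x}=1}x^{\top}Mx$ for $M\succeq 0$ and then compare the quadratic forms pointwise using $B\preceq C$. If anything, your write-up is more careful than the paper's, which asserts that identity ``for every symmetric matrix $M$'' rather than only for PSD $M$; your explicit remark that the PSD hypothesis is what reduces $\norm{M}$ to $\lambda_{\max}(M)$ is a worthwhile clarification.
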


\begin{claim}
\label{lem:norm_is_er}
Let $G=(V,E,w)$ be an undirected weighted graph on $n$ vertices with Laplacian $L$. Fix $(a,b)\in E$ and recall that $L_{ab}=(e_a-e_b)(e_a-e_b)^{\top}$. Then,
\[
\left\|L^{+/2}L_{ab}L^{+/2}\right\| = R_{ab}.
\]
\end{claim}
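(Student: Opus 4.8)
The plan is to exploit the fact that $L_{ab}$ is rank one, so that $L^{+/2}L_{ab}L^{+/2}$ is also rank one, and for a rank-one symmetric PSD matrix the spectral norm is simply its unique nonzero eigenvalue, which we can read off directly.

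First I would write $L_{ab} = (e_a-e_b)(e_a-e_b)^\top$ and use that $L^{+/2}$ is symmetric (it is the symmetric PSD square root of the symmetric PSD matrix $L^+$) to factor
\[
L^{+/2}L_{ab}L^{+/2} = \bigl(L^{+/2}(e_a-e_b)\bigr)\bigl(L^{+/2}(e_a-e_b)\bigr)^\top = v v^\top,
\qquad v := L^{+/2}(e_a-e_b).
\]
Next, I would observe that $vv^\top$ is symmetric and PSD of rank at most one, so its spectral norm equals its spectral radius equals its unique possibly-nonzero eigenvalue, which is $v^\top v = \norm{v}^2$ (indeed $vv^\top v = \norm{v}^2 v$, exhibiting $v$ as an eigenvector with eigenvalue $\norm{v}^2$, and all other eigenvalues are $0$).

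Finally I would compute $\norm{v}^2 = (e_a-e_b)^\top L^{+/2} L^{+/2} (e_a-e_b)$ and invoke the identity $L^{+/2}L^{+/2} = (L^{+/2})^2 = L^+$, which is immediate from the definition $A^{+/2} = (A^+)^{1/2}$ given in \cref{sec:moore_pen}. This yields $\norm{v}^2 = (e_a-e_b)^\top L^+ (e_a-e_b) = R_{ab}$ by the definition of effective resistance, completing the proof.

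There is no real obstacle here; the only point requiring a little care is justifying that $\norm{vv^\top} = \norm{v}^2$ and that $L^{+/2}$ is symmetric so that the factorization above is valid — both follow directly from the preliminaries on PSD matrices and pseudoinverses.
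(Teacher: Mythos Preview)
Your proposal is correct and follows essentially the same approach as the paper: both use that $L^{+/2}L_{ab}L^{+/2}$ is a rank-one symmetric PSD matrix, so its spectral norm equals its single nonzero eigenvalue, and then identify that eigenvalue with $(e_a-e_b)^\top L^+(e_a-e_b)=R_{ab}$. The only cosmetic difference is that the paper phrases this via $\norm{\cdot}=\Tr(\cdot)$ for rank-one PSD matrices together with cyclic invariance of the trace, whereas you factor explicitly as $vv^\top$ and read off $\norm{vv^\top}=\norm{v}^2$; these are equivalent computations.
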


\begin{claim}
\label{lem:proj_matrix}
Let $G = (V,E,w)$ be an undirected connected weighted graph on $n$ vertices with Laplacian $L$. Let $J$ be the $n\times n$ matrix with $1/n$ in every entry and define $\Pi=I-J$ (i.e. $\Pi$ is the projection onto $\Span(\onesvec)^{\perp}=\Image(L)$). Then, we have that
\[
\Pi=LL^+=L^+L=L^{+/2}LL^{+/2}.
\]
\end{claim}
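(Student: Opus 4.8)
The plan is to diagonalize $L$ and read everything off the eigenvalues. Since $G$ is connected, $L$ is symmetric PSD with $\ker(L) = \Span(\onesvec)$, a one-dimensional subspace (as recalled in \cref{sec:intro-laplacian}). Write the eigendecomposition $L = \sum_{i=1}^{n}\lambda_i v_i v_i^{\top}$ with an orthonormal eigenbasis $v_1,\ldots,v_n$, ordered so that $\lambda_1 = 0$ with $v_1 = \onesvec/\sqrt{n}$, and $\lambda_i > 0$ for all $i \ge 2$.

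First I would invoke the explicit formulas for the pseudoinverse and the square root from \cref{sec:moore_pen}: $L^{+} = \sum_{i \ge 2}\lambda_i^{-1} v_i v_i^{\top}$, $L^{1/2} = \sum_{i \ge 2}\sqrt{\lambda_i}\, v_i v_i^{\top}$, and hence $L^{+/2} = \sum_{i \ge 2}\lambda_i^{-1/2} v_i v_i^{\top}$. Multiplying out, using orthonormality $v_i^{\top} v_j = \mathbbm{1}[i=j]$, each of the three products $LL^{+}$, $L^{+}L$, and $L^{+/2}LL^{+/2}$ collapses to $\sum_{i \ge 2} v_i v_i^{\top}$. Next I would use the completeness relation $\sum_{i=1}^{n} v_i v_i^{\top} = I$ to conclude
\[
\sum_{i \ge 2} v_i v_i^{\top} = I - v_1 v_1^{\top} = I - \tfrac{1}{n}\onesvec\onesvec^{\top} = I - J = \Pi,
\]
which is exactly the claimed identity.

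Finally, to justify the parenthetical description of $\Pi$, I would note that $\Pi$ is symmetric and idempotent (immediate from $\Pi = \sum_{i\ge 2} v_iv_i^{\top}$ with the $v_i$ orthonormal), hence an orthogonal projection, and its image is $\Span(\set{v_2,\ldots,v_n}) = \Span(\onesvec)^{\perp}$; since $L$ is symmetric, $\Image(L) = \ker(L)^{\perp} = \Span(\onesvec)^{\perp}$, so $\Pi$ is the orthogonal projection onto $\Image(L)$.

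There is no real obstacle here: the argument is a routine computation once the eigendecomposition is in place. The only points requiring a little care are correctly isolating the kernel direction (this is where connectivity of $G$ is used, to guarantee the kernel is exactly $\Span(\onesvec)$), and checking that the rank-one matrix $J$ is precisely $v_1 v_1^{\top}$ for the normalized all-ones vector.
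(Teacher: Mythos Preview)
Your proposal is correct and follows essentially the same approach as the paper: both arguments diagonalize $L$ via an orthonormal eigenbasis with $v_1 = \onesvec/\sqrt{n}$ and verify the identities on that basis, the only cosmetic difference being that you work with the outer-product form $\sum_i \lambda_i v_i v_i^{\top}$ while the paper checks $LL^{+}v_i = v_i$ eigenvector by eigenvector.
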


\begin{claim}
\label{lem:mult_both_sides}
Let $A,B, C$ be symmetric $n\times n$ matrices and suppose $A$ and $B$ are PSD. Then the following hold 
\begin{enumerate}
\item $A\approx_{\eps} B\implies C^{\top}AC\approx_{\eps}C^{\top}BC$
\item If $\ker(C)\subseteq\ker(A)=\ker(B)$ then $A\approx_{\eps} B\iff C^{\top}AC\approx_{\eps}C^{\top}BC$
\end{enumerate}
\end{claim}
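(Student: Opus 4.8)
For part (1), the plan is to unwind \cref{def:approx}. The hypothesis $A \approx_\eps B$ means $(1-\eps)B \preceq A \preceq (1+\eps)B$, i.e. both $A - (1-\eps)B \succeq 0$ and $(1+\eps)B - A \succeq 0$. The basic fact I need is congruence-invariance of positive semidefiniteness: if $M \succeq 0$ then $C^\top M C \succeq 0$, which is immediate since $x^\top (C^\top M C) x = (Cx)^\top M (Cx) \ge 0$ for every $x$. Applying this to $M = A - (1-\eps)B$ and to $M = (1+\eps)B - A$, and using linearity of the map $M \mapsto C^\top M C$, gives $C^\top A C - (1-\eps)C^\top B C \succeq 0$ and $(1+\eps) C^\top B C - C^\top A C \succeq 0$, which is exactly $C^\top A C \approx_\eps C^\top B C$. (Note $C^\top A C$ and $C^\top B C$ are symmetric PSD, so \cref{def:approx} applies.) This direction is entirely routine.

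For part (2), the forward implication is just part (1), so the work is the converse: assuming $\ker(C) \subseteq \ker(A) = \ker(B)$ and $C^\top A C \approx_\eps C^\top B C$, deduce $A \approx_\eps B$. The natural approach is to show that the two PSD inequalities transfer back. Fix $x \in \R^n$ and consider $x^\top A x$ versus $x^\top B x$; I want to exhibit some $y$ with $x^\top A x = (Cy)^\top A (Cy)$ and $x^\top B x = (Cy)^\top B (Cy)$, since then the inequalities for $C^\top A C, C^\top B C$ applied to $y$ give what I want. The key observation is that $A$ (equivalently $B$) only "sees" $x$ through its projection onto $\ker(A)^\perp = \Image(A)$: writing $x = x_0 + x_1$ with $x_0 \in \ker(A)$ and $x_1 \in \ker(A)^\perp$, we have $x^\top A x = x_1^\top A x_1$ and similarly $x^\top B x = x_1^\top B x_1$ since $\ker(A) = \ker(B)$. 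So it suffices to hit every $x_1 \in \Image(A)$ in the form $C y$. This is where I use $\ker(C) \subseteq \ker(A)$: it implies $\Image(C^\top) = \ker(C)^\perp \supseteq \ker(A)^\perp = \Image(A)$. But I need $x_1 \in \Image(C)$, not $\Image(C^\top)$ — so the cleanest route is to observe that $\ker(A) = \ker(B)$ forces the relevant subspace on which $A$ and $B$ act nontrivially to be $\Image(A) = \Image(B)$, and since $A$ is symmetric, $\Image(A) = \ker(A)^\perp$. Then given $x_1 \in \ker(A)^\perp$, I want $y$ with $C y$ agreeing with $x_1$ as far as $A$ and $B$ can tell, i.e. $C y - x_1 \in \ker(A)$. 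Such a $y$ exists iff $x_1 \in \Image(C) + \ker(A)$, and since $\ker(C) \subseteq \ker(A)$ implies (after a short linear-algebra argument, e.g. dimension counting or duality) that $\Image(C) + \ker(A) = \R^n$, we are done: pick any such $y$, then $(Cy)^\top A (Cy) = x_1^\top A x_1 = x^\top A x$ using $Cy - x_1 \in \ker(A) = \ker(A^\top)$ (symmetry), and likewise for $B$. Feeding $y$ into the hypothesis $C^\top A C \approx_\eps C^\top B C$ then yields both inequalities between $x^\top A x$ and $x^\top B x$ for this arbitrary $x$, proving $A \approx_\eps B$.

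The one step that needs a little care is the claim $\Image(C) + \ker(A) = \R^n$ under $\ker(C) \subseteq \ker(A)$; a clean way is: the orthogonal complement of $\Image(C) + \ker(A)$ is $\ker(C^\top) \cap \Image(A) = \ker(C^\top) \cap \ker(A)^\perp$, and I'd argue this is $\{0\}$ — but this actually requires relating $\ker(C^\top)$ to $\ker(C)$, which need not hold for non-square $C$. So I expect the \textbf{main obstacle} to be handling the $\ker(C) \subseteq \ker(A)$ hypothesis correctly for a possibly-rectangular $C$; the safe fix is to work directly with vectors $x$ in $\Image(A)$ and construct the preimage $y$ by hand via $C^+$ (or by solving $Cy = x$ on the subspace where it is solvable), rather than through an abstract $\Image(C) + \ker(A)$ identity. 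Alternatively, since in all intended applications $C$ is square and symmetric (e.g. $C = L^{+/2}$), I could note that restricting to that case makes $\ker(C^\top) = \ker(C)$ and the argument goes through cleanly; but I'd prefer to give the general argument, taking the extra paragraph to pin down the linear algebra.
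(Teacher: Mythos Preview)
Your plan is correct and matches the paper's proof almost line for line: for part (1), pass to $y = Cx$; for part (2), reduce to $x \in \Image(A)$ and pull back through $C$ using $y = C^{+}x$, which is exactly the ``safe fix'' you propose and exactly what the paper does.

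Your ``main obstacle'' is not a real obstacle: the claim's hypothesis already states that $C$ is a \emph{symmetric} $n\times n$ matrix, so $\ker(C^{\top}) = \ker(C)$ and hence $\Image(C) = \ker(C)^{\perp} \supseteq \ker(A)^{\perp} = \Image(A)$. Thus every $x_{1} \in \Image(A)$ already lies in $\Image(C)$, and $y = C^{+}x_{1}$ gives $Cy = CC^{+}x_{1} = x_{1}$ on the nose (since $CC^{+}$ is the orthogonal projection onto $\Image(C)$). No $\Image(C)+\ker(A) = \R^{n}$ argument or special handling is needed; you simply overlooked the symmetry assumption on $C$.
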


The proof of the following claim can be found in \cite{ST04}.
\begin{claim}[\cite{ST04}]
\label{claim:norm_of_pinv_stversion}
Let $G$ be an undirected, weighted graph on $n$ vertices with Laplacian $L$ and minimum weight $w_{\mathrm{min}}$. Then, the smallest nonzero eigenvalue of $L$ is at least $\min\set{\frac{8w_{\mathrm{min}}}{n^2},\frac{w_{\mathrm{min}}}{n}}$.
\end{claim}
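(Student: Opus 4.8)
The plan is to reduce to a connected graph, then to a spanning tree, and finally estimate the second‑smallest Laplacian eigenvalue of a tree via a Rayleigh‑quotient argument. First, $L$ is block diagonal over the connected components of $G$, and both target bounds $8\wmin/n^2$ and $\wmin/n$ are decreasing in $n$, so it suffices to prove the claim for each component separately; hence assume $G$ is connected on $n$ vertices. Writing $L=\sum_{(a,b)\in E}\wmax\cdot L_{ab}$... more precisely $L=\sum_{(a,b)\in E}w_{ab}L_{ab}$, the Courant–Fischer identity gives
\[
\lambda_2(L)=\min_{\substack{v\neq\zerosvec\\ v\perp\onesvec}}\frac{v^{\top}Lv}{v^{\top}v}=\min_{\substack{v\neq\zerosvec\\ v\perp\onesvec}}\frac{\sum_{(a,b)\in E}w_{ab}(v_a-v_b)^2}{\sum_a v_a^2},
\]
where $\lambda_2(L)$ is the smallest nonzero eigenvalue. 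Fix a spanning tree $T$ of $G$: deleting the non‑tree edges and replacing each remaining weight by $\wmin$ only shrinks the numerator, so $L\succeq\wmin\cdot L_T$ where $L_T$ is the unweighted Laplacian of $T$; since $T$ is connected, Weyl's inequality yields $\lambda_2(L)\ge\wmin\cdot\lambda_2(L_T)$. It therefore remains to show $\lambda_2(L_T)\ge\min\{8/n^2,\,1/n\}$ for every tree $T$ on at most $n$ vertices.

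For the tree bound I would proceed along two complementary lines. The \emph{soft} line: fix $v\perp\onesvec$; for vertices $s,t$ let $P(s,t)$ be the unique tree path and $d(s,t)\le n-1$ its length, so Cauchy–Schwarz gives $(v_s-v_t)^2\le d(s,t)\sum_{(i,j)\in P(s,t)}(v_i-v_j)^2$. Using $\sum_a v_a=0$ one has the identity $\sum_a v_a^2=\tfrac1{2n}\sum_{s,t}(v_s-v_t)^2$, and swapping summation order, $\sum_{s,t}(v_s-v_t)^2\le\bigl(\max_{(i,j)\in T}\sum_{(s,t):(i,j)\in P(s,t)}d(s,t)\bigr)\cdot v^{\top}L_T v$. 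For a tree edge whose removal leaves parts of sizes $\kappa$ and $n-\kappa$, the crossing ordered pairs number $2\kappa(n-\kappa)\le n^2/2$, each with $d(s,t)\le n-1$, so the bracketed factor is at most $n^2(n-1)/2$; this gives $\lambda_2(L_T)\ge 4/(n(n-1))$, which already exceeds $1/n$ for $n\le 5$. The \emph{sharp} line, needed for larger $n$: invoke Fiedler's extremal characterization that the $n$-vertex path $P_n$ minimizes the second‑smallest Laplacian eigenvalue among connected $n$-vertex graphs, so $\lambda_2(L_T)\ge\lambda_2(L_{P_n})$; the eigenvectors of $L_{P_n}$ are $v^{(\ell)}_a=\cos\!\bigl(\tfrac{\pi\ell(a-1/2)}{n}\bigr)$ with eigenvalues $2(1-\cos(\pi\ell/n))$, whence $\lambda_2(L_{P_n})=2(1-\cos(\pi/n))$, and a Taylor/convexity estimate such as $1-\cos x\ge x^2/2-x^4/24$ shows $2(1-\cos(\pi/n))\ge\min\{8/n^2,1/n\}$ for all $n$ (checking the few small $n$ by hand). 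Assembling the pieces gives $\lambda_2(L)\ge\wmin\cdot\min\{8/n^2,\,1/n\}$.

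The only genuinely delicate point is squeezing out the precise constant $8$. The self‑contained union‑of‑tree‑paths estimate is lossy — it yields the correct \emph{order} $\Theta(\wmin/n^2)$ but with a constant around $4$ rather than $8$ — so to reach the stated constant one must either appeal to the extremality of the path (equivalently, a careful symmetrization of the Rayleigh quotient that reduces any tree to $P_n$) or optimize the combinatorial bracket above more tightly. This is presumably why the paper defers to \cite{ST04} rather than reproducing the argument.
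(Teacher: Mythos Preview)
The paper does not actually prove this claim: it simply states ``The proof of the following claim can be found in \cite{ST04}'' and moves on. So there is no in-paper argument to compare against; your instinct in the last paragraph is exactly right.

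Your proposal is a correct, essentially self-contained proof. The reduction $L\succeq \wmin\cdot L_T$ for a spanning tree $T$ is clean, and the ``sharp line'' --- Fiedler/Grone--Merris extremality of the path among trees (hence among all connected graphs, by edge monotonicity of $\lambda_2$), followed by the explicit computation $\lambda_2(L_{P_n})=2(1-\cos(\pi/n))$ and the Taylor estimate --- does deliver the stated constant $8$ for $n\ge 8$ and the $1/n$ branch for $n\le 8$. Your ``soft line'' via tree-path Cauchy--Schwarz correctly gives $\lambda_2(L_T)\ge 4/(n(n-1))$, which as you note is the right order but loses a factor of two; it is not needed once you invoke the path extremality, so you could drop it entirely for a tighter write-up. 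One cosmetic point: the stray ``Writing $L=\sum_{(a,b)\in E}\wmax\cdot L_{ab}$\ldots'' is a draft artifact you should clean up.

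In short: you have supplied a valid proof where the paper supplies only a citation; there is nothing to reconcile.
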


\section{Sparsification via Bounded-Independence Sampling}\label{sec:upper-bound}

In \cref{sec:intro}, we briefly introduced the Spielman-Srivastava sparsification algorithm \cite{SS08}
based on (truly) independent edge sampling, with probabilities proportional to the effective resistances of the edges. In this section, we explore the tradeoff between the amount of independence used in the edge sampling process and the resulting sparsity that can be achieved. 

In particular, we analyze the algorithm $\Sparsify$ (see Figure 1).
The algorithm gets as input an undirected, weighted, dense graph $G=(V,E,w)$ on $n$ vertices, approximate effective resistances $\tilde{R}_{ab}$ for each edge $(a,b)\in E$, a bounded independence
parameter $k \le \log n$, a desired approximation error $\eps > 0$, and a parameter $\delta > 0$ governing the success probability, and outputs a sparser graph
$H$ whose Laplacian $\eps$-spectral approximates the Laplacian of $G$ with probability at least $1-2\delta$. 

\begin{figure}[h]
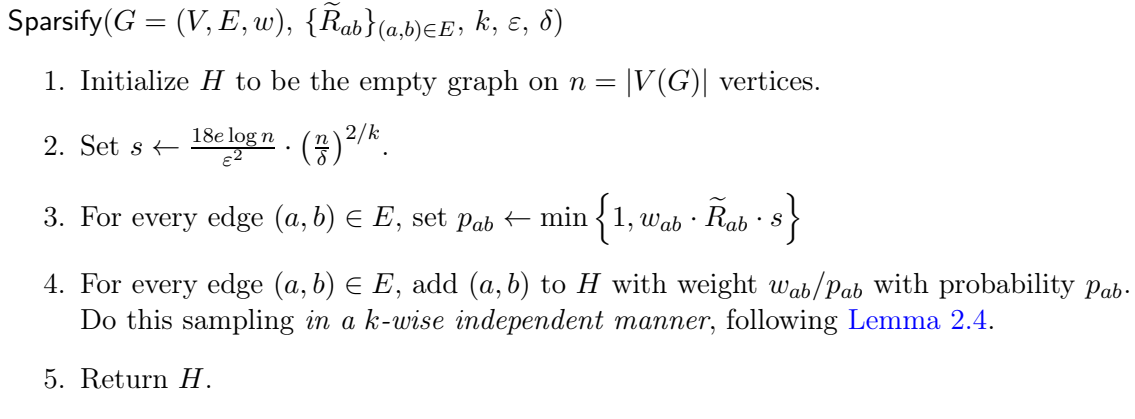
\label{alg:main}
\begin{myalg}
\centering
\fbox{
\parbox{0.9\textwidth}{
$\Sparsify$($G=(V,E,w)$, $\{\tilde{R}_{ab}\}_{(a,b)\in E}$, $k$, $\epsilon$, $\delta$)
\begin{enumerate}
    \item Initialize $H$ to be the empty graph on $n = |V(G)|$ vertices. 
    \item Set $s\leftarrow \frac{18 e\log n}{\eps^2} \cdot \left(\frac{n}{\delta}\right)^{2/k}$.
    \item For every edge $(a,b)\in E$, set $p_{ab}\leftarrow\min\set{1, w_{ab}\cdot \tilde{R}_{ab}\cdot s}$
    \item For every edge $(a,b)\in E$, add $(a,b)$ to $H$ with weight $w_{ab}/p_{ab}$ with probability $p_{ab}$. Do this sampling \emph{in a $k$-wise independent manner}, following \cref{lem:sampling}.
    \item Return $H$.
\end{enumerate}
}
}
\end{myalg}
\caption{Computing a spectral sparsifier via bounded independence sampling.}
\end{figure}

First we will analyze $\Sparsify$ for the case where the effective resistances are given exactly, i.e. $\tilde{R}_{ab}=R_{ab}$ for all $(a,b)\in E$. Then, in \cref{sect:approx_er} we will analyze the more general case where we are given approximations to the effective resistances. This latter case is useful algorithmically because more efficient algorithms are known for estimating effective resistances than for computing them exactly, both in the time-bounded and space-bounded settings \cite{SS08, MRSV17}.

\subsection{Sparsification With Exact Effective Resistances}
\label{sect:exact_er}
In this section we prove the following theorem about $\Sparsify$. 

\begin{theorem}[spectral sparsification via bounded independence]
\label{thm:main}
Let $G=(V,E, w)$ be an undirected connected weighted graph on $n$ vertices with Laplacian $L$ and effective resistances $R=\{R_{ab}\}_{(a,b)\in E}$. Let $0 < \eps < 1$, $0 < \delta < 1/2$ and let $k \le \log n$ be an even integer. Let $H$ be the output of $\Sparsify(G, R, k,\epsilon,\delta)$ and let $\tilde{L}$ be its Laplacian. Then, with probability at least $1-2\delta$ we have: 
\begin{enumerate}
\item $\tilde{L}\approx_{\epsilon}L$, and,
\item $H$ has $O\left(\frac{1}{\delta^{1+2/k}} \cdot \frac{\log n}{\eps^{2}}\cdot n^{1+\frac{2}{k}}\right)$ edges. 
\end{enumerate}
\end{theorem}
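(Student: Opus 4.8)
The plan is to follow the Spielman–Srivastava framework but replace their Chernoff/matrix-concentration argument with one that only relies on boundedness of the first $k$ moments, which is what $k$-wise independence gives us. The starting point is the standard reduction: for each edge $(a,b)\in E$, the algorithm's choice defines a random matrix $X_{ab}$ such that $X := \sum_{(a,b)\in E} X_{ab}$ equals $L^{+/2}\tilde{L}L^{+/2}$. Concretely, if the edge is sampled then $X_{ab} = (w_{ab}/p_{ab})\cdot L^{+/2}L_{ab}L^{+/2}$ and $X_{ab}=0$ otherwise, so $\E[X_{ab}] = w_{ab}\cdot L^{+/2}L_{ab}L^{+/2}$ and $\E[X] = L^{+/2}LL^{+/2} = \Pi$ by \cref{lem:proj_matrix}. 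Using \cref{lem:mult_both_sides} (with $C = L^{+/2}$, whose kernel is contained in $\ker L = \ker\tilde L$ — one should check $\ker\tilde L = \ker L$ holds with high probability, or restrict to $\Image(L)$), the statement $\tilde L \approx_\eps L$ is equivalent to $X \approx_\eps \Pi$ on $\Image(L)$, i.e. $\|X - \Pi\| \le \eps$ (as operators on $\Image(L)$). So it suffices to bound $\Pr[\|X-\Pi\| > \eps]$.

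For the tail bound, I would use $\|X-\Pi\|^k \le \Tr\big((X-\Pi)^k\big)$ when $k$ is even (since $(X-\Pi)^k$ is PSD and its largest eigenvalue is $\|X-\Pi\|^k$), so by Markov
\[
\Pr[\|X-\Pi\| > \eps] \le \frac{\E\big[\Tr((X-\Pi)^k)\big]}{\eps^k} = \frac{\Tr\big(\E[(X-\Pi)^k]\big)}{\eps^k}.
\]
The crucial observation is that $\Tr(\E[(X-\Pi)^k])$ expands into a sum over $k$-tuples of edges of traces of products of the matrices $(X_{ab} - \E[X_{ab}])$, and each such term involves at most $k$ of the edge-variables; hence under $k$-wise independence the expectation of each term is exactly what it would be under full independence. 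Therefore $\E[(X-\Pi)^k]$ is identical whether the sampling is $k$-wise independent or fully independent, and we may bound it \emph{as if} the $X_{ab}$ were truly independent. At that point I would invoke the matrix moment bound of Chen–Gittens–Tropp \cite{CGT12} for sums of independent PSD (or centered) random matrices: it bounds $\E[\Tr((X-\E X)^k)]$ in terms of $k$, the dimension, and the relevant variance/boundedness parameters. Here the key quantities are the per-edge operator norm bound $\|X_{ab}\| \le \|L^{+/2}L_{ab}L^{+/2}\|/p_{ab} \cdot w_{ab} = R_{ab}w_{ab}/p_{ab}$ (using \cref{lem:norm_is_er}), which is at most $1/s$ by the choice of $p_{ab} = \min\{1, w_{ab}R_{ab}s\}$ (when the min is $1$ the variable is deterministic and contributes nothing), and the total "variance" $\sum_{(a,b)} \E[X_{ab}^2] \preceq (1/s)\sum w_{ab} L^{+/2}L_{ab}L^{+/2} = \Pi/s$, so its trace is $(n-1)/s$ (here Foster's theorem \cref{claim:foster} is morally in the background, and $\Tr(\Pi)=n-1$). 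Plugging $s = \frac{18e\log n}{\eps^2}(n/\delta)^{2/k}$ into the CGT bound should give $\Tr(\E[(X-\Pi)^k]) \le (\eps^k\delta)\cdot(\text{small})$, i.e. the whole thing is at most $\delta$ after dividing by $\eps^k$; this is exactly where the constants $18e$ and the exponent $2/k$ get calibrated, and the factor $n^{2/k}$ comes from trading the dimension factor $n$ in the trace against the $k$-th root. This gives part (1) with failure probability $\le \delta$.

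For part (2), the expected number of edges in $H$ is $\sum_{(a,b)\in E} p_{ab} \le \sum_{(a,b)} w_{ab} R_{ab} s = (n-1)s = O\big(\frac{\log n}{\eps^2} n^{1+2/k}/\delta^{2/k}\big)$ by Foster's theorem (\cref{claim:foster}). To turn the expectation bound into a high-probability bound I would apply Markov's inequality to the edge count: $\Pr[|E(H)| > \frac{n\cdot s}{\delta}] \le \delta$, which inflates the bound by a $1/\delta$ factor and yields the claimed $O\big(\frac{1}{\delta^{1+2/k}}\cdot\frac{\log n}{\eps^2}\cdot n^{1+2/k}\big)$. Then a union bound over the two failure events (the spectral approximation failing, probability $\le\delta$; the edge count being too large, probability $\le\delta$) gives success probability at least $1-2\delta$, completing the proof.

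The main obstacle I anticipate is the moment computation: verifying cleanly that $k$-wise independence suffices to make $\E[(X-\Pi)^k]$ agree with the independent case (this is the conceptual heart and needs a careful expansion into edge-tuples, noting that a product of $k$ centered terms only "sees" $k$ edges), and then matching the CGT matrix-moment inequality's hypotheses to our setting and chasing the constants so that the choice $s = \frac{18e\log n}{\eps^2}(n/\delta)^{2/k}$ comes out exactly right — in particular getting the $(n/\delta)^{2/k}$ factor to fall out of balancing the dimensional $\Tr(\Pi) = n-1$ term against the $1/\eps^k$ from Markov. A secondary technical point is justifying the reduction to $\Image(L)$ / handling the kernel, i.e. checking $\ker\tilde L = \ker L$ (equivalently that $H$ stays connected) on the good event, so that \cref{lem:mult_both_sides}(2) applies and $\|X-\Pi\|\le\eps$ genuinely captures $\tilde L\approx_\eps L$.
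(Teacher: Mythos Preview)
Your proposal is correct and matches the paper's argument essentially step for step: the reduction from $\tilde L\approx_\eps L$ to $\|X-\Pi\|\le\eps$, the $k$th-moment Markov bound via $\Tr((X-\Pi)^k)$, the observation that $k$-wise independence makes $\E[(X-\Pi)^k]$ equal to its value under full independence, the application of the Chen--Gittens--Tropp inequality with variance proxy $\|\sum\E[Z_{ab}^2]\|\le\|\Pi/s\|=1/s$ and uniform norm bound $1/s$, and Foster's theorem plus Markov for the edge count. Two small corrections: the CGT bound uses the \emph{spectral norm} of the variance matrix (which here is $1/s$), not its trace, and your kernel worry is unnecessary---the paper only needs the implication $X\approx_\eps\Pi\Rightarrow\tilde L\approx_\eps L$, which holds because $\mathbf{1}\in\ker\tilde L$ for any graph Laplacian, so connectedness of $H$ need not be checked separately.
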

Spielman and Srivastava showed that by using truly independent sampling (i.e., $k = |E|$) in $\Sparsify$, one can compute an $\epsilon$-spectral sparsification of $G$ with $O(n\cdot\log n/\epsilon^2)$ edges, with high constant probability \cite{SS08}. One immediate consequence of \cref{thm:main} is that $\log n$-wise independent sampling suffices to match the sparsity that truly independent sampling achieves. Another consequence of \cref{thm:main} is that for any constant $0 < \alpha < 1$ and any constant $\gamma< \alpha/2$, for $k\approx2/(\alpha-2\gamma)$, $k$-wise independent sampling achieves a spectral sparsifier with error $\eps = n^{-\gamma}$  and $O(n^{1+\alpha})$ expected edges, with high constant probability.

 The proof of \cref{thm:main} is modeled after Spielman and Srivastava's argument \cite{SS08}. One difference is that the sparsification algorithm in \cite{SS08} fixes the number of edges to be sampled in advance rather than having the number of edges be a random variable. They then prove spectral approximation by reducing the problem to a question about concentration of random matrices, which they resolve with a matrix Chernoff bound due to Rudelson and Vershynin \cite{rudelson2007sampling}. We follow a variant of this argument for the case where the number of edges in the sparsifier is random and use a matrix concentration bound of Chen, Gittens, and Tropp \cite{CGT12}. This variant, for truly independent sampling, has appeared before in \cite{SNotes}. Our argument deviates in the proof of \cref{lem:trace_bound} to address the fact that we only use $k$-wise independent sampling.

We start by showing the sparsity guarantee in \cref{thm:main} indeed holds. Since the inclusion or exclusion of each edge in the sparsifier is a Bernoulli random variable, we can write the expected number of edges in it as 
\begin{align*} 
\sum_{(a,b)\in E}p_{ab}&\leq \sum_{(a,b)\in E}w_{ab}\cdot R_{ab}\cdot s\\ 
&=  (n-1)\cdot s\\ 
&=O\left(\frac{\log n}{\delta^{2/k}\eps^{2}}\cdot n^{\frac{2}{k}}\right)\cdot n, 
\end{align*}
where the second line follows from \cref{claim:foster}. By Markov's inequality, we can conclude: 
\begin{claim}\label{claim:item1}
Item (2) of \cref{thm:main} holds with probability at least $1-\delta$. 
\end{claim}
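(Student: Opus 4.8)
The plan is to apply Markov's inequality to the random variable $Y$ that counts the number of edges of $H$. First I would write $Y = \sum_{(a,b)\in E} B_{ab}$, where $B_{ab}$ is the indicator that edge $(a,b)$ is retained in $H$ at Step 4 of $\Sparsify$. Because the $k$-wise independent sample space used there has the prescribed marginals, each $B_{ab}$ is a Bernoulli variable with $\E[B_{ab}] = p_{ab}$; note that this uses only the fact that singletons are sets of size $1 \le k$, so the bounded-independence restriction plays no role in this part of the argument. By linearity of expectation, $\E[Y] = \sum_{(a,b)\in E} p_{ab}$, and this is exactly the quantity bounded in the display immediately preceding the claim: dropping the $\min$ with $1$ gives $p_{ab} \le w_{ab}\cdot R_{ab}\cdot s$, Foster's theorem (\cref{claim:foster}) gives $\sum_{(a,b)\in E} w_{ab} R_{ab} = n-1$, and unwinding the choice of $s$ yields $\E[Y] \le (n-1)s = O\!\left(\frac{\log n}{\delta^{2/k}\eps^{2}}\cdot n^{1+2/k}\right)$.

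Then I would invoke Markov's inequality in the form $\Pr\!\left[Y \ge \tfrac{1}{\delta}\,\E[Y]\right] \le \delta$, which gives that with probability at least $1-\delta$ we have $Y < \tfrac{1}{\delta}\E[Y] = O\!\left(\frac{1}{\delta^{1+2/k}}\cdot\frac{\log n}{\eps^{2}}\cdot n^{1+2/k}\right)$. This is precisely Item (2) of \cref{thm:main}, so the claim follows.

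There is essentially no obstacle here; the only bookkeeping worth double-checking is that the factor $(n/\delta)^{2/k}$ hidden inside $s$ combines with the extra $1/\delta$ coming from Markov to produce the stated $\delta^{-(1+2/k)}$ dependence, and that replacing the exact marginals $p_{ab}$ by the truncated marginals $\lfloor p_{ab}\rfloor_t$ that \cref{lem:sampling} actually realizes only decreases each marginal, so the same upper bound on $\E[Y]$ — and hence the same high-probability conclusion — continues to hold. The substantive half of \cref{thm:main}, namely the spectral-approximation guarantee in Item (1), is handled separately via the trace/moment bound (\cref{lem:trace_bound}) sketched in the introduction; this claim isolates only the easy sparsity half.
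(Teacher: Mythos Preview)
Your proposal is correct and follows essentially the same approach as the paper: bound $\E[Y]=\sum_{(a,b)\in E}p_{ab}\le s(n-1)$ via Foster's theorem and then apply Markov's inequality with threshold $\E[Y]/\delta$. Your aside about truncated marginals is unnecessary here (Section~\ref{sect:exact_er} works with exact $p_{ab}$), but harmless.
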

We prove item (1) of \cref{thm:main} by the following sequence of lemmas. Throughout, we let $G=(V,E,w)$, $L$, $\tilde{L}$, $\epsilon$ and $k$ as in \cref{thm:main} and
$s$, $R_{ab}$ and $p_{ab}$ as in $\Sparsify$. Let $\Pi = I-J$ be the orthogonal projection onto $\Image(L)$, as in \cref{lem:proj_matrix}. For each $(a,b)\in E$ we define the random matrix \[
X_{ab}=
\begin{cases}
\frac{w_{ab}}{p_{ab}}\cdot L^{+/2}L_{ab}L^{+/2} & \text{if we choose to include edge }(a,b)\text{ in Step 4 of }\Sparsify\\
0 & \mathrm{otherwise,}
\end{cases}
\]
So $\E[X_{ab}]=w_{ab}\cdot L^{+/2}L_{ab}L^{+/2}$ and the $X_{ab}$'s are $k$-wise independent. That is,
$$
\left( X_{e_1},\ldots,X_{e_k} \right) \equiv  X_{e1} \times \ldots \times X_{e_k} 
$$ 
for every $\set{e_1,\ldots,e_k} \subseteq E$.
Let $X=\sum_{(a,b)\in E}X_{ab}$. Also, recall that $L_{ab} = (e_a-e_b)(e_a-e_b)^{\top}$.

\begin{lemma}
\label{lem:reduction_to_X}
 Fix all the random choices in $\Sparsify$. Then, $\tilde{L}\approx_{\epsilon}L$ if and only if $X\approx_{\epsilon}\Pi$. 
\end{lemma}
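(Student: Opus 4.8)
The plan is to recognize that $X$ and $\Pi$ are just the conjugates of $\tilde{L}$ and $L$ by $L^{+/2}$, and that such a conjugation both preserves $\eps$-spectral approximation and, in this setting, can be undone. First I would record the two algebraic identities on which everything rests. Since $\tilde{L} = \sum (w_{ab}/p_{ab}) L_{ab}$, the sum being over the edges chosen in Step~4 of $\Sparsify$ (and $X_{ab}=0=L^{+/2}\cdot 0\cdot L^{+/2}$ for the edges not chosen), linearity of the map $M \mapsto L^{+/2} M L^{+/2}$ gives $X = \sum_{(a,b) \in E} X_{ab} = L^{+/2}\tilde{L}L^{+/2}$, while \cref{lem:proj_matrix} gives $\Pi = L^{+/2}LL^{+/2}$. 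So the statement to prove is exactly that $\tilde{L} \approx_{\eps} L$ if and only if $L^{+/2}\tilde{L}L^{+/2} \approx_{\eps} L^{+/2}LL^{+/2}$.

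For the forward implication I would apply \cref{lem:mult_both_sides}(1) with $C = L^{+/2}$ (which is symmetric, so $C^{\top}=C$): from $\tilde{L} \approx_{\eps} L$ it follows immediately that $X = L^{+/2}\tilde{L}L^{+/2} \approx_{\eps} L^{+/2}LL^{+/2} = \Pi$.

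For the reverse implication, the tempting move is to invoke the ``if and only if'' of \cref{lem:mult_both_sides}(2), but that requires $\ker(\tilde{L}) = \ker(L)$, which may fail because the sampled graph $H$ need not be connected. Instead I would apply \cref{lem:mult_both_sides}(1) a second time, now with $C = L^{1/2}$, to the hypothesis $X \approx_{\eps} \Pi$, obtaining $L^{1/2}XL^{1/2} \approx_{\eps} L^{1/2}\Pi L^{1/2}$, and then simplify both sides. Since $L^{1/2}$ is symmetric PSD with $\Image(L^{1/2}) = \Image(L)$, the projection $\Pi$ onto $\Image(L)$ fixes $L^{1/2}$, so $\Pi L^{1/2} = L^{1/2} = L^{1/2}\Pi$, and likewise $L^{1/2}L^{+/2} = L^{+/2}L^{1/2} = \Pi$. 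Therefore $L^{1/2}XL^{1/2} = (L^{1/2}L^{+/2})\tilde{L}(L^{+/2}L^{1/2}) = \Pi\tilde{L}\Pi$ and $L^{1/2}\Pi L^{1/2} = L^{1/2}L^{1/2} = L$. Finally, because $\tilde{L}$ is a graph Laplacian on the same vertex set, $\onesvec \in \ker(\tilde{L})$, hence $\Image(\tilde{L}) \subseteq \Span(\onesvec)^{\perp} = \Image(\Pi)$, which gives $\Pi\tilde{L}\Pi = \tilde{L}$. Combining, $\tilde{L} \approx_{\eps} L$, which closes the equivalence.

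The only real subtlety, and the reason I route both directions through the one-sided \cref{lem:mult_both_sides}(1) rather than the two-sided part~(2), is precisely that $H$ can be disconnected, so $\ker(\tilde{L})$ may be strictly larger than $\Span(\onesvec)$; the argument avoids needing equality of kernels and uses only the trivial inclusion $\onesvec \in \ker(\tilde{L})$ to collapse $\Pi\tilde{L}\Pi$ back to $\tilde{L}$. Everything else is routine manipulation of pseudoinverses and matrix square roots.
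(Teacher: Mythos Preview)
Your proof is correct and follows the same approach as the paper: both reduce the equivalence to the identity $X = L^{+/2}\tilde{L}L^{+/2}$ together with $\Pi = L^{+/2}LL^{+/2}$ (from \cref{lem:proj_matrix}), and then invoke \cref{lem:mult_both_sides}. The paper simply multiplies both sides by $L^{+/2}$ and cites \cref{lem:mult_both_sides} for the equivalence in one stroke, whereas you split the two directions and, for the reverse implication, conjugate by $L^{1/2}$ instead and use $\Pi\tilde{L}\Pi=\tilde{L}$; this extra care is a genuine improvement, since the paper's direct appeal to \cref{lem:mult_both_sides}(2) tacitly assumes $\ker(\tilde{L})=\ker(L)$, which need not hold when $H$ is disconnected.
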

\begin{proof}
By definition, $\tilde{L}\approx_{\epsilon}L$ if and only if
\[
(1-\epsilon) L\preceq \tilde{L}\preceq(1+\epsilon) L.
\]
Multiplying on both sides by $L^{+/2}$ and applying \cref{lem:proj_matrix} and \cref{lem:mult_both_sides}, we get that this is equivalent to 
\[
(1-\epsilon) \Pi\preceq L^{+/2}\tilde{L}L^{+/2}\preceq(1+\epsilon) \Pi.
\]
For each $(a,b)\in E$, let $Y_{ab}$ be the random variable 
\[
Y_{ab}=
\begin{cases}
\frac{w_{ab}}{p_{ab}}\cdot L_{ab} & \text{if we choose to include edge }(a,b)\text{ in Step 4 of }\Sparsify\\
0 & \mathrm{otherwise}
\end{cases}
\]
Notice that $\sum_{(a,b)\in E}Y_{ab}=\tilde{L}$. Thus, we have
\begin{align*}
    X&=\sum_{(a,b)\in E}X_{ab}\\
    &=L^{+/2}\left(\sum_{(a,b)\in E}Y_{ab}\right)L^{+/2}\\
    &=L^{+/2}\tilde{L}L^{+/2}.
\end{align*}
It follows that $\tilde{L}\approx_{\epsilon}L$ if and only if $X\approx_{\epsilon}\Pi$.
\end{proof}

\begin{lemma}
\label{lem:XapproxPi}
Fix all the random choices in $\Sparsify$ and assume $k$ is even. Then, $X\approx_{\epsilon}\Pi$ if $(X-\Pi)^k\preceq (\epsilon\cdot\Pi)^k$. 
\end{lemma}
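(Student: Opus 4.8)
### Proof proposal

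The plan is to work in the common eigenbasis of $X$ and $\Pi$. Both $X - \Pi$ and $\Pi$ are symmetric, and they commute: $\Pi$ is the orthogonal projection onto $\Image(L)$, and $X = L^{+/2}\tilde{L}L^{+/2}$ has its image (and its kernel) aligned with that of $\Pi$ — indeed $\Pi X = X\Pi = X$ since $L^{+/2}$ annihilates $\onesvec$ and acts as the identity (composed with $\Pi$) on $\Image(L)$. Hence $X$, $\Pi$, and $X-\Pi$ are simultaneously diagonalizable, and it suffices to argue coordinatewise along their shared eigenvectors $v_1,\dots,v_n$. On the kernel direction $v = \onesvec/\sqrt{n}$ all three matrices vanish, so $(1-\eps)\Pi \preceq X \preceq (1+\eps)\Pi$ holds trivially there; the content is on $\Image(L)$, where $\Pi$ has eigenvalue $1$.

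First I would fix an eigenvector $v_i \in \Image(L)$ with $X v_i = \lambda_i v_i$; then $(X-\Pi)v_i = (\lambda_i - 1)v_i$ and $(\eps\,\Pi)v_i = \eps\, v_i$. Since $k$ is even, the hypothesis $(X-\Pi)^k \preceq (\eps\,\Pi)^k$ evaluated on $v_i$ gives $(\lambda_i-1)^k \le \eps^k$, i.e. $|\lambda_i - 1| \le \eps$, so $(1-\eps) \le \lambda_i \le (1+\eps)$. Combining with the trivial kernel case, every eigenvalue $\lambda_i$ of $X$ satisfies $(1-\eps)\mu_i \le \lambda_i \le (1+\eps)\mu_i$ where $\mu_i \in \{0,1\}$ is the corresponding eigenvalue of $\Pi$. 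By the characterization of $\approx_\eps$ for matrices sharing an eigenbasis (stated right after \cref{def:approx}), this is exactly $X \approx_\eps \Pi$.

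The one step that needs care — and the only place there is anything to check — is the commutativity/kernel-alignment claim $\Pi X = X \Pi = X$, which lets me reduce to the shared eigenbasis. This follows from \cref{lem:proj_matrix}: $\Pi = L^{+/2}LL^{+/2}$, and using $L^{+/2}LL^{+/2} = L L^+ = L^+ L$ together with $X = L^{+/2}\tilde{L}L^{+/2}$ and $\tilde{L} = \sum Y_{ab}$ with each $Y_{ab}$ a scalar multiple of $L_{ab} = (e_a-e_b)(e_a-e_b)^\top$ whose column space lies in $\Span(\onesvec)^\perp = \Image(L)$; so $\Pi$ fixes the image of $\tilde{L}$ and hence $\Pi X = X$ (and by symmetry $X\Pi = X$). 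Alternatively one avoids this entirely by noting $\mathrm{ker}(L^{+/2}) \supseteq \Span(\onesvec) = \mathrm{ker}(L)$ and invoking part (2) of \cref{lem:mult_both_sides} to pass between $X - \Pi$ statements; I would present whichever is shorter. I expect no genuine obstacle here — the lemma is essentially the observation that the even-power PSD inequality pins down eigenvalues to an interval, and all the supporting facts are already in the preliminaries.
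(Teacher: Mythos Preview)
Your proposal is correct and follows essentially the same approach as the paper: both argue that $X$ and $\Pi$ share a common eigenbasis (since $\onesvec$ is in the kernel of both and $\Span(\{\onesvec\})^\perp$ is an eigenspace of $\Pi$ with eigenvalue $1$), then read off $(\lambda_i-1)^k\le\eps^k$ from the hypothesis on each eigenvector and use evenness of $k$ to conclude $|\lambda_i-1|\le\eps$. The paper gets the shared eigenbasis slightly more directly---it just notes that the eigenvectors of the symmetric matrix $X$ orthogonal to $\onesvec$ are automatically eigenvectors of $\Pi$---whereas you route through the commutativity identity $\Pi X = X\Pi = X$; both are fine, but the paper's version avoids the extra verification you flag.
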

\begin{proof}
First we observe that $X$ and $\Pi$ share a common eigenbasis. $\onesvec$ is in the kernel of both $\Pi=I-J$ and $X$. Let $v_2,\ldots, v_n$ be orthogonal eigenvectors of $X$ in $\Span(\set{\onesvec})^{\perp}$ with eigenvalues $\lambda_2,\ldots,\lambda_n$, respectively. These are all also eigenvectors of $\Pi$ since $\Span(\set{\onesvec})^{\perp}$ is an eigenspace of $\Pi$ of eigenvalue 1. Assume $(X-\Pi)^k\preceq (\epsilon\cdot\Pi)^k$. Since $v_i$ is an eigenvector of both $X$ and $\Pi$, we have $(\lambda_i-1)^k\leq\eps^k$. Since $k$ is even, it follows that $|\lambda_i-1|\leq\eps$, i.e. $1-\eps\leq\lambda_i\leq1+\eps$. Since this holds for all eigenvalues $\lambda_2,\ldots,\lambda_n$ of $X$ and the corresponding eigenvalues of $\Pi$ are $1$, we conclude that $X\approx_{\eps} \Pi$.

\end{proof}
\begin{lemma}
\label{lem:move_to_trace}
Fix all the random choices in $\Sparsify$ and assume $k$ is even. Then, $(X-\Pi)^k\preceq(\epsilon\cdot\Pi)^k$ if $\Tr\left((X-\Pi)^k\right)\leq\epsilon^k$.
\end{lemma}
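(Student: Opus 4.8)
\textbf{Proof plan for \cref{lem:move_to_trace}.}
The goal is to show that the scalar bound $\Tr((X-\Pi)^k)\le\eps^k$ implies the operator inequality $(X-\Pi)^k\preceq(\eps\cdot\Pi)^k$. The plan is to diagonalize everything simultaneously, exactly as in the proof of \cref{lem:XapproxPi}: recall $X=L^{+/2}\tilde L L^{+/2}$ and $\Pi=I-J$ share the eigenvector $\onesvec$ in their common kernel, and on $\Span(\onesvec)^\perp$ we can pick a common orthonormal eigenbasis $v_2,\dots,v_n$ with $Xv_i=\lambda_i v_i$ and $\Pi v_i=v_i$. Then $X-\Pi$ is diagonal in the basis $\{\onesvec,v_2,\dots,v_n\}$, with eigenvalue $0$ on $\onesvec$ and eigenvalue $\lambda_i-1$ on each $v_i$. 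Consequently $(X-\Pi)^k$ has eigenvalues $0$ and $(\lambda_i-1)^k$, while $(\eps\cdot\Pi)^k$ has eigenvalues $0$ (on $\onesvec$) and $\eps^k$ (on each $v_i$). So the desired operator inequality is equivalent to the coordinatewise statement $(\lambda_i-1)^k\le\eps^k$ for every $i=2,\dots,n$.

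Now I would close the argument using two observations. First, since $k$ is even, every eigenvalue $(\lambda_i-1)^k$ of $(X-\Pi)^k$ is nonnegative, so $(X-\Pi)^k\succeq 0$ and each term $(\lambda_i-1)^k$ is at most the full trace $\Tr((X-\Pi)^k)=\sum_{j=2}^n(\lambda_j-1)^k$ (the $\onesvec$-coordinate contributes $0$). Second, by hypothesis this trace is at most $\eps^k$. Chaining these, $(\lambda_i-1)^k\le\Tr((X-\Pi)^k)\le\eps^k$ for every $i$, which is exactly the coordinatewise inequality we needed, hence $(X-\Pi)^k\preceq(\eps\cdot\Pi)^k$.

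There is really no hard step here — the lemma is a short bridge between a trace bound and an operator bound, and the only thing that makes it work is that $k$ is even (so that the power-$k$ map sends the whole spectrum of the symmetric matrix $X-\Pi$ into $[0,\infty)$, turning the sum of eigenvalues into an upper bound on each individual eigenvalue) together with the fact that $X$ and $\Pi$ are simultaneously diagonalizable with $\Pi$ acting as the identity on the relevant subspace. The one point to be careful about is the kernel: I should note explicitly that $\onesvec$ lies in $\ker(X-\Pi)$ (since $X\onesvec=\zerosvec$ because $\onesvec\in\ker L^{+/2}$, and $\Pi\onesvec=\zerosvec$), so that the restriction to $\Span(\onesvec)^\perp$ loses nothing and both $(X-\Pi)^k$ and $(\eps\Pi)^k$ annihilate $\onesvec$. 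Everything else is the routine diagonalization bookkeeping already used in \cref{lem:XapproxPi}.
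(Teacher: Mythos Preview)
Your proposal is correct and follows essentially the same approach as the paper: use that $k$ is even so $(X-\Pi)^k$ is PSD, bound its largest eigenvalue by its trace $\le\eps^k$, and then use $\Ker(\Pi)\subseteq\Ker(X-\Pi)$ together with the fact that all nonzero eigenvalues of $(\eps\Pi)^k$ equal $\eps^k$ to conclude the operator inequality. The paper's write-up is slightly terser---it does not explicitly set up the simultaneous diagonalization and instead argues directly from the spectral-norm-via-trace bound and the kernel inclusion---but the substance of the two arguments is identical.
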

\begin{proof}
Since $k$ is even, $(X-\Pi)^k$ has non-negative eigenvalues. If $\Tr\left((X-\Pi)^k\right)\leq\epsilon^k$ then the sum of the eigenvalues of $(X-\Pi)^k$ is at most $\epsilon^k$ and hence the largest eigenvalue of $(X-\Pi)^k$ is at most $\epsilon^k$. Since $\Ker(X)\supseteq\Ker(\Pi)$ and all nonzero eigenvalues of $(\epsilon\cdot\Pi)^k$ equal $\epsilon^k$, it follows that $(X-\Pi)^k\preceq(\epsilon\cdot\Pi)^k$.
\end{proof}
\begin{lemma}
\label{lem:prob_of_trace}
Assuming $k$ is even, it holds that
$\Pr_{X}[\Tr\left((X-\Pi)^k\right)>\epsilon^k]\leq \frac{1}{\eps^k}  \Tr\left(\E_{X}[(X-\Pi)^k]\right)$.
\end{lemma}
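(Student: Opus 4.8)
The plan is to recognize this as a direct application of Markov's inequality to a single non-negative real-valued random variable. First I would observe that for every fixed choice of the randomness in $\Sparsify$, the matrix $X - \Pi$ is symmetric: $\Pi = I - J$ is symmetric by construction, and $X = L^{+/2}\tilde{L}L^{+/2}$ is symmetric since $\tilde{L}$ and $L^{+/2}$ are. Hence, with $k$ even, we may write $(X-\Pi)^k = \bigl((X-\Pi)^{k/2}\bigr)^{\top}(X-\Pi)^{k/2}$, which is PSD, so $\Tr\bigl((X-\Pi)^k\bigr) \ge 0$ pointwise. Therefore $Z := \Tr\bigl((X-\Pi)^k\bigr)$ is a non-negative random variable over the choice of $X$.

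Next I would apply Markov's inequality to $Z$ with threshold $\eps^k > 0$, giving
\[
\Pr_{X}\bigl[\Tr\bigl((X-\Pi)^k\bigr) > \eps^k\bigr] = \Pr_X[Z > \eps^k] \le \frac{\E_X[Z]}{\eps^k} = \frac{\E_X\bigl[\Tr\bigl((X-\Pi)^k\bigr)\bigr]}{\eps^k}.
\]
Finally, since the trace and expectation are both linear and commute (as recalled in the preliminaries, $\Tr(\E[A]) = \E[\Tr(A)]$ for any random matrix $A$, applied here to $A = (X-\Pi)^k$), we have $\E_X\bigl[\Tr\bigl((X-\Pi)^k\bigr)\bigr] = \Tr\bigl(\E_X[(X-\Pi)^k]\bigr)$, which yields the claimed bound.

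There is essentially no obstacle here: the only point requiring a word of care is justifying that $\Tr\bigl((X-\Pi)^k\bigr)$ is non-negative (so that Markov applies), which follows from $k$ being even together with the symmetry of $X - \Pi$. The substantive work is deferred to the subsequent lemma (bounding $\Tr(\E_X[(X-\Pi)^k])$ via the matrix moment inequality of Chen, Gittens, and Tropp), which is where the $k$-wise independence of the $X_{ab}$'s and Foster's theorem will actually enter.
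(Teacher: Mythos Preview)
Your proof is correct and follows essentially the same approach as the paper: observe that $\Tr\bigl((X-\Pi)^k\bigr)$ is nonnegative because $k$ is even, apply Markov's inequality, and then commute trace and expectation. You simply spell out the symmetry/PSD justification for nonnegativity in slightly more detail than the paper does.
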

\begin{proof}
Since $\Tr((X-\Pi)^k)$ is nonnegative (due to $k$ being even), Markov's inequality gives 
\[
\Pr_{X}[\Tr\left((X-\Pi)^k\right)>\epsilon^k]~\leq~ \frac{1}{\eps^k} \cdot \E_{X}[\Tr((X-\Pi)^k)].
\]
Noting that the trace and the expectation commute completes the proof. 
\end{proof}
\begin{lemma}
\label{lem:trace_bound}
It holds that
$\Tr\left(\E_{X}[(X-\Pi)^k]\right) \leq n\cdot \left(\frac{18e\log n}{s}\right)^{k/2}$.
\end{lemma}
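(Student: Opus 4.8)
The goal is to control the $k$-th moment of the centered sum $X - \Pi = \sum_{(a,b)} (X_{ab} - \E X_{ab})$. The key observation is that expanding $(X-\Pi)^k = \left(\sum_e (X_e - \E X_e)\right)^k$ and taking the expectation, every surviving term is a product of $k$ centered matrices indexed by a multiset of at most $k$ edges. Since the $X_{ab}$'s are $k$-wise independent and $\E[X_e - \E X_e] = 0$, any term involving an edge that appears exactly once vanishes in expectation — and crucially, the $k$-wise independence is enough to carry out this cancellation, because a degree-$k$ monomial in the $X_e$'s touches at most $k$ distinct edges, so their joint distribution is genuinely a product. This is precisely the point where the proof "proceeds as if the $X_{ab}$'s were truly independent." Therefore $\E_X[(X-\Pi)^k]$ is exactly equal to what it would be under full independence, and I may invoke a matrix moment inequality designed for sums of independent, centered, PSD-ish summands.

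Concretely, I would apply the Chen–Gittens–Tropp matrix moment bound \cite{CGT12} to the sum $\sum_e (X_e - \E X_e)$. That result bounds $\E\|\sum_e (X_e - \E X_e)\|_p$ (or the corresponding trace moment) in terms of (i) the "matrix variance" $\sigma^2 = \|\sum_e \E[(X_e - \E X_e)^2]\|$ and (ii) a uniform bound $R$ on $\|X_e - \E X_e\|$. Here each $X_e$ is a nonnegative multiple of the rank-one matrix $L^{+/2} L_{ab} L^{+/2}$, whose norm is exactly $R_{ab}$ by \cref{lem:norm_is_er}; with the reweighting $w_{ab}/p_{ab}$ and $p_{ab} = \min\{1, w_{ab} R_{ab} s\}$, one gets $\|X_{ab}\| \le w_{ab} R_{ab}/p_{ab} \le 1/s$ whenever $p_{ab} < 1$ (and $\|X_{ab}\| \le w_{ab}R_{ab} \le 1/s$... actually $\le$ the effective resistance itself when $p_{ab}=1$, which is still at most $1/s$ is false in general — so care is needed, but the relevant bound $w_{ab}R_{ab}/p_{ab} \le 1/s$ holds in both cases since $p_{ab}\ge w_{ab}R_{ab}s$ is false when clipped; rather $p_{ab} = w_{ab}R_{ab}s$ exactly in the unclipped case giving $w_{ab}R_{ab}/p_{ab} = 1/s$, and $\le 1/s$ when $p_{ab}=1$ since then $w_{ab}R_{ab}s \ge 1$). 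So $R \le 1/s$ and the variance term is bounded using $\sum_e \E[X_e^2] \preceq \frac{1}{s}\sum_e \E[X_e] = \frac{1}{s}\Pi$ (again via \cref{lem:norm_is_er} to pull out the norm of each rank-one piece), so $\sigma^2 \le 1/s$. Plugging $R, \sigma^2 \le 1/s$ and $p = k$ into the CGT bound, together with the dimension factor $n$ coming from passing from the operator-norm moment to the trace moment (the ambient space has rank $n-1 < n$), yields a bound of the shape $n \cdot (C k/s)^{k/2}$ for an absolute constant; tracking the constant through CGT (and using $k \le \log n$ to absorb lower-order terms) gives the claimed $n \cdot (18 e \log n / s)^{k/2}$.

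**Main obstacle.** The delicate part is the bookkeeping in the CGT bound: matching their hypotheses exactly (they state it for sums of independent random matrices, possibly requiring self-adjoint or symmetrized summands), confirming that the $k$-wise independence genuinely suffices for the expectation of the degree-$k$ matrix product to coincide with the independent case, and extracting the constant $18e$ rather than some larger absolute constant — in particular verifying that the $\log n$ factor in the final bound is exactly the dimension penalty $(\text{rank})^{2/k}$-style term reorganized, and that no extra $k$-dependent factors survive given $k \le \log n$. A secondary subtlety is handling the clipped edges (those with $p_{ab} = 1$, contributing $X_{ab} = w_{ab} L^{+/2}L_{ab}L^{+/2}$ deterministically, hence $X_{ab} - \E X_{ab} = 0$): these simply drop out of the centered sum, so they pose no difficulty, but the argument should note it explicitly so that the variance and norm bounds above only need to hold for the genuinely random summands.
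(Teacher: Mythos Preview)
Your proposal is correct and follows essentially the same route as the paper: reduce to the fully independent case via the observation that $\E[(X-\Pi)^k]$ only involves $k$-wise joint moments, then apply the Chen--Gittens--Tropp moment inequality with the uniform bound $\|Z_{ab}\|\le 1/s$ (using that $Z_{ab}=0$ when $p_{ab}=1$) and the variance bound $\sum_e \E[Z_e^2]\preceq \tfrac{1}{s}\Pi$, and finally pass from the norm moment to the trace at a cost of the factor $n$. One small clarification: the $\log n$ in the final bound is not a trace/dimension penalty but comes from the CGT parameter $r=\max\{k,2\log n\}=2\log n$ (using $k\le \log n$); the separate factor of $n$ is the trace-to-norm conversion, and the constant $18e$ arises as $(3\sqrt{2e})^2$ after combining the two CGT terms under the assumption $s>2e\log n$.
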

To prove \cref{lem:trace_bound}, we will use the following theorem of Chen, Gittens, and Tropp.

\begin{theorem}[\cite{CGT12}]
\label{thm:normbound}
Let $W_{1},\ldots,W_{m}$ be independent, random, symmetric $n \times n$ matrices. Fix $k\geq 2$ and let $r=\max\{k,2\log n\}$. Then,
\[
\left(\E\left[\left\|\sum_{i\in[m]}W_{i}\right\|^{k}\right]\right)^{1/k}\leq \sqrt{e\cdot r}\cdot\left\|\sum_{i\in[m]}\E[W_{i}^2]\right\|^{1/2}+2e\cdot r\cdot\left(\E\left[\max_{i\in[m]}\|W_{i}\|^{k}\right]\right)^{1/k}.
\]
\end{theorem}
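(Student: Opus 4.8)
\textbf{Proof plan for Lemma~\ref{lem:trace_bound}.}
The plan is to reduce the trace bound to a spectral-norm bound and then invoke Theorem~\ref{thm:normbound}. Since $k$ is even, $(X-\Pi)^k$ is PSD, so $\Tr\bigl((X-\Pi)^k\bigr)\le n\cdot\|(X-\Pi)^k\| = n\cdot\|X-\Pi\|^k$, and taking expectations, $\Tr\bigl(\E_X[(X-\Pi)^k]\bigr)\le n\cdot\E_X\bigl[\|X-\Pi\|^k\bigr]$. So it suffices to show $\E_X\bigl[\|X-\Pi\|^k\bigr]\le (18e\log n/s)^{k/2}$. The obstacle is that $X-\Pi = \sum_{(a,b)\in E}(X_{ab}-\E[X_{ab}])$ is a sum of matrices that are only $k$-wise independent, not fully independent, so Theorem~\ref{thm:normbound} does not literally apply. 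The key observation is that $\E_X[\|X-\Pi\|^k]=\E_X[\Tr((X-\Pi)^k)]$ (using $k$ even and the kernel containing $\onesvec$, as in Lemmas~\ref{lem:XapproxPi}--\ref{lem:move_to_trace}) expands as a sum of traces of products of $k$ of the centered matrices $W_{ab}:=X_{ab}-\E[X_{ab}]$; each such term involves at most $k$ distinct edges, so its expectation is identical whether the $X_{ab}$ are $k$-wise independent or fully independent. Hence $\E_X[\|X-\Pi\|^k]$ equals the corresponding expectation under a fully independent sample, which I then bound via Theorem~\ref{thm:normbound}.

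Concretely: let $\bar X_{ab}$ be a fully independent copy (same marginals $p_{ab}$ and same weights $w_{ab}/p_{ab}$), set $\bar W_{ab}=\bar X_{ab}-\E[\bar X_{ab}]$ and $\bar X=\sum\bar X_{ab}$. By the moment-matching argument above, $\E_X[\|X-\Pi\|^k]=\E_X[\Tr((X-\Pi)^k)]=\E_{\bar X}[\Tr((\bar X-\Pi)^k)]=\E_{\bar X}[\|\bar X-\Pi\|^k]$. Apply Theorem~\ref{thm:normbound} with $m=|E|$, the matrices $\bar W_{ab}$, and $r=\max\{k,2\log n\}=2\log n$ (since $k\le\log n$):
\[
\Bigl(\E_{\bar X}\bigl[\|\bar X-\Pi\|^k\bigr]\Bigr)^{1/k}\le \sqrt{e r}\cdot\Bigl\|\sum_{(a,b)}\E[\bar W_{ab}^2]\Bigr\|^{1/2}+2er\cdot\Bigl(\E\bigl[\max_{(a,b)}\|\bar W_{ab}\|^k\bigr]\Bigr)^{1/k}.
\]

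Now I estimate the two ingredients using effective-resistance bounds. For the second-moment term: $\E[\bar W_{ab}^2]\preceq\E[\bar X_{ab}^2]=p_{ab}(w_{ab}/p_{ab})^2\|L^{+/2}L_{ab}L^{+/2}\|\cdot L^{+/2}L_{ab}L^{+/2}$ up to the rank-one structure; using Claim~\ref{lem:norm_is_er} we have $\|L^{+/2}L_{ab}L^{+/2}\|=R_{ab}$ and $w_{ab}^2R_{ab}^2 L^{+/2}L_{ab}L^{+/2}/p_{ab}\preceq (w_{ab}R_{ab}/p_{ab})\cdot\E[\bar X_{ab}]$ since $p_{ab}\le w_{ab}R_{ab}s$ gives $w_{ab}R_{ab}/p_{ab}\le 1/(\text{relevant factor})$; summing and using $\sum\E[\bar X_{ab}]=\Pi$ (with $\|\Pi\|=1$) together with $p_{ab}=\min\{1,w_{ab}R_{ab}s\}$ yields $\bigl\|\sum\E[\bar W_{ab}^2]\bigr\|\le 1/s$. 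For the max term: when $p_{ab}=1$ the matrix $\bar W_{ab}=0$; otherwise $\|\bar X_{ab}\|=w_{ab}R_{ab}/p_{ab}=1/s$, so $\|\bar W_{ab}\|\le 1/s$ deterministically, hence $\E[\max\|\bar W_{ab}\|^k]^{1/k}\le 1/s$. Plugging in with $r=2\log n$:
\[
\Bigl(\E_{\bar X}\bigl[\|\bar X-\Pi\|^k\bigr]\Bigr)^{1/k}\le \sqrt{2e\log n}\cdot s^{-1/2}+4e\log n\cdot s^{-1}\le 3\sqrt{2e\log n}\cdot s^{-1/2}
\]
for $s\ge 2e\log n$ (which holds by the choice of $s$), and $(3\sqrt2)^2\cdot e=18e$, so $\E_X[\|X-\Pi\|^k]\le(18e\log n/s)^{k/2}$. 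Multiplying by $n$ gives the lemma.

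\textbf{Main obstacle.} The delicate point is the moment-matching step: justifying rigorously that $\E_X[\Tr((X-\Pi)^k)]$ depends only on the joint distribution of every $k$-tuple of the $X_{ab}$. One expands $(X-\Pi)^k=\bigl(\sum_{(a,b)}W_{ab}\bigr)^k$ into a sum over sequences $(e_1,\dots,e_k)\in E^k$ of products $W_{e_1}\cdots W_{e_k}$; each sequence mentions at most $k$ distinct edges, so the $k$-wise independence guarantee is exactly enough to make $\E[W_{e_1}\cdots W_{e_k}]$ agree with the fully-independent value. I should also make sure the same reasoning handles the reduction from $\|X-\Pi\|^k$ to $\Tr((X-\Pi)^k)$ cleanly in expectation, which follows since $\|M\|\le\Tr(M)\le n\|M\|$ for PSD $M$ and we only need the upper bound $\|X-\Pi\|^k\le\Tr((X-\Pi)^k)$ pointwise; this is where the factor $n$ in the statement comes from.
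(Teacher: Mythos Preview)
Your proposal is for Lemma~\ref{lem:trace_bound}, not for Theorem~\ref{thm:normbound} itself (which the paper simply cites from \cite{CGT12} without proof). Comparing to the paper's proof of Lemma~\ref{lem:trace_bound}: the overall strategy is identical --- introduce fully independent copies $\hat Z_{ab}$ (your $\bar W_{ab}$), match $k$th moments with the $k$-wise independent ones, apply Theorem~\ref{thm:normbound} to the independent sum, and bound the two ingredients via $\|\hat Z_{ab}\|\le 1/s$ and $\E[\hat Z_{ab}^2]\preceq (w_{ab}/s)\,L^{+/2}L_{ab}L^{+/2}$. Your numerics ($r=2\log n$, the estimate $3\sqrt{2e\log n/s}$, and the resulting $(18e\log n/s)^{k/2}$) match the paper exactly.

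There is, however, a genuine slip in how you thread the moment-matching step. You write
\[
\E_X\bigl[\|X-\Pi\|^k\bigr]=\E_X\bigl[\Tr((X-\Pi)^k)\bigr]=\E_{\bar X}\bigl[\Tr((\bar X-\Pi)^k)\bigr]=\E_{\bar X}\bigl[\|\bar X-\Pi\|^k\bigr],
\]
but the first and last equalities are false: for a PSD matrix $M$ of rank larger than one, $\|M\|\neq\Tr(M)$. The quantity $\|X-\Pi\|^k$ is \emph{not} a degree-$k$ polynomial in the edge indicators, so $k$-wise independence does not directly determine its expectation. What $k$-wise independence \emph{does} determine is the matrix $\E_X[(X-\Pi)^k]$ itself (each entry is degree $k$ in the indicators), hence its trace. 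The paper therefore orders the steps the other way: first moment-match at the matrix level,
\[
\Tr\bigl(\E_X[(X-\Pi)^k]\bigr)=\Tr\bigl(\E_{\bar X}[(\bar X-\Pi)^k]\bigr)=\E_{\bar X}\bigl[\Tr((\bar X-\Pi)^k)\bigr],
\]
and only then apply $\Tr(M)\le n\|M\|$ on the fully independent side to get $\le n\,\E_{\bar X}[\|\bar X-\Pi\|^k]$, where Theorem~\ref{thm:normbound} applies. Your version applies $\Tr\le n\|\cdot\|$ first, on the $k$-wise side, which then obliges you to moment-match the norm --- and that is precisely the step that fails. (Your ``Main obstacle'' paragraph invokes the inequality $\|M\|^k\le\Tr(M^k)$, but that would only let you replace $\E_X[\|X-\Pi\|^k]$ by $\E_{\bar X}[\Tr((\bar X-\Pi)^k)]\le n\,\E_{\bar X}[\|\bar X-\Pi\|^k]$, picking up a second factor of $n$.) The fix is simply to reorder: moment-match the trace first, \emph{then} pass to the norm. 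With that change, your argument is the paper's.
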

\begin{proof}[Proof of \cref{lem:trace_bound}]
Define $Z_{ab}=X_{ab}-w_{ab}\cdot L^{+/2}L_{ab}L^{+/2}$ and let $Z=\sum_{(a,b)\in E}Z_{ab}=X-\Pi$. Our goal is to bound $\E[Z^k]$. To do this, we define $$\set{\hat{Z}_{ab}}_{(a,b) \in E}$$ to be identically distributed to the $Z_{ab}$'s, except that the $\hat{Z}_{ab}$ random variables are truly independent, instead of only $k$-wise. 
More specifically, if we let $\hat{X}_{ab}$ be defined the same way as $X_{ab}$ but this time we sample the edges in Step 4 of $\Sparsify$ truly independently (with marginals $p_{ab}$), then $\hat{Z}_{ab}=\hat{X}_{ab}-w_{ab}\cdot L^{+/2}L_{ab}L^{+/2}$.
Analogously, let $\hat{Z}=\sum_{(a,b)\in E}\hat{Z}_{ab}$.

The key point to notice is that
both $\E[\hat{Z}^k]$ and $\E[Z^k]$ can each be written as a sum of products
of at most $k$ random variables. As the $Z_{ab}$'s are $k$-wise independent, we have: 
\begin{claim}\label{claim:kwise}
$\E[\hat{Z}^k] = \E[Z^k]$.	
\end{claim}

Towards bounding $\Tr(\E[\hat{Z}^k])=\E[\Tr(\hat{Z}^k)]$, we first bound $\E[\|\hat{Z}^{k}\|]$. Then, we use the fact that for all symmetric $n\times n$ matrices $M$ we have $\Tr(M)\leq n\cdot \|M\|$ so 
\[
\Tr(\E[\hat{Z}^k])\leq n\cdot \E[\|\hat{Z}^k\|] \leq n\cdot \E[\|\hat{Z}\|^k],
\]
where the latter inequality is by the submultiplicity of the spectral norm. Since $\hat{Z}=\sum_{(a,b)\in E}\hat{Z}_{ab}$, we can bound the right-hand side by applying \cref{thm:normbound} to the $\hat{Z}_{ab}$'s. To bound the two terms on the right-hand side we make use of the following two claims. 

\begin{claim}
\label{claim:norm}
For every $(a,b) \in E$ and every matrix in the support of $\hat{Z}_{ab}$, it holds that
$\left\|\hat{Z}_{ab}\right\|\leq \frac{1}{s}$. 
\end{claim}
\begin{proof}
Observe that if $p_{ab}=1$ then $\hat{Z}_{ab}=0$. If $p_{ab}<1$,
\begin{align*}
\left\|\hat{Z}_{ab}\right\|&\leq \max\set{\left\|\left(\frac{1}{p_{ab}}-1\right)\cdot w_{ab}\cdot L^{+/2}L_{ab}L^{+/2}\right\|,\left\|-w_{ab}\cdot L^{+/2}L_{ab}L^{+/2}\right\|}\\
&\leq \frac{1}{p_{ab}} \cdot \left\| w_{ab}\cdot L^{+/2}L_{ab}L^{+/2}\right\| \\
&=\frac{w_{ab}\cdot R_{ab}}{p_{ab}} ~~~~~~~~~\text{(\cref{lem:norm_is_er})}\\
&=\frac{1}{s}.
\end{align*}
\end{proof}
\begin{claim}
\label{claim:variance}
For every $(a,b) \in E$
it holds that
$\E\left[\hat{Z}_{ab}^2\right]\preceq \frac{w_{ab}}{s}\cdot L^{+/2}L_{ab}L^{+/2}.$
\end{claim}
\begin{proof}
As $\E[\hat{Z}_{ab}] = 0$ we can write
\begin{align*}
    \E[\hat{Z}_{ab}^{2 }]&=\Var[X_{ab}]\\
    &=\frac{1}{p_{ab}^2}\cdot\Var\left[\mathrm{Ber}(p_{ab})\right]\cdot \left(w_{ab}\cdot L^{+/2}L_{ab}L^{+/2}\right)^2\\
    &=\left(\frac{1}{p_{ab}}-1\right)\cdot \left(w_{ab}\cdot L^{+/2}L_{ab}L^{+/2}\right)^2\\
    &=\left(\frac{1}{p_{ab}}-1\right)\cdot w_{ab}^2\cdot L^{+/2}(e_a-e_b)(e_a-e_b)^{\top}L^{+}(e_a-e_b)(e_a-e_b)^{\top}L^{+/2}\\
    &=\left(\frac{1}{p_{ab}}-1\right)\cdot w_{ab}^2\cdot L^{+/2}(e_a-e_b)R_{ab}(e_a-e_b)^{\top}L^{+/2}\\
    &=R_{ab}\cdot w_{ab}^2\cdot \left(\frac{1}{p_{ab}}-1\right)\cdot L^{+/2}L_{ab}L^{+/2}.
\end{align*}
Note that if $p_{ab}=1$ then the above expectation is 0. If $p_{ab}<1$ then 
\[
R_{ab}\cdot w_{ab}^2\cdot\left(\frac{1}{p_{ab}}-1\right)\cdot L^{+/2}L_{ab}L^{+/2}
\preceq \frac{w_{ab}}{s}\cdot L^{+/2}L_{ab}L^{+/2}.
\]
\end{proof}

Now we can bound the first term on the right-hand side of \cref{thm:normbound}. Together, \cref{claim:variance} and \cref{lem:norm_of_sum} give: 
\[
\left\|\sum_{(a,b)\in E}\E[\hat{Z}_{ab}^{2}]\right\|
\leq \left\|\frac{1}{s}\cdot\sum_{(a,b)\in E} w_{ab}\cdot L^{+/2}L_{ab}L^{+/2}\right\|.
\]
Now, recall that $\sum_{(a,b)\in E}w_{ab}\cdot L^{+/2}L_{ab}L^{+/2}=L^{+/2}LL^{+/2}=\Pi$ and $\|\Pi\|=1$, so
\[
\left\|\sum_{(a,b)\in E}\E[\hat{Z}_{ab}^{2}]\right\|^{1/2}\leq \frac{1}{\sqrt{s}}.
\]

To bound the second term of \cref{thm:normbound}, we apply \cref{claim:norm} to get
\[
\E\left[\max_{(a,b)\in E}\|\hat{Z}_{ab}\|^k\right]^{1/k}\leq \frac{1}{s}.
\]

Set $r=\max\set{k,2\log n} = 2\log n$. Combining the bounds on the two terms together and applying \cref{thm:normbound} gives
\begin{align*}
    \left(\E\left[\left\|\hat{Z}\right\|^{k}\right]\right)^{1/k}&\leq \sqrt{\frac{e\cdot r}{s}}+\frac{2e\cdot r}{s}\\
    &\leq 3\cdot \sqrt{\frac{2e\log n}{s}},
\end{align*}
when $s>e\cdot r$. Raising both sides to the $k$-th power and using the sub-multiplicativity of the spectral norm, we get
\[
   \E\left[\left\|\hat{Z}^{k}\right\|\right]\leq \E\left[\left\|\hat{Z}\right\|^{k}\right] \le  \left(\frac{18e\log n}{s}\right)^{k/2}. 
\]
For all symmetric $n\times n$ matrices $M$ we have $\Tr(M)\leq n\cdot \|M\|$ so by the monotonicity of expectation we get
\[
\Tr(\E[\hat{Z}^k])=\E\left[\Tr(\hat{Z}^{k})\right]\leq n\cdot \left(\frac{18e\log n}{s}\right)^{k/2}.
\]

By \cref{claim:kwise},
\[ 
\E\left[\Tr(Z^{k})\right]=\Tr(\E[Z^k])\leq n\cdot \left(\frac{18e\log n}{s}\right)^{k/2},
\]
and so
\[ \frac{1}{\eps^k} \cdot \Tr(\E[(X-\Pi)^k])\leq \frac{n}{\eps^k}\cdot \left(\frac{18e\log n}{s}\right)^{k/2}.
\]
\end{proof}

Now we can prove the main theorem of this section.
\begin{proof}[Proof of \cref{thm:main}]
From \cref{lem:reduction_to_X}, \cref{lem:XapproxPi}, \cref{lem:move_to_trace} and \cref{lem:prob_of_trace} we have that $\tilde{L}\approx_{\epsilon}L$ except with probability at most 
$$
\frac{1}{\eps^{k}} \cdot \Tr(\E[(X-\Pi)^k]).
$$ 
By \cref{lem:trace_bound} we have 
\[\frac{1}{\eps^k} \cdot \Tr(\E[(X-\Pi)^k])\leq \frac{n}{\eps^k}\cdot \left(\frac{18e\log n}{s}\right)^{k/2}.
\]
The above is upper bounded by $\delta$ whenever
\[
s\geq \frac{18e\log n \cdot n^{2/k}}{\delta^{2/k}\eps^2},
\]
which is how we set $s$ in $\Sparsify$. Combining this with
\cref{claim:item1}, the theorem follows by the union bound.
\end{proof}

\subsection{Sparsification With Approximate Effective Resistances}
\label{sect:approx_er}
Spielman and Srivastava showed that the original version of spectral sparsification through effective resistance sampling (with fully independent sampling and fixing the number of edges in advance) is robust to small changes in the sampling probabilities. In this section we show the same is true of $\Sparsify$. As said, this is useful because more efficient algorithms are known for estimating effective resistances than for computing them exactly, and we will also use this fact 
for our space-bounded algorithm for sparsification in \cref{sec:algorithm}. 

The lemma below says that if we only have small multiplicative approximations to the effective resistances then the guarantees of \cref{thm:main} still hold with a small loss in the sparsity.
\begin{lemma}
\label{lem:approx_er}
Let $G=(V,E,w)$ be an undirected connected weighted graph on $n$ vertices with Laplacian $L$. Let $0 < \eps < 1$,
$0 < \delta < 1/2$ and let $k \le \log n$ be an even integer. For each $(a,b)\in E$, let $\tilde{R}_{ab}$ be such that  
\[
(1-\alpha)\cdot R_{ab}\leq \tilde{R}_{ab}\leq (1+\alpha)\cdot R_{ab},
\]
where $R_{ab}$ is the effective resistance of $(a,b)$ and $0 < \alpha < 1$. Let $\tilde{R}=\{\tilde{R}_{ab}/(1-\alpha)\}_{(a,b)\in E}$. Let $H$ be the output of $\Sparsify (G,\tilde{R},k,\epsilon,\delta)$ and
let $\tilde{L}$ be its Laplacian. Then, with probability at least $1-2\delta$ we have: 
\begin{enumerate}
\item $\tilde{L}\approx_{\epsilon}L$, and,
\item $H$ has $O\left(\frac{1+\alpha}{1-\alpha} \cdot \frac{1}{\delta^{1+2/k}} \cdot \frac{\log n}{\eps^{2}}\cdot n^{1+\frac{2}{k}}\right)$ edges. 
\end{enumerate}
\end{lemma}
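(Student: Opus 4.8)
The plan is to observe that the only place the proof of \cref{thm:main} actually uses the effective resistances is through the sampling probabilities $p_{ab}$, and more precisely through two structural facts: (i) whenever $p_{ab}<1$ one has $p_{ab}\ge w_{ab}R_{ab}s$, and (ii) the inclusion indicators (hence the matrices $X_{ab}$) are $k$-wise independent. The renormalization $\tilde R_{ab}\mapsto \tilde R_{ab}/(1-\alpha)$ is designed precisely so that $\tilde R_{ab}/(1-\alpha)\ge R_{ab}$, which keeps (i) intact, and (ii) is untouched because the sampling is still done via \cref{lem:sampling}. So the first step is to re-run \cref{lem:reduction_to_X}, \cref{lem:XapproxPi}, \cref{lem:move_to_trace} and \cref{lem:prob_of_trace} verbatim — none of these statements mention the resistances — which again reduces item (1) to the bound $\Tr(\E_X[(X-\Pi)^k])\le \delta\eps^k$.

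Next I would revisit \cref{lem:trace_bound} and its two supporting claims. In \cref{claim:norm} the bound $\|\hat Z_{ab}\|\le w_{ab}R_{ab}/p_{ab}$ still holds (it used only $\|L^{+/2}L_{ab}L^{+/2}\|=R_{ab}$ from \cref{lem:norm_is_er}), and when $p_{ab}<1$ the inequality $p_{ab}=w_{ab}\tilde R_{ab}s/(1-\alpha)\ge w_{ab}R_{ab}s$ still gives $\|\hat Z_{ab}\|\le 1/s$. Likewise, in \cref{claim:variance} the identity $\E[\hat Z_{ab}^2]=R_{ab}w_{ab}^2(1/p_{ab}-1)\,L^{+/2}L_{ab}L^{+/2}$ is unchanged, and the chain $R_{ab}w_{ab}(1/p_{ab}-1)\le R_{ab}w_{ab}/p_{ab}\le 1/s$ yields $\E[\hat Z_{ab}^2]\preceq (w_{ab}/s)\,L^{+/2}L_{ab}L^{+/2}$ exactly as before. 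Plugging these into \cref{thm:normbound} as in the original argument gives $\Tr(\E_X[(X-\Pi)^k])\le n(18e\log n/s)^{k/2}$, and the value of $s$ set in $\Sparsify$ makes this at most $\delta\eps^k$; so item (1) fails with probability at most $\delta$.

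For item (2) I would redo the Foster-theorem computation with the enlarged probabilities: $\sum_{(a,b)\in E}p_{ab}\le s\sum_{(a,b)\in E}w_{ab}\tilde R_{ab}/(1-\alpha)\le \tfrac{1+\alpha}{1-\alpha}\,s\sum_{(a,b)\in E}w_{ab}R_{ab}=\tfrac{1+\alpha}{1-\alpha}(n-1)s$ by \cref{claim:foster}, which is $O\!\left(\tfrac{1+\alpha}{1-\alpha}\cdot\tfrac{\log n}{\delta^{2/k}\eps^2}\cdot n^{1+2/k}\right)$; Markov's inequality at threshold $1/\delta$ times the mean then shows item (2) fails with probability at most $\delta$, and a union bound over the two failure events gives the claimed $1-2\delta$. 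The only real obstacle here is bookkeeping — making sure the factor $\tfrac{1+\alpha}{1-\alpha}$ lands in the sparsity bound but \emph{not} in the approximation guarantee, since in the approximation argument we only ever needed a lower bound on $p_{ab}$ (from $\tilde R_{ab}/(1-\alpha)\ge R_{ab}$) and never an upper bound. Everything else is a line-by-line transcription of \cref{sect:exact_er}.
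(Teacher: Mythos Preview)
Your proposal is correct and follows essentially the same route as the paper: both arguments observe that \cref{lem:reduction_to_X}--\cref{lem:prob_of_trace} are insensitive to the sampling probabilities, then re-derive the bounds of \cref{claim:norm} and \cref{claim:variance} using the key inequality $\tilde R_{ab}/(1-\alpha)\ge R_{ab}$ (equivalently $\tilde p_{ab}\ge p_{ab}$), and finally redo the Foster-theorem expectation with the extra factor $\tfrac{1+\alpha}{1-\alpha}$ before applying Markov and a union bound. The paper additionally remarks that only the weaker aggregate upper bound $\sum w_{ab}\tilde R_{ab}\le(1+\alpha)\sum w_{ab}R_{ab}$ is needed for the sparsity estimate, but this is an aside rather than a difference in the argument.
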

\begin{proof}
Using $\tilde{R}_{ab}/(1-\alpha)$ in $\Sparsify$, our sampling probabilities become
\[
\tilde{p}_{ab}=\min\{1,w_{ab}\cdot \tilde{R}_{ab}\cdot s/(1-\alpha)\}.
\]
This means that the expected sparsity of the resulting graph is
\begin{align*}
\sum_{(a,b)\in E}\tilde{p}_{ab}&\leq \sum_{(a,b)\in E}s\cdot w_{ab}\cdot \tilde{R}_{ab}/(1-\alpha)\\
&\leq s\cdot \frac{1+\alpha}{1-\alpha} \cdot \sum_{(a,b)\in E}w_{ab}\cdot R_{ab}\\
&=s\cdot\frac{1+\alpha}{1-\alpha}\cdot (n-1).
\end{align*}
Notice that by feeding $\Sparsify$ $\tilde{R}_{ab}/(1-\alpha)$ rather than $\tilde{R}_{ab}$, we guarantee that the approximate effective resistance is an upper bound on the true effective resistance and hence the approximate sampling probability is an upper bound on the true sampling probability. In particular, this implies that if $p_{ab}=1$ then $\tilde{p}_{ab}=1$. 

 Note that in \cref{lem:reduction_to_X} through \cref{lem:trace_bound}, the expectations of $X_{ab},Z_{ab},$ and $\hat{Z}_{ab}$ do not depend on the sampling probabilities. The sampling probabilities come up when we bound two terms of the concentration bound in \cref{thm:normbound}. However, because of our guarantee that $\tilde{p}_{ab}\geq p_{ab}$, we get the same results. The calculation we used for the first term, given in \cref{claim:variance}, now yields 
\begin{align*}
    \E[\hat{Z}_{ab}^{2}]&=R_{ab}\cdot w_{ab}^2 \cdot\left(\frac{1}{\tilde{p}_{ab}}-1\right)\cdot L^{+/2}L_{ab}L^{+/2}\\
    &\leq \frac{\tilde{R}_{ab}}{1-\alpha}\cdot w_{ab}^2 \cdot\left(\frac{1}{\tilde{p}_{ab}}-1\right)\cdot L^{+/2}L_{ab}L^{+/2}.
\end{align*}
Again, when $\tilde{p}_{ab}=1$, the above is 0, and otherwise we have $\tilde{p}_{ab} = w_{ab}\cdot \tilde{R}_{ab} \cdot s/(1-\alpha)$. Thus,
\[
\E[\hat{Z}_{ab}^{2}]\preceq w_{ab}\cdot L^{+/2}L_{ab}L^{+/2},
\]
which is exactly the bound in \cref{claim:variance}. Similarly, when adapting \cref{claim:norm} to the switch to $\tilde{p}_{ab}$, we incur no loss. We have
\begin{align*}
\|\hat{Z}_{ab}\|&\leq \frac{1}{\tilde{p}_{ab}}\cdot\|w_{ab}\cdot L^{+/2}L_{ab}L^{+/2}\|\\
&=\frac{w_{ab}\cdot R_{ab}}{\tilde{p}_{ab}}\\
&\leq \frac{w_{ab}\cdot R_{ab}}{p_{ab}}\\
&=\frac{1}{s},
\end{align*}
which matches the original bound. 

In fact, this lemma holds with a slightly weaker assumption. Notice that we used the fact that $(1-\alpha)\cdot R_{ab}\leq \tilde{R}_{ab}$ but for the upper bound on the approximate effective resistances, we only need the weaker inequality:
 \[
\sum_{(a,b)\in E}w_{ab}\cdot\tilde{R}_{ab}\leq (1+\alpha)\cdot\sum_{(a,b)\in E}w_{ab}\cdot R_{ab}
\]
for the argument above to go through. 

\end{proof}

Note that we could equivalently define $\Sparsify$ to take approximate sampling probabilities as input (i.e.,  $(1-\alpha) p_{ab}\leq \tilde{p}_{ab}\leq (1+\alpha) p_{ab}$)  rather than $\alpha$-approximate effective resistances and the same lemma applies.

\section{Lower Bounds for Bounded-Independence Sampling}\label{sec:lower-bound}

In this section we prove a lower bound for sampling-based bounded independence
sparsification. Our lower bound will hold even for unweighted, simple, regular graphs in which
all the effective resistances are the same, so for this section, assume $G = (V = [n],E)$ is such a graph. In \cref{sec:upper-bound} we measure sparsity in terms of the number of edges in the graph. We use this measure rather than average degree because in weighted graphs, the degree of a vertex $v$ typically refers to the sum of the weights of the edges incident to $v$, whereas in sparsification algorithms we are trying to minimize the number of edges incident to $v$, regardless of their weight. In this section, we will sometimes refer to average degree rather than number of edges. When we refer to the average degree of a weighted graph, we mean the average number of edges incident to each vertex. For simple, unweighted graphs, these quantities are the same. 

Fix some $\alpha > 0$.
\cref{thm:main} tells us that if we want to sparsify $G$
to within error $\eps$ and expected degree $s = O\left(\frac{\log n}{\eps^{2}}n^{\alpha}\right)$,
we can do so by sampling each edge with probability $p = \frac{s\cdot(n-1)}{|E|}$ in a
$k$-wise independent manner, where $k = \frac{2}{\alpha}$ (more precisely,
the next even integer).\footnote{We used the fact that for every $(a,b) \in E$, $p_{ab} \leftarrow
\min\set{1,R_{ab} s} = R_{ab} s = R\cdot s$, which can be argued as follows. 
When all effective resistances equal $R$, we have $R=(n-1)/|E|$ due to
\cref{claim:foster}. Now, if $G$ has $n \cdot s$ edges or fewer, then it already achieves the desired sparsity so without loss of generality we can assume that $|E|>n\cdot
s$. Hence, $R\cdot s<(n-1) s/n  s<1$. 
Also, the resulting
graph should indeed be a weighted one, however all its weights will be 
the same, $1/p$.
} We now prove that $k \ge 4/3\alpha$ is essential for
such a sampling procedure, at least for constant $\alpha$.

\begin{theorem}[lower bound for spectral sparsification via bounded independence]\label{thm:main-lower} 
Fix $c > 0$.
For every 
$\alpha \le 4/15$, there exist infinitely many $n$'s for
which the following holds.	

There exists a connected graph $G = (V = [n],E)$ whose effective resistances are all equal
and a distribution $\mathcal{D} \sim \B^{|E|}$ that is $k$-wise
independent for $k= \left\lfloor4/3\alpha\right\rfloor$
with marginals $1/2$ that
would fail to sparsify $G$ to within any error $\eps > 0$
and expected degree $s = c \log n \cdot  n^{\alpha_0}$,
where $\alpha_0 \ge (1-2\alpha)\alpha$. 

More specifically, sampling a subgraph of $G$ according to $\mathcal{D}$ would
result in a disconnected graph with probability at least $1-2^{1-n}$.
\end{theorem}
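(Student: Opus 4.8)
The plan is to carry out the construction outlined after \cref{thm3:informal}: exhibit a dense high‑girth graph $G$ together with an Alon--Nussboim-style partition distribution $\mathcal D$, and show that (i) the girth is exactly what forces $\mathcal D$ to be $k$-wise independent with marginals $1/2$ (the marginals that $\Sparsify$ itself uses on $G$), while (ii) $\mathcal D$ almost surely disconnects $G$, so that a sampled subgraph cannot $\eps$-spectrally approximate the connected graph $G$ for any $\eps\in(0,1)$. Note that the theorem is an existence statement, so it is enough to produce \emph{one} such distribution.

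\textbf{Choosing $G$.} Set $k=\lfloor 4/(3\alpha)\rfloor$ — so $k\ge 5$ once $\alpha\le 4/15$, which is the regime in which the high-girth construction we need is available — and let $g=k+1$. Invoking the family of Lazebnik, Ustimenko and Woldar \cite{LUW95} (which realizes $\gamma=4/3$), I would take, for infinitely many $n$, a connected, $d$-regular, \emph{edge-transitive} graph $G=(V=[n],E)$ of girth at least $g$ with $d\ge n^{4/(3g)}+1$ and $d=\Theta(n^{4/(3g)})$; in particular $d\ge 3$ for large $n$. Edge-transitivity and Foster's theorem (\cref{claim:foster}) force every edge to have the same effective resistance $R=(n-1)/|E|=\Theta(1/d)$. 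Hence running $\Sparsify$ on $G$ with the sparsity parameter $s:=1/(2R)=|E|/(2(n-1))=\Theta(d)$ gives $p_{ab}=\min\{1,w_{ab}R_{ab}s\}=Rs=1/2$ for \emph{every} edge. Writing $s=c\log n\cdot n^{\alpha_0}$ pins down $\alpha_0=\tfrac{4}{3g}-o(1)$ (the $o(1)$ absorbing the constant and the $\log n$), and a one-line computation using $g=k+1\le \tfrac{4}{3\alpha}+1$ gives $\tfrac{4}{3g}\ge \tfrac{4\alpha}{4+3\alpha}=(1-2\alpha)\alpha+\tfrac{\alpha^2(5+6\alpha)}{4+3\alpha}$, so $\alpha_0\ge(1-2\alpha)\alpha$ for all sufficiently large $n$.

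\textbf{The distribution, and why it is $k$-wise independent.} Following \cite{AN08}, sample $b\in\B^{V}$ uniformly (equivalently a uniform partition $V=V_0\uplus V_1$ with $V_j=\{v:b_v=j\}$) and, for $e=(u,v)\in E$, set $X_e=1$ iff $b_u=b_v$; let $\mathcal D$ be the law of $(X_e)_{e\in E}$. Each marginal is $\Pr[b_u=b_v]=1/2$. For any $S\subseteq E$ with $|S|\le k$, since $|S|\le k<g\le\mathrm{girth}(G)$ the subgraph $(V(S),S)$ has no cycle (a cycle would use at most $|S|<g$ edges), hence is a forest; therefore the incidence vectors $\{\chi_e\in\F_2^{V(S)}:e\in S\}$ are linearly independent over $\F_2$ (a nontrivial $\F_2$-dependency among incidence vectors is an all-even-degree subgraph, which contains a cycle). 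Consequently the $\F_2$-linear map $\F_2^{V(S)}\to\F_2^{S}$, $b\mapsto(b_u\oplus b_v)_{(u,v)\in S}$, is surjective, so for uniform $b$ the tuple $(b_u\oplus b_v)_{e\in S}$ — and hence $(X_e)_{e\in S}=(1\oplus(b_u\oplus b_v))_{e\in S}$ — is uniform on $\B^{S}$, i.e.\ $X|_S=\Ber((1/2)^{S})$. Thus $\mathcal D$ is $k$-wise independent with marginals $1/2$.

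\textbf{Failure, and the main obstacle.} With probability $1-2\cdot2^{-n}=1-2^{1-n}$ both $V_0$ and $V_1$ are nonempty; in that event no sampled edge joins $V_0$ to $V_1$, so the sampled subgraph $H$ is disconnected. Taking $v=\onesvec_C$ for a connected component $C$ of $H$ gives $\tilde Lv=\zerosvec$ while $v^\top Lv>0$ (as $G$ is connected and $v$ is non-constant), so $\tilde L\not\succeq(1-\eps)L$ and $H$ fails to $\eps$-spectrally approximate $G$ for every $\eps\in(0,1)$ (the relevant range in \cref{def:approx}). I expect the genuinely delicate step to be the first part of ``Choosing $G$'': securing a graph that is simultaneously dense (degree $\Theta(n^{\alpha_0})$), of girth $>k$, connected, and symmetric enough (edge-transitive) to have all effective resistances equal — this is where the extremal-graph-theory input with $\gamma=4/3$ is essential and where $\alpha\le 4/15$ enters. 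The $k$-wise independence argument, by contrast, is short but is the conceptual core, being the single place the girth bound $g>k$ is used.
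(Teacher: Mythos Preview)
Your proposal is correct and follows essentially the same approach as the paper: the same LUW high-girth edge-transitive graph $G$ and the same Alon--Nussboim partition distribution, with $k$-wise independence deduced from the fact that any $\le k$ edges form a forest (your $\F_2$-linear-algebra phrasing is a clean variant of the paper's direct probability count). One small technicality to patch: \cref{thm:luw} requires the girth $g$ to be an \emph{even} integer $\ge 6$, so the paper takes $g=k+1$ or $g=k+2$ (whichever is even) rather than always $g=k+1$; this shifts the $\alpha_0$ calculation only by a harmless constant, and your own inequality $\tfrac{4}{3g}\ge(1-2\alpha)\alpha+\text{(positive)}$ continues to hold with $g=k+2$.
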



%

We note that a disconnected graph fails to be a good spectral sparsifier of a connected graph,
which is implicit in \cref{thm:main-lower}. Formally:
\begin{claim}
\label{lem:cc_spec_approx}
Let $G$ and $\tilde{G}$ be undirected graphs on $n$ vertices with Laplacians $L$ and $\tilde{L}$, respectively. If $G$ is connected and $\tilde{G}$ is disconnected then $\tilde{L}\not\approx_{\epsilon} L$ for any $\epsilon>0$.
\end{claim}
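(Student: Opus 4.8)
The plan is to exploit the fact that a disconnected graph $\tilde G$ has a Laplacian whose kernel is strictly larger than that of a connected graph $G$, and that spectral approximation forces the two kernels to coincide. Concretely, I would first recall the standard fact (stated in \cref{sec:intro-laplacian}) that for an undirected graph the Laplacian is PSD with $0$ as its smallest eigenvalue, and that the multiplicity of the eigenvalue $0$ equals the number of connected components; in particular, $\ker(L) = \Span(\{\onesvec\})$ when $G$ is connected, whereas $\dim \ker(\tilde L) \ge 2$ when $\tilde G$ is disconnected.

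The key step is to observe that $A \approx_\epsilon B$ for PSD matrices implies $\ker(A) = \ker(B)$. Indeed, if $(1-\epsilon)B \preceq A \preceq (1+\epsilon)B$, then for any vector $v$ we have $(1-\epsilon)\, v^\top B v \le v^\top A v \le (1+\epsilon)\, v^\top B v$; since $B$ is PSD, $v^\top B v = 0$ iff $v \in \ker(B)$, and likewise for $A$, so the sandwich forces $v^\top A v = 0 \iff v^\top B v = 0$, i.e.\ $\ker(A) = \ker(B)$. (This uses $0 < \epsilon$; any $\epsilon > 0$ works, and we do not even need $\epsilon < 1$.)

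Now I would apply this with $A = \tilde L$ and $B = L$. If $\tilde L \approx_\epsilon L$ for some $\epsilon > 0$, then $\ker(\tilde L) = \ker(L)$, which has dimension $1$ since $G$ is connected. But $\tilde G$ is disconnected, so $\ker(\tilde L)$ has dimension at least $2$ — a contradiction. Hence $\tilde L \not\approx_\epsilon L$ for every $\epsilon > 0$, which is exactly the claim. Alternatively, and even more concretely: pick a connected component $S \subsetneq V$ of $\tilde G$ and let $v = \mathbbm{1}_S$ be its indicator vector; then $v^\top \tilde L v = 0$ (no edges of $\tilde G$ leave $S$) while $v^\top L v > 0$ (since $G$ is connected, some edge of $G$ crosses the cut $(S, V\setminus S)$), so the inequality $(1-\epsilon)\, v^\top L v \le v^\top \tilde L v$ fails for every $\epsilon > 0$.

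I do not anticipate a real obstacle here — the statement is essentially a one-line consequence of the definition of $\approx_\epsilon$ together with the component-counting property of Laplacian kernels. The only mild care needed is to make sure the argument does not secretly assume $\epsilon < 1$ (it does not: the sandwich $(1-\epsilon) v^\top L v \le v^\top \tilde L v$ with $v^\top \tilde L v = 0$ and $v^\top L v > 0$ is violated for any $\epsilon > 0$, even $\epsilon \ge 1$, since $1 - \epsilon$ would have to be $\le 0$ while $v^\top L v > 0$; if $\epsilon \ge 1$ this is not immediately a contradiction from that inequality alone, so the cleanest route is the kernel-equality argument, or to additionally invoke the upper bound $v^\top \tilde L v \le (1+\epsilon) v^\top L v$ together with a second vector — but the kernel argument sidesteps all of this). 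I would therefore present the kernel-dimension version as the main proof.
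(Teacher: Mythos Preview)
Your proposal is correct and essentially matches the paper's proof: the paper also takes the indicator vector $v = \mathbbm{1}_C$ of a connected component $C$ of $\tilde G$, observes $v^\top \tilde L v = 0$ while $v \notin \ker(L)$, and concludes the quadratic forms cannot multiplicatively approximate one another --- exactly your ``alternative'' route, with your ``main'' kernel-dimension argument being the same idea one level of abstraction up.

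One small correction to your self-assessment: the kernel-equality argument does \emph{not} actually sidestep the $\epsilon \ge 1$ issue. From $(1-\epsilon)B \preceq A \preceq (1+\epsilon)B$ with $\epsilon \ge 1$ you only get $\ker(B) \subseteq \ker(A)$ (the lower bound is vacuous), which is the wrong containment for a contradiction here. In fact the claim as literally stated is false for $\epsilon \ge 1$ (take $\tilde L = 0$). The paper's own proof has the same implicit restriction to $\epsilon < 1$, which is the only regime of interest anyway, so this is not a gap in your argument relative to the paper.
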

We give a proof of \cref{lem:cc_spec_approx} in \cref{proofs-appendix}.

%
%

\subsection{Moore-Like Graphs With a Given Girth}\label{sec:dense}

Toward proving \cref{thm:main-lower}, we will need, for every bounded-independence parameter $k$, an infinite
family of graphs satisfying certain properties.
Recall that the \emph{girth} of a graph $G$ is the length
of the shortest cycle in $G$. We will need an infinite family of girth-$g$ graphs having large degree. Formally:
\begin{definition}
Given $\gamma > 0$ and $g \colon \N \rightarrow \N$,
a family of graphs $\set{G_{i} = (V_{i}=[n_i],E_i)}_{i \in \N}$ is
$(g,\gamma)$\emph{-Moorish} if for every $i \in \N$, $G_{i}$ is connected, has girth
$g(n_i)$ and is $d$-regular for $d \ge n_{i}^{\gamma/g(n_i)}+1$.
\end{definition}
The problem of finding such families of graphs, or even
proving their existence in some regime of parameters, has been
widely studied in extremal graph theory. A simple counting argument (\cite{ES63}, see also \cite{Bol13}) shows
that $(g,\gamma)$-Moorish families of graphs can only exist when $\gamma\leq 2$: 
\begin{lemma}[the Moore bound, see, e.g., \cite{Bol13}]
Every $d$-regular graph of girth $g$ on $n$ vertices satisfies $n \ge 2 \cdot \frac{(d-1)^{\frac{g}{2}}-1}{d-2}$.
\end{lemma}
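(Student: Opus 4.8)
The plan is to carry out the classical ball-counting (``Moore'') argument, splitting on the parity of the girth $g$. Fix a $d$-regular graph $G$ on $n$ vertices of girth $g$, and assume $d \ge 3$ (for $d \le 2$ the graph is a disjoint union of cycles of length $\ge g$, and reading $\frac{(d-1)^{g/2}-1}{d-2}$ as the geometric sum $\sum_{0\le i<g/2}(d-1)^i$ makes the bound immediate). Write $g=2r$ in the even case and $g=2r+1$ in the odd case. In both cases I would root a breadth-first search --- at an edge $\{u,v\}$ when $g$ is even, at a single vertex $v$ when $g$ is odd --- and let $N_i$ denote the $i$-th level set (vertices at distance exactly $i$ from the root edge, resp.\ root vertex). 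The crux is that the first several level sets form a tree, so their sizes are forced, and they are pairwise disjoint subsets of $V$.

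For the even case $g = 2r$, I would argue that for $1 \le i \le r-1$ every vertex of $N_i$ has exactly one neighbor in $N_{i-1}$ and its remaining $d-1$ neighbors in $N_{i+1}$, and that $N_i$ is an independent set: a second neighbor at level $\le i-1$ would close a cycle of length $\le 2i \le 2(r-1) < g$, and an edge inside $N_i$ would close a cycle of length $\le 2i+1 \le 2r-1 < g$, both excluded. Hence $|N_0| = 2$ and $|N_i| = 2(d-1)^i$ for $1 \le i \le r-1$, so
\[
n \;\ge\; \sum_{i=0}^{r-1}|N_i| \;=\; 2\sum_{i=0}^{r-1}(d-1)^i \;=\; 2\cdot\frac{(d-1)^{r}-1}{d-2} \;=\; 2\cdot\frac{(d-1)^{g/2}-1}{d-2},
\]
which is exactly the claim. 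For the odd case $g = 2r+1$, rooting at a vertex $v$, the same short-cycle reasoning gives that every vertex of $N_i$ for $1 \le i \le r$ has a unique neighbor in $N_{i-1}$, and every vertex of $N_i$ for $0 \le i \le r-1$ sends all its other neighbors ($d-1$ of them, or $d$ when $i=0$) into $N_{i+1}$; thus $|N_i| = d(d-1)^{i-1}$ for $1 \le i \le r$ and
\[
n \;\ge\; 1 + \sum_{i=1}^{r} d(d-1)^{i-1} \;=\; 1 + d\cdot\frac{(d-1)^{r}-1}{d-2}.
\]
To finish I would check that this implies the stated (weaker) bound $n \ge 2\cdot\frac{(d-1)^{r}\sqrt{d-1}-1}{d-2}$: clearing the positive factor $d-2$ reduces it to $d(d-1)^{r} \ge 2(d-1)^{r}\sqrt{d-1}$, i.e.\ $d \ge 2\sqrt{d-1}$, i.e.\ $(d-2)^2 \ge 0$, which holds.

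I do not expect a serious obstacle --- this is the standard Moore-bound computation --- but two points need care. First, in each level count one must be careful about the critical level at which a cycle of length exactly $g$ first becomes permissible: the tree structure is guaranteed only strictly below it (the counting runs up to level $r-1$ in the even case, and uniqueness of parents only up to level $r$ in the odd case), and extending the count one level too far would be incorrect. Second, the inequality as stated is the even-girth form, so for odd $g$ one must first derive the sharper odd-girth bound and then deduce the claim from $(d-2)^2 \ge 0$; handling this reconciliation together with the degenerate $d \le 2$ case is the only non-mechanical part of the argument.
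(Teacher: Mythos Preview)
Your argument is correct and is exactly the classical ball-counting proof that the paper alludes to by citing \cite{ES63,Bol13}; the paper itself does not supply a proof of this lemma, only the reference. Your handling of the odd-girth case via the inequality $d\ge 2\sqrt{d-1}$, equivalently $(d-2)^2\ge 0$, is the right way to reconcile the sharper odd Moore bound with the even-form statement as written.
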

Still, no families with $\gamma$ approaching $2$ for arbitrary girths are known.
The Ramanujan graphs of Lubotzky, Phillips and Sarnak \cite{LPS88} were shown
to obtain $\gamma \ge 4/3$ by Biggs and Boshier \cite{BB90}. 
Lazebnik, Ustimenko and Woldar \cite{LUW95} slightly improved upon
 \cite{LPS88} in the lower-order terms, but more importantly for us,
the family they construct consists of \emph{edge-transitive graphs}. 

\begin{theorem}[\cite{LUW95}]\label{thm:luw}
For every prime power $d$ and even integer $g \ge 6$
there exists an explicit simple, edge-transitive graph with 
$n \le 2d^{g-\left\lfloor \frac{g-3}{4} \right\rfloor -4}$ vertices and girth $g$.
In particular, for every prime power $d$
there exists a $(g,\gamma=4/3)$-Moorish family of edge-transitive $d$-regular graphs,
where $\Image(g) = \set{6,8,\ldots}$.
\end{theorem}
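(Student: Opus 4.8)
The first sentence of this theorem is precisely the main construction of \cite{LUW95}, so the plan is to quote it and then carry out the short calculation that recasts it in the ``$(g,\gamma)$-Moorish'' language needed for \cref{thm:main-lower}. Concretely, for a prime power $d$ and an integer $k\ge 2$ one forms the bipartite graph $D(k,d)$ on the vertex set $\F_d^k \sqcup \F_d^k$ --- a ``point'' copy and a ``line'' copy of $\F_d^k$ --- where a point $(p_1,\dots,p_k)$ and a line $[l_1,\dots,l_k]$ are adjacent exactly when a fixed system of $k-1$ bilinear equations is satisfied (each equation has the shape $l_i\pm p_i = l_1 p_{i-1}$ or $l_i \pm p_i = p_1 l_{i-1}$, with $l_1$ and $p_1$ alternating as the active multiplier). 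This graph is simple and $d$-regular on $2d^k$ vertices, and the affine and multiplicative symmetries of the defining system make $D(k,d)$ edge-transitive, as shown in \cite{LUW95}.

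I would then import the two genuinely nontrivial facts of \cite{LUW95} as black boxes. First, the girth of $D(k,d)$ is at least $k+4$, and at least $k+5$ when $k$ is odd --- this is the technical heart, proved by showing that a short closed walk forces the incidence equations to degenerate. Second, $D(k,d)$ is typically disconnected: there is a tuple of $\F_d$-valued invariants that is constant on each connected component, and its level sets split $D(k,d)$ into $d^{\,t}$ isomorphic pieces $CD(k,d)$, with $t=\lfloor (k-2)/4\rfloor$, each still $d$-regular, still of girth at least $k+4$ (resp.\ $k+5$), and now on only $2d^{\,k-t}$ vertices. Since the edge-transitive automorphism group of $D(k,d)$ permutes its components, the stabilizer of one component acts transitively on that component's edges, so each $CD(k,d)$ is connected and edge-transitive. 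Now take $k$ as large as the girth bound allows for the target even girth $g$ --- namely $k=g-5$ when $g\ge 8$ and $k=2$ when $g=6$ --- so that $t=\lfloor(g-3)/4\rfloor$ and $k-t = g-\lfloor(g-3)/4\rfloor-4$; the resulting $CD(k,d)$ is a simple, edge-transitive, $d$-regular graph of girth $g$ on $n\le 2d^{\,g-\lfloor(g-3)/4\rfloor-4}$ vertices, which is the claim. The one place where care is needed is exactly this index-matching across the odd/even split of the girth offset.

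For the ``in particular'' clause I would fix a prime power $d$, set $q=d$, and let $G_g := CD(k(g),d)$ as $g$ ranges over $\set{6,8,10,\dots}$, giving an infinite family of connected, $d$-regular, edge-transitive graphs with $\mathrm{girth}(G_g)=g$ and $n_g\le 2d^{\,g-\lfloor(g-3)/4\rfloor-4}$. It remains only to verify the Moore-type inequality $d\ge n_g^{\,4/(3g)}+1$ demanded by the definition of $(g,4/3)$-Moorish. Substituting the vertex bound, it suffices that $\tfrac{4}{3}\bigl(g-\lfloor(g-3)/4\rfloor-4\bigr)\le g$ with enough slack to swallow the factor $2^{4/(3g)}$ and the additive $1$; using $\lfloor(g-3)/4\rfloor\ge (g-6)/4$ this reduces to $g-16\le 4\lfloor(g-3)/4\rfloor$, which holds with room to spare for every even $g\ge 6$. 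Hence $\set{G_g}_{g\in\set{6,8,\dots}}$ is a $(g,4/3)$-Moorish family of edge-transitive $d$-regular graphs, as claimed. The sole real obstacle is the girth lower bound of \cite{LUW95} itself --- everything else is bookkeeping --- together with the component count, which is what upgrades the naive bound $n=2d^{g-5}$ (that would only yield $\gamma=1$, far from the $\gamma<2$ allowed by the Moore bound) to the $\gamma=4/3$ we need.
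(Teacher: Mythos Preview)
The paper does not prove \cref{thm:luw}; it is quoted as a black box from \cite{LUW95}, and even the ``in particular'' clause is asserted without any derivation. So there is no proof in the paper to compare against. Your proposal goes well beyond what the paper does: you sketch the $D(k,d)$ construction, import the girth and component-count results of \cite{LUW95}, and then carry out the index-matching and the exponent calculation that justifies the $(g,4/3)$-Moorish claim. That is exactly the right plan, and it would make the statement self-contained.

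Two small points. First, your intermediate claim ``$t=\lfloor(g-3)/4\rfloor$'' is off by one for $g\ge 8$: with $k=g-5$ you get $t=\lfloor(k-2)/4\rfloor=\lfloor(g-7)/4\rfloor=\lfloor(g-3)/4\rfloor-1$, not $\lfloor(g-3)/4\rfloor$. Your final expression $k-t=g-\lfloor(g-3)/4\rfloor-4$ is nonetheless correct, so this is a harmless slip in the bookkeeping. Second, the ``enough slack'' clause in your Moorish verification does not literally hold for very small $d$ (e.g.\ $d=2$, $g=6$ gives $n\le 8$ but $(d-1)^{3g/4}=1$), so the ``for every prime power $d$'' in the paper's statement is already a slight overreach; in the application (\cref{thm:main-lower}) $d$ is taken large, so this does not matter, but if you are writing out a proof you should either restrict to $d$ above some threshold or note the caveat.
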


Intuitively, in an edge-transitive graph the local environment
of every edge (i.e., the vertices and edges adjacent to it) looks the same.
More formally, an edge-transitive graph is one in which any two edges are 
equivalent under some element of its automorphism group.
As the
computation of the effective resistance is not affected by an automorphism,
we can conclude the following claim.

\begin{claim}
Let $G = (V,E)$ be an unweighted edge-transitive graph. Then, 
for every two edges $e = (a,b)$ and $e'=(a',b')$ in $E$
it holds that $R_{ab} = R_{a'b'}$.
\end{claim}

\subsection{The Lower Bound Proof}

We next prove our main result for this section, showing 
that Moorish edge-transitive graphs cannot be sparsified
via bounded-independence edge sampling when $k$ is too small. Our proof can be seen as an extension of an argument by
Alon and Nussboim \cite{AN08}, who studied the bounded independence
relaxation of the usual Erd\H{o}s-R\'{e}nyi random graph model, where
it is only required that the distribution of any subset of
$k$ edges is independent. They
provide upper and lower bounds on the minimal $k$ required
to maintain properties that are satisfied by a truly random graph, and in
particular they show that there exists a pairwise independent distribution $\mathcal{D}$ over edges with marginals $1/2$ such that a random graph sampled from $\mathcal{D}$ is disconnected almost surely.

As a warm-up, we extend the argument in \cite{AN08} and show that \emph{3-wise} independence
also does not suffice, even for the special case of sparsifying the complete graph.
\begin{lemma}\label{lem:three-wise}
Let $G = (V = [n],E)$ be the complete graph. There exists a distribution $\mathcal{D} \sim \B^{|E|}$ that
is 3-wise independent with marginals $1/4$ such that sampling a subgraph of $G$ according
to $\mathcal{D}$ would result in a disconnected graph with probability at least $1-2^{1-n}$.
\end{lemma}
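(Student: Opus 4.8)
The plan is to construct $\mathcal{D}$ by a random vertex-partition trick in the spirit of Alon--Nussboim, but using a random \emph{four}-coloring of the vertices rather than a two-coloring, so that the marginals come out to $1/4$ instead of $1/2$. Concretely: pick a uniformly random function $\chi\colon V \to \Z_4$ (equivalently, assign each vertex two independent uniform bits), and for each edge $e=(u,v)$ include $e$ in the sample if and only if $\chi(u)=\chi(v)$. Since for any fixed edge the pair $(\chi(u),\chi(v))$ is uniform over $\Z_4\times\Z_4$, we have $\Pr[e \text{ included}] = 1/4$, giving the desired marginals.

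The first step is to verify $3$-wise independence. Take any three distinct edges $e_1=(u_1,v_1)$, $e_2=(u_2,v_2)$, $e_3=(u_3,v_3)$. Because at most $3$ edges span at most $6$ vertices but a set of $3$ edges on $\le 6$ vertices that is \emph{not} a triangle and not a longer cycle is a forest (a graph on $v$ vertices with fewer than $v$ edges contains no cycle only if it is a forest; three edges form a cycle only if they are a triangle), I should split into cases. If the three edges form a forest, the events $\{\chi(u_i)=\chi(v_i)\}$ depend on disjoint ``constraints'' that can be satisfied independently — more carefully, one conditions on the colors outside a spanning structure and checks the three indicator bits are jointly uniform over $\{0,1\}^3$ with the right product distribution. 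The one case to handle with care is when $e_1,e_2,e_3$ form a triangle on vertices $\{a,b,c\}$: here $\mathbbm{1}[\chi(a)=\chi(b)]$, $\mathbbm{1}[\chi(b)=\chi(c)]$, $\mathbbm{1}[\chi(a)=\chi(c)]$ are \emph{not} independent for a $2$-coloring (any two determine the third), but for a $4$-coloring one checks directly from the $64$ equally likely colorings of $(\chi(a),\chi(b),\chi(c))$ that each of the $8$ sign patterns of the three indicators that is consistent (namely all patterns with an even number of ``differ'' among a consistent assignment) occurs — in fact all eight patterns $000,100,010,001,110,\dots$ need to be checked, and the count shows the joint distribution is exactly the product $\Ber(1/4)^{\otimes 3}$. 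This case analysis is the main obstacle: it is the only place where the choice of modulus $4$ (rather than $2$) is essential, and one must confirm the triangle does not create a dependency.

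Having established $3$-wise independence, the second step is the disconnectedness bound. Under $\chi$, the sampled graph is contained in the disjoint union of the four color classes $\chi^{-1}(0),\dots,\chi^{-1}(3)$ (an edge is kept only if both endpoints share a color), so the sampled graph is disconnected unless all but at most one color class is empty, i.e., unless $\chi$ is non-constant on at most\dots{} precisely, the sampled graph is connected only if at most one color class is nonempty, which forces $\chi$ to be constant. The probability that $\chi$ is constant is $4\cdot 4^{-n} = 4^{1-n} \le 2^{1-n}$. Hence the sampled graph is disconnected with probability at least $1-2^{1-n}$, and by Claim~\ref{lem:cc_spec_approx} it fails to be an $\epsilon$-spectral sparsifier of $G$ for every $\epsilon>0$. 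I would close by noting that $\mathcal{D}$ is supported on $\B^{|E|}$ as required and is samplable from $2n$ random bits, so it is a genuine bounded-independence distribution.
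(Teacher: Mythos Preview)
Your construction has a fatal gap in the triangle case. With $\chi\colon V\to\Z_4$ uniform and the rule ``include $(u,v)$ iff $\chi(u)=\chi(v)$,'' the three indicators on a triangle $\{a,b,c\}$ are \emph{not} independent, regardless of the number of colors: equality is transitive, so $\chi(a)=\chi(b)$ and $\chi(b)=\chi(c)$ force $\chi(a)=\chi(c)$. Concretely,
\[
\Pr\bigl[\chi(a)=\chi(b)=\chi(c)\bigr] \;=\; \frac{4}{64} \;=\; \frac{1}{16},
\]
whereas $3$-wise independence with marginals $1/4$ requires this probability to be $(1/4)^3=1/64$. Equivalently, the pattern ``first two edges present, third absent'' has probability $0$, not $(1/4)^2(3/4)$. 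Enlarging the color set from $2$ to $4$ fixes the marginals but does nothing for this dependency; your parenthetical about patterns with an even number of differences suggests you may have had an XOR-type construction in mind, but the monochromatic-edge rule you actually wrote down cannot be $3$-wise independent on any graph containing a triangle.

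The paper's proof avoids this obstruction by an asymmetric construction: partition $V=V_0\uplus V_1$ uniformly, then on $V_0$ sample $\mathcal{G}(V_0,1/2)$ and on $V_1$ sample a uniformly random complete bipartite graph $\mathcal{B}(V_1)$, with no edges between $V_0$ and $V_1$. A triangle can appear only if all three vertices land in $V_0$ (since $\mathcal{B}(V_1)$ is triangle-free), which already costs $2^{-3}$; the independent $\mathcal{G}(\cdot,1/2)$ coins on $V_0$ supply the remaining $2^{-3}$, yielding exactly $4^{-3}$. That extra layer of genuine edge-randomness on one side is precisely what decouples the triangle indicators. If you want to salvage a coloring-based approach, you will need to inject some additional independent randomness beyond the vertex coloring.
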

\begin{proof}
We first set some notations. Let $\mathcal{G}(A,p)$ be the usual Erd\H{o}s-R\'{e}nyi model, in which
each edge between two vertices in $A$ is included in the graph with probability $p$. Let
$\mathcal{B}(A)$ be the natural distribution over complete bipartite graphs: Choose a partition
$A = A_1 \uplus A_2$ uniformly at random and include all edges between $A_1$ and $A_2$. 

We construct $\mathcal{D} \sim \B^{|E|}$ as follows. Choose a partition $[n] = V_0 \uplus V_1$
uniformly at random. On $V_0$, draw a graph from $\mathcal{G}(V_0,1/2)$ and on
$V_1$, draw a graph from $\mathcal{B}(V_1)$. Clearly, sampling $G'$ according to $\mathcal{D}$
would result in a disconnected graph with probability at least $1-2^{1-n}$, so what is left to
show is that $\mathcal{D}$ is $3$-wise independent with marginals $1/4$.

To see this, let us first consider a single edge $e = (a,b) \in E$. Notice that each edge
in a graph sampled from $\mathcal{B}$ appears with probability $1/2$. Thus,
$$
\Pr[\mathcal{D}(e) = 1] = \Pr[a,b \in V_0] \cdot \frac{1}{2} + \Pr[a,b \in V_1] \cdot \frac{1}{2} = \frac{1}{4}.
$$ 
Next, fix a set $A \subseteq E$ of $t \in \set{2,3}$ edges and note that we can assume without loss of generality
that these edges form either a path or a triangle (for $t=3$), as disjoint paths will occur independently. 
If $A$ forms a path, then similarly,
$$
\Pr[A \in \mathcal{D}] = \Pr[\cup A \subseteq V_0] \cdot 2^{-t} + 
\Pr[\cup A \subseteq V_1] \cdot 2^{-t} = 2 \cdot 2^{-t-1} \cdot 2^{-t} = 4^{-t},
$$
which is what we want. If $A$ forms a triangle, then using the fact that a bipartite graph is triangle-free,
$$
\Pr[A \in \mathcal{D}] = \Pr[\cup A \subseteq V_0] \cdot \frac{1}{8} = 4^{-3},
$$
concluding the proof.
\end{proof}

The above lemma shows that one cannot sparsify the complete graph via $(k=3)$-wise
independent edge sampling.  
For a general $k$, we indeed need to resort to Moore-like graphs.
\begin{proof}[Proof of \cref{thm:main-lower}]
Recalling that $k= \left\lfloor4/3\alpha\right\rfloor$,
let $g = k+1$ or $g = k + 2$, 
whichever is even. Set $d_0$ to be the first prime power larger than
$$1+\max\set{2^{\frac{6^2}{\alpha^8}},(2c)^{\frac{6}{\alpha^2}}}.$$
By \cref{thm:luw}, for every prime power $d \ge d_0$ there exists
$n = n(g,d)$ and a girth-$g$, edge-transitive, $d$-regular graph
graph $G = (V=[n],E)$. Note that by our choice of parameters, indeed $g \ge 6$.
From here onwards, fix such a $d$ and $n = n(g,d)$, observing that $\set{n(g,d)}_{d \ge d_0}$
is infinite.

To get marginals $1/2$,
we will choose $\alpha_0$ so that $c\log n \cdot n^{\alpha_0} = d/2$,
and so we have, using the fact that $d \ge n^{4/3g}+1$,
$$
\alpha_0 \ge \frac{4}{3g} - \frac{\log(2c\log n)}{\log n} \ge \frac{4}{\frac{4}{\alpha}+6} - \frac{\log(2c\log n)}{\log n}
\ge \left( 1 - \frac{3\alpha}{2} \right)\alpha - \frac{\log(2c\log n)}{\log n}.
$$
As $n \le (d-1)^{\frac{3g}{4}}$ and 
$n \ge 2 \cdot \frac{(d-1)^{\frac{g}{2}}-1}{d-2} \ge (d-1)^{\frac{g}{2}-1}$, the latter being the Moore bound, we have
$$
\frac{\log(2c\log n)}{\log n} \le \frac{\log(2c)+\log\frac{3g}{4}+\log\log(d-1)}{\left( \frac{g}{2}-1 \right)\log(d-1)} \le \frac{\log(2c)}{\log(d-1)} +2 \cdot \frac{\log\log(d-1)}{\log(d-1)} \le \frac{\alpha^2}{2},
$$
where we used $\frac{\log(2c)}{\log(d-1)} \le \frac{\alpha^2}{6}$, $\log\frac{3g}{4} \le \frac{g}{2}-1$ and $\frac{\log\log(d-1)}{\log(d-1)} \le \frac{\alpha^2}{6}$.
Thus, overall,
$\alpha_0 \ge (1-2\alpha)\alpha$.

We now give a $k$-wise independent distribution with marginals $1/2$ which fails to yield a good spectral sparsifier for $G$, namely it will be disconnected with high probability.

To do so, construct $\mathcal{D} \sim \set{0,1}^{|E|}$ as follows. Choose
a partition $[n] = V_0 \uplus V_1$ uniformly at random. Each random partition
gives rise to an element $D \sim \mathcal{D}$ in which
for $e = (u,v) \in E$,
 $D(e) = 1$ (i.e.,
the edge $e$ is chosen to survive) if and only if either
$u,v \in V_0$ or $u,v \in V_1$.
\begin{claim}
The distribution $\mathcal{D}$ is $k$-wise independent with marginals $1/2$.
\end{claim}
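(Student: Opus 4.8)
The plan is to exploit the large girth of $G$: any set of at most $k$ edges of $G$ spans a forest, and on a forest the events ``both endpoints of an edge lie in the same part of the partition'' behave exactly like independent fair coins.

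First I would set up notation. Encode the random partition $[n] = V_0 \uplus V_1$ as a vector of i.i.d.\ uniform bits $(b_v)_{v \in V} \in \B^{V}$, where $b_v = i$ means $v \in V_i$. Then for an edge $e = (u,v) \in E$ we have $D(e) = \mathbbm{1}[b_u = b_v] = 1 \oplus b_u \oplus b_v$. Since $u \neq v$, the bit $b_u \oplus b_v$ is uniform on $\{0,1\}$, so every marginal of $\mathcal{D}$ is exactly $1/2$.

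Next, fix a set $S \subseteq E$ with $|S| \le k$. Because $g$ was chosen to be $k+1$ or $k+2$, we have $k \le g-1$; and since every cycle of $G$ uses at least $g$ edges while $|S| \le g-1 < g$, the set $S$ is acyclic, i.e.\ $(V(S),S)$ is a forest with exactly $|S|$ edges. It therefore suffices to prove the following statement: whenever $F \subseteq E$ spans a forest, the random variables $\{b_u \oplus b_v\}_{(u,v) \in F}$ are mutually independent and uniform on $\{0,1\}$. Granting this, $\{D(e)\}_{e \in S} = \{1 \oplus b_u \oplus b_v\}_{(u,v) \in S}$ is a tuple of $|S|$ mutually independent uniform bits, so $\mathcal{D}|_{S} \equiv \Ber((1/2)\cdot\onesvec)$, which is precisely $k$-wise independence with marginals $1/2$.

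To prove the forest statement I would work over $\F_2$: root each tree of $F$ at an arbitrary vertex, and consider the $\F_2$-linear map $\Phi$ sending $(b_v)_{v \in V(F)}$ to the tuple consisting of the bit $b_r$ of each root $r$ together with the value $b_u \oplus b_v$ for each edge $(u,v) \in F$. Since $F$ is a forest, (number of roots) $+\,|F| = |V(F)|$, so $\Phi$ maps an $\F_2$-space of dimension $|V(F)|$ to one of the same dimension; it is injective because every $b_v$ can be recovered from the root bits by propagating outward along the unique tree path and adding the edge parities. Hence $\Phi$ is a bijection, so it pushes the uniform distribution on $\F_2^{V(F)}$ to the uniform distribution on the target, making its coordinates---in particular the edge parities---mutually independent and uniform. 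I expect the only genuinely delicate point to be the reduction step, namely that $|S| \le k$ forces acyclicity, which is exactly where the choice $g \in \{k+1, k+2\}$ is used; after that reduction everything is routine $\F_2$ linear algebra, and the fact that $S$ (or $F$) may be disconnected is harmless since a forest is a disjoint union of trees and $\Phi$ handles all components at once.
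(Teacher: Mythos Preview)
Your proof is correct and rests on the same key observation as the paper: since $g > k$, any set of at most $k$ edges forms a forest, and on a forest the ``same-part'' events are independent. The paper's execution differs slightly: it verifies only the all-ones event $\Pr[\forall e \in A : D(e)=1] = 2^{-|A|}$ for every $A$ with $|A|\le k$ (which pins down the joint law by inclusion--exclusion), reducing first to the case where $A$ is a single connected component---hence a tree on $|A|+1$ vertices---so that ``all edges survive'' becomes ``all $|A|+1$ vertices lie in the same part,'' with probability $2\cdot 2^{-(|A|+1)} = 2^{-|A|}$. Your $\F_2$-bijection argument handles all components at once and establishes full independence of the edge bits directly; this is a bit more work but is self-contained, whereas the paper's counting argument is shorter but leaves the step from ``all-ones probabilities match'' to ``full $k$-wise independence'' implicit.
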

\begin{proof}
Let $A \subseteq E$ be a set of $t \le k$ edges of $G$. We want
to show that $\Pr[A \in \mathcal{D}] = 2^{-t}$. First, similar to \cref{lem:three-wise}, note that
we can assume without loss of generality that $A$ is a connected component,
since whenever $A_1$ and $A_2$ are over disjoint sets of 
vertices, $\Pr[A_{1} \cup A_{2} \in \mathcal{D}] = \Pr[A_{1} \in \mathcal{D}] \cdot 
\Pr[A_{2} \in \mathcal{D}]$. As the girth of $G$ is larger than
$t$, it must be the case that $A$ is a tree.  

In such a case, where $A$ contains no cycles, $\Pr[A \in \mathcal{D}]$ is equal to the probability that all
$t+1$ vertices in $A$ belong to the same partition, which is $2 \cdot 2^{-(t+1)} = 2^{-t}$.
\end{proof}
By the way $\mathcal{D}$ was constructed, it is clear that 
sampling $G'$ according to $\mathcal{D}$ would result in a disconnected
graph with probability $1-2^{1-n}$, meaning that $G'$ almost surely does
not $\eps$-approximate $G$, for \emph{any} $\eps$. 
\end{proof}

We again stress that by the work in \cref{sec:upper-bound}, we know that \emph{any}
$k$-wise independent distribution over the edges of $G$ with marginals $s\cdot(n-1)/|E| = O\left(d\cdot n/\epsilon^2\cdot |E|\right)$ for $k = \left\lceil 2/\alpha\right\rceil$ would produce an $\epsilon$-spectral sparsifier with expected degree $O(d)$ with high constant probability.

The above also implies that any improvement upon Moorish families of
edge-transitive graphs will improve our lower bound. 
Assuming the existence of a $(g,\gamma = 2)$-Moorish family of 
edge-transitive graphs
we are able to show that the result of \cref{sec:upper-bound} is
essentially tight. Furthermore, if we could take $g$ to be an arbitrary
function of the number of vertices $n$, a similar results would hold
even for non constant $\alpha  = \Omega(1/\log n)$.

%

\section{Spectral Sparsifiers in Deterministic Small Space}\label{sec:algorithm}
In this section we show that $\Sparsify$ can be derandomized space efficiently.

\begin{theorem}[deterministic small-space sparsification]
\label{thm:derand}
Let $G$ be an undirected, connected, weighted graph on $n$ vertices with Laplacian $L$. There is a deterministic algorithm that, when given $G$, an even integer $k$ and $0 < \epsilon < 1$ outputs a weighted graph $H$ with Laplacian $\tilde{L}$ satisfying:
\begin{enumerate}
\item $\tilde{L}\approx_{\epsilon}L$, and,
\item $H$ has $O\left(\frac{\log n}{\eps^{2}} n^{1+2/k}\right)$ edges. 
\end{enumerate}
The algorithm runs in space $O(k \log (N\cdot w)+\log (N\cdot w) \log\log (N\cdot w))$, where $w=\wmax/\wmin$ is the ratio of the maximum and minimum edge weights in $G$ and $N$ is the bitlength of the input.
\end{theorem}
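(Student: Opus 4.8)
The plan is to derandomize $\Sparsify$ by combining three ingredients: (i) the $k$-wise independent analysis of \cref{thm:main} (or rather \cref{lem:approx_er}, since we will only have approximate effective resistances), (ii) a space-efficient Laplacian solver that allows us to approximate effective resistances and spectral quantities, obtained by extending \cite{MRSV17} to arbitrary weighted graphs, and (iii) a verification procedure that checks, in small space, whether a candidate $\tilde L$ is an $\eps$-spectral approximation of $L$. First I would fix the success-probability parameter to a constant, say $\delta = 1/4$, so that by \cref{lem:approx_er} at least (most) of the $2^{r}$ random strings $\rho$ for the $k$-wise independent sample space of \cref{lem:sampling} yield a graph $H_\rho$ that is simultaneously an $\eps$-spectral sparsifier of $G$ and has $O(n^{1+2/k}\log n/\eps^2)$ edges. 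Since $r = O(k\log(N\cdot w))$, we can deterministically enumerate over all $\rho\in\B^r$ in space $O(k\log(N\cdot w))$, and for each $\rho$ reconstruct $H_\rho$ on the fly using the explicit construction of \cref{lem:sampling}, which itself runs in $O(k\log(N\cdot w))$ space.

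The first technical step is computing the approximate effective resistances $\tilde R_{ab}$ needed as input to $\Sparsify$. Here I would invoke the extended version of the \cite{MRSV17} Laplacian solver (which, as the paper notes, must be developed for arbitrary weighted graphs rather than multigraphs): given $L$ it lets us approximate $R_{ab} = (e_a-e_b)^\top L^+(e_a-e_b)$ to within a $(1\pm\alpha)$ factor for, say, $\alpha = 1/2$, in space $O(\log(N\cdot w)\log\log(N\cdot w))$. By \cref{lem:approx_er} this approximation (after the rescaling by $1/(1-\alpha)$ described there) only costs a constant factor in the sparsity, which is absorbed into the $O(\cdot)$. Also, to apply \cref{lem:sampling} we truncate the marginals $p_{ab}$ to $t = O(\log(N\cdot w))$ bits; since $L$'s smallest nonzero eigenvalue is at least $\mathrm{poly}(\wmin/n)$ by \cref{claim:norm_of_pinv_stversion}, the effective resistances and hence the $p_{ab}$'s have enough precision that this truncation is harmless, and it keeps $r$ within the claimed bound. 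All of this fits in the target space because space-bounded computations compose with only additive overhead.

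The second technical step — which I expect to be the main obstacle — is the verification: after enumerating, we need to output the \emph{first} $\rho$ such that $H_\rho$ is a genuine $\eps$-sparsifier, but \cref{lem:approx_er} only promises that it is \emph{most of the time}, so we must actually test each candidate. As the techniques section indicates, the test reduces to bounding the spectral radius of $M = \big((\tilde L - L)L^+/\eps\big)^2$: if $\lm(M)\le 1$ then $\tilde L\approx_\eps L$, and if $\tilde L$ is not a $2\eps$-approximation then $\lm(M)$ is bounded away from $1$ (this equivalence, up to the factor-$2$ slack, is exactly what \cref{sec:verify} establishes, using that $\ker(\tilde L - L)\supseteq\ker(L)$ on connected $G$ so the quadratic-form condition is captured by $L^{+/2}(\tilde L - L)L^{+/2}$ and \cref{lem:mult_both_sides}). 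To distinguish these two cases in small space, I would compute $\Tr(M^\ell)$ for $\ell = O(\log n)$: if $\lm(M)\le 1$ then $\Tr(M^\ell)\le n$, whereas if $\lm(M) \ge 1 + \Omega(\eps)$ then $\Tr(M^\ell) = \Omega((1+\eps)^\ell) \gg n$ for a suitable $\ell$, since $M$ is PSD so its trace is at least its largest eigenvalue raised to the $\ell$-th power. Each entry of $M$ and hence $\Tr(M^\ell)$ can be approximated within the needed accuracy using the Laplacian solver of \cite{MRSV17} applied to $L$ (to implement multiplication by $L^+$), combined with the standard fact that iterated matrix products and traces can be computed in space logarithmic in the number of factors times the space to compute one factor; the number of factors is $O(\log n)$ and each factor costs $O(\log(N\cdot w)\log\log(N\cdot w))$ space, so the whole check runs within the budget. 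The delicate points are (a) ensuring the numerical precision throughout the iterated product is high enough that the approximation error in $\Tr(M^\ell)$ is smaller than the gap between the ``$\le n$'' and ``$\gg n$'' cases — this requires carrying $O(\ell\cdot\log(N\cdot w))$ bits, still within $O(k\log(N\cdot w)+\log(N\cdot w)\log\log(N\cdot w))$ space for $k = O(\log n)$ — and (b) checking that a good $\rho$ actually passes: for such $\rho$ we have $\tilde L\approx_\eps L$, hence $\lm(M)\le 1$, hence $\Tr(M^\ell)\le n$, so the first candidate we accept is indeed a valid $\eps$-sparsifier and also has the required edge count (which we can verify directly by counting, since $H_\rho$ is explicitly reconstructible).

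Putting it together: enumerate $\rho\in\B^r$; for each, compute $\tilde R_{ab}$ via the weighted-graph extension of \cite{MRSV17}, run $\Sparsify$'s sampling with the $k$-wise space of \cref{lem:sampling}, obtain $H_\rho$; verify $\Tr(M^\ell)\le n$ and that $H_\rho$ has $O(n^{1+2/k}\log n/\eps^2)$ edges; output the first $H_\rho$ passing both tests. Correctness is immediate from \cref{lem:approx_er} (existence of a good $\rho$, which will pass) plus the verification soundness (anything passing is an $\eps$-sparsifier, after replacing $\eps$ by $\eps/2$ throughout to account for the factor-$2$ slack). The space bound is the sum $O(k\log(N\cdot w))$ for enumeration and sample-space evaluation, plus $O(\log(N\cdot w)\log\log(N\cdot w))$ for the solver-based subroutines, which is exactly the claimed $O(k\log(N\cdot w)+\log(N\cdot w)\log\log(N\cdot w))$.
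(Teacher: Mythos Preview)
Your proposal is correct and follows essentially the same approach as the paper: enumerate the $k$-wise independent sample space of \cref{lem:sampling} using approximate effective resistances from the (weighted extension of) the \cite{MRSV17} solver, and verify each candidate by thresholding the trace of a power of $M=((\tilde L-L)L^{+}/\eps)^{2}$, absorbing the verification slack by running $\Sparsify$ with a slightly smaller error parameter. The only cosmetic differences are that the paper compares $\Tr(\hat M^{t})$ against $\Tr(\hat M)$ with $t=\lceil \log\Tr(\hat M)/\log(1+\alpha)\rceil$ (and uses $\hat\eps=4\eps/5$, $\alpha=9/16$) rather than against $n$ with $\ell=O(\log n)$, and it handles the error from approximating $L^{+}$ via a clean spectral-radius sandwich $\bigl(\tfrac{1-\gamma}{1+\gamma}\bigr)^{2}\lm(M)\le\lm(\hat M)\le\lm(M)$ rather than the entrywise precision argument you sketch.
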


We use the standard model of space-bounded computation. The machine has a read-only input tape, a constant number of  read/write work tapes, and a write-only output tape. We say the machine runs in space $s$ if throughout the computation, it only uses $s$ total tape cells on the work tapes. The machine may write outputs to the output tape that are larger than $s$ (in fact as large as $2^{O(s)}$) but the output tape is write-only. We use the following fact about the composition of space-bounded algorithms.
\begin{lemma}
\label{prop:composition}
Let $f_1$ and $f_2$ be functions that can be computed in space $s_1(n),s_2(n)\geq \log n$, respectively, and $f_2$ has output of length $\ell_1(n)$ on inputs of size $n$. Then $f_2\circ f_1$ can be computed in space 
\[
O(s_2(\ell_1(n)) + s_1(n)).
\]
\end{lemma}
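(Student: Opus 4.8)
The plan is to use the standard ``recompute on demand'' technique for composing space-bounded algorithms: rather than materializing the intermediate string $y = f_1(x)$ on a work tape (which could be as long as $\ell_1(n) \gg s_1(n)$), we simulate the machine $M_2$ computing $f_2$, and every time $M_2$ attempts to read the $j$-th symbol of its input tape, we pause $M_2$ and invoke the machine $M_1$ computing $f_1$ from scratch, discarding all of its output symbols except the $j$-th one, which we feed to $M_2$. This way, $M_1$'s output tape is treated as a write-only stream and never stored in full.

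The key steps, in order, are as follows. First, set up the simulation: allocate a work-tape region of size $O(s_2(\ell_1(n)))$ to hold the entire configuration of $M_2$ (its work tapes plus head positions), noting that $M_2$ runs on an input of length $\ell_1(n)$ so its space bound is $s_2(\ell_1(n))$; also maintain a counter holding the current position $j$ of $M_2$'s input head, which takes $O(\log \ell_1(n)) = O(s_2(\ell_1(n)))$ bits since $\ell_1(n) \le 2^{O(s_2(\ell_1(n)))}$ and, crucially, $\ell_1(n) \le 2^{O(s_1(n))}$ as well because $M_1$ runs in space $s_1(n)$, so $O(\log \ell_1(n)) = O(s_1(n))$ too. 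Second, run $M_2$ step by step from this stored configuration; on a non-input-reading step just update the configuration in place. Third, on an input-reading step where $M_2$ wants symbol $j$: start a fresh run of $M_1$ on $x$ using a separate work-tape region of size $O(s_1(n))$, count output symbols as they are produced, and when the $j$-th one appears, record it, abort $M_1$, reclaim that region, and resume $M_2$. Fourth, when $M_2$ halts, its output is exactly $(f_2 \circ f_1)(x)$, written to the genuine output tape.

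The total work-tape usage is $O(s_2(\ell_1(n)))$ for $M_2$'s configuration and input-head counter, plus $O(s_1(n))$ for the re-execution of $M_1$, plus $O(\log n)$ bookkeeping, and since both $s_1, s_2 \ge \log n$ these absorb the bookkeeping, giving the claimed bound $O(s_2(\ell_1(n)) + s_1(n))$. The one point requiring care — and the closest thing to an obstacle — is confirming that every counter and head-position used in the simulation fits within the stated budget: the counter over $M_1$'s output length is bounded by $2^{O(s_1(n))}$, hence fits in $O(s_1(n))$ space, and $M_2$'s input-head position is bounded by $\ell_1(n)$, hence fits in $O(\log \ell_1(n)) \le O(s_1(n))$ space as well (using that $M_1$ produces at most $2^{O(s_1(n))}$ output symbols); everything else is a routine universal-simulation argument.
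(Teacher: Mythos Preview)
The paper does not actually prove this lemma; it states it as a standard fact about space-bounded computation (``We use the following fact about the composition of space-bounded algorithms'') and moves on. Your proposal supplies exactly the standard ``recompute on demand'' argument that underlies this fact, and it is correct: simulate $M_2$ while maintaining its configuration and input-head position, and whenever $M_2$ needs the $j$-th symbol of $f_1(x)$, rerun $M_1$ from scratch and count output symbols up to $j$. Your accounting for the counters is right---in particular, the observation that $\ell_1(n) \le 2^{O(s_1(n))}$ (since a space-$s_1(n)$ machine runs for at most $2^{O(s_1(n))}$ steps and hence produces at most that many output symbols) is what ensures the output-index counter fits in $O(s_1(n))$ space, and the hypothesis $s_2(m) \ge \log m$ applied at $m = \ell_1(n)$ ensures $M_2$'s input-head position fits in $O(s_2(\ell_1(n)))$ space. (Note also that the lemma as stated in the paper has a typo: it should be $f_1$, not $f_2$, whose output has length $\ell_1(n)$; you correctly read it this way, as the conclusion $O(s_2(\ell_1(n)) + s_1(n))$ forces.)
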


The natural way to derandomize $\Sparsify$ would be to iterate
over all elements of the corresponding $k$-wise independent sample space.
More formally, given $\set{p_{ab}}_{(a,b)\in E}$, let $I_{ab}$ be the indicator
random variable that is $1$ if and only if edge $(a,b)$ is chosen. If the
$I_{ab}$'s are $k$-wise independent so that $\Pr[I_{ab} = 1] = p_{ab}$ (or some good 
approximation of $p_{ab}$), 
we are guaranteed to succeed with nonzero probability. Hence, at least one assignment
to the $I_{ab}$'s taken from the $k$-wise independent is guaranteed to work. From \cref{sec:limited-independence}
we know the sample space is small enough that we can afford to enumerate over all elements in it.
Towards proving \cref{thm:derand}, there are still three issues to consider:
\begin{enumerate}
	\item Approximating the effective resistances $R_{ab}$ for every $(a,b) \in E$, space efficiently.
	Fortunately, we can do this with high accuracy using the result of Murtagh, Reingold, Sidford, and Vadhan \cite{MRSV17} for
	approximating the pseudoinverse of a Laplacian, which we state shortly.
	\item Verifying that a given set of random choices in $\Sparsify$ provides a sparse and accurate approximation to the input graph. The sparsity requirement is easy to check. To check that $\tilde{L} \approx_{\epsilon} L$, we devise a verification algorithm that uses the algorithm of \cite{MRSV17}. The details are given in \cref{lem:verification}.
	\item The Laplacian solver of \cite{MRSV17} only works for multigraphs (graphs with integer edge weights) and we want an algorithm that works for general weighted graphs. To fix this, we extend the work of \cite{MRSV17} by giving a simple reduction from the weighted case to the multigraph case. The details can be found in \cref{app:weighted_solver}.
\end{enumerate}

\subsection{Algorithm for Approximating Effective Resistances}
\label{sect:approx_er_alg}
As noted above, a key ingredient in our deterministic sparsification algorithm is a deterministic nearly logarithmic space algorithm for approximating the pseudoinverse of an undirected Laplacian. 

\begin{theorem}[\cite{MRSV17}]
\label{lem:solver}
Given an undirected, connected multigraph $G$ with Laplacian $L=D-A$ and $\epsilon>0$, there is a deterministic algorithm that computes a symmetric PSD matrix $\tilde{L^+}$ such that $\tilde{L^+}\approx_{\epsilon}L^{+}$, and uses space $O(\log N\cdot\log\log\frac{N}{\eps})$, where $N$ is the bitlength of the input (as a list of edges).
\end{theorem}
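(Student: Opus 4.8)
The theorem is quoted verbatim from \cite{MRSV17}, so the plan is to reconstruct the derandomized-square Laplacian solver of that paper. The strategy has four parts: (i) reduce computing $L^{+}$ to pseudo-inverting $I-M$ for a symmetric doubly-stochastic matrix $M$; (ii) express $(I-M)^{+}$ through an iterated ``square-and-project'' identity built from the derandomized square \cite{RV05,Reingold08}; (iii) truncate once the relevant spectral gap becomes constant and finish with a short Neumann series; and (iv) show the accumulated spectral error stays below $\eps$ while the whole computation runs in $O(\log N\cdot\log\log(N/\eps))$ space.

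\textbf{Normalization and the square-and-project recursion.} Adding self-loops changes neither the Laplacian nor connectivity, so first make $G$ a non-bipartite $d$-regular multigraph with $d=\poly(N)$; then $D=dI$, $L=d(I-M)$ with $M=A/d$ symmetric, $M\onesvec=\onesvec$, $\norm{M}\le 1$, $-1\notin\mathrm{spec}(M)$, and $L^{+}=\tfrac1d(I-M)^{+}$, with no irrational arithmetic. Put $\mathcal M_{0}=M$ and let $\mathcal M_{i}$ be the normalized adjacency matrix of the derandomized square of the multigraph of $\mathcal M_{i-1}$ using an explicit constant-degree expander $H_{i}$; the spectral analysis of the derandomized square gives $I-\mathcal M_{i}\approx_{\eps_{i}}I-\mathcal M_{i-1}^{2}$ with $\eps_{i}$ set by the quality of $H_{i}$, and all these matrices share the kernel $\Span(\onesvec)$ since derandomized squaring preserves connectivity and non-bipartiteness. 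The engine is the commuting identity $I-\mathcal M_{i-1}^{2}=C_{i-1}^{\top}(I-\mathcal M_{i-1})C_{i-1}$ with $C_{i-1}:=(I+\mathcal M_{i-1})^{1/2}$, which on $\onesvec^{\perp}$ gives $(I-\mathcal M_{i-1})^{+}=C_{i-1}^{\top}(I-\mathcal M_{i-1}^{2})^{+}C_{i-1}$. Since pseudo-inverses of $\approx$-close PSD matrices with a common kernel are $\approx$-close, $(I-\mathcal M_{i-1}^{2})^{+}\approx_{O(\eps_{i})}(I-\mathcal M_{i})^{+}$; pushing this through the congruence by $C_{i-1}$ via \cref{lem:mult_both_sides} yields $(I-\mathcal M_{i-1})^{+}\approx_{O(\eps_{i})}C_{i-1}^{\top}(I-\mathcal M_{i})^{+}C_{i-1}$.

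\textbf{Truncation and the residual.} Iterating $k$ times, using transitivity of $\approx$ (errors adding) and preservation under the outer congruences, gives $(I-M)^{+}\approx_{O(\sum_{i<k}\eps_{i})}\bigl(\prod_{i=0}^{k-1}C_{i}\bigr)^{\!\top}(I-\mathcal M_{k})^{+}\bigl(\prod_{i=0}^{k-1}C_{i}\bigr)$ in the natural nested order. A connected $N$-bit multigraph has spectral gap $1-\lambda_{2}(M)\ge 1/\poly(N)$, and the derandomized square grows this gap geometrically, so $k=O(\log N)$ steps make $\lambda_{2}(\mathcal M_{k})$ a fixed constant, whereupon $I-\mathcal M_{k}$ is well-conditioned on $\onesvec^{\perp}$ and $(I-\mathcal M_{k})^{+}$ is $\eps$-approximated by the truncated Neumann series $\sum_{j=0}^{J}\mathcal M_{k}^{\,j}$ on $\onesvec^{\perp}$, $J=O(\log(1/\eps))$, which we evaluate by $O(\log\log(N/\eps))$ rounds of repeated squaring. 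Substituting and taking $\tilde{L^{+}}$ to be the symmetrization of $\tfrac1d$ times the resulting product gives a PSD matrix with $\tilde{L^{+}}\approx_{O(\eps)}L^{+}$; rescaling $\eps$ by a constant proves the claim. (The square roots $C_{i}$, with spectrum in $[0,\sqrt{2}\,]$, are computed to the needed precision by $O(\log\log(N/\eps))$ Newton iterations, or avoided by the unsymmetrized formulation that suffices when only a vector $x\approx L^{+}b$ is wanted.)

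\textbf{Space, and the main obstacle.} An entry of $\mathcal M_{i}$ is produced by recursively unwinding the rotation maps of the $i$ nested derandomized squares; exactly as in Reingold's connectivity algorithm this is navigated in $O(\log N)$ space rather than $O(i\log N)$ — intermediate rotation maps are recomputed, not stored, and every $\mathcal M_{i}$ stays an $O(\log N)$-bit object since the derandomized square of a regular multigraph has bounded degree. Stacking the Neumann series and the Newton iterations on top adds $O(\log\log(N/\eps))$ more levels of composition, each over $\poly(N)$-bit intermediate data with $O(\log(N/\eps))$-bit arithmetic, so by \cref{prop:composition} the total is $O(\log N\cdot\log\log(N/\eps))$. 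I expect this accounting to be the main obstacle: a black-box composition of the $\Theta(\log N)$ squaring levels would already cost $\Theta(\log^{2}N)$, so the proof must exploit the recompute-don't-store structure of iterated graph products to collapse those levels, while simultaneously choosing the error budgets $\eps_{i}$, the number of squarings $k$, the truncation $J$, and the expander parameters consistently so the accumulated spectral error never exceeds $\eps$.
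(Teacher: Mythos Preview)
The paper does not prove \cref{lem:solver}; it is imported wholesale from \cite{MRSV17} and used as a black box. The only description the present paper gives of that algorithm is the sketch in \cref{app:weighted_solver}: first obtain a constant-factor spectral approximation to $L^{+}$ via iterated derandomized squares in space $O(\log N\cdot\log\log N)$, then boost to an $\eps$-approximation with Richardson iterations. So there is no in-paper proof to match against, only this two-line summary and the original reference.

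Your reconstruction captures the architecture of \cite{MRSV17} and would work, but it diverges from the actual argument in two places. First, \cite{MRSV17} unrolls the recursion through the Peng--Spielman identity
\[
(I-M)^{+}=\tfrac12\bigl(\Pi+(I+M)(I-M^{2})^{+}(I+M)\bigr),
\]
which contains only the factors $(I+\mathcal M_{i})$ and never their square roots. Your congruence by $C_{i}=(I+\mathcal M_{i})^{1/2}$ is algebraically equivalent on these commuting operators, but it forces you to approximate matrix square roots---a complication you yourself flag and that the original avoids entirely. Second, \cite{MRSV17} achieves the $\eps$-dependence by Richardson iteration applied \emph{after} a constant-error approximation, rather than by truncating a Neumann series inside the recursion; both are valid, but Richardson separates the two stages cleanly, which is exactly the decomposition the present paper exploits in \cref{app:weighted_solver}. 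One small gap in your Neumann route: $\sum_{j\le J}\mathcal M_{k}^{\,j}$ does not converge on $\Span(\onesvec)$ (it equals $(J{+}1)$ there), and the outer $C_{i}$'s do not kill that component, so you must insert a projection $\Pi$ explicitly before claiming the output is PSD and annihilates $\onesvec$.

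Your space accounting is correct, and you put your finger on the one genuinely delicate point: black-box composition of the $\Theta(\log N)$ squaring levels would cost $\Theta(\log^{2}N)$, and the savings come from the rotation-map recursion of \cite{RV05}, which lets each $\mathcal M_{i}$ be accessed in $O(\log N)$ space so that the $O(\log N)$-fold matrix product collapses to $O(\log N\cdot\log\log N)$ via \cref{lem:matrixprod}. That is precisely the mechanism in \cite{MRSV17}.
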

Note that the space complexity above assumes that the multigraph is given as a list of edges. If we instead think of parallel edges as integer edge weights, then $N$ should be replaced by $N\cdot \wmax$ where $\wmax$ is the maximum edge weight in $G$ since an edge of weight $w$ gets repeated $w$ times in the edge-list representation. To work with general weighted graphs, we extend the result of \cite{MRSV17}.

\begin{lemma}[small space laplacian solver for weighted graphs]
\label{lem:weighted_solver}
Given an undirected connected weighted graph $G=(V,E,w)$ with Laplacian $L=D-A$, and $0 < \epsilon < 1$, there exists a deterministic algorithm that computes a symmetric PSD matrix $\tilde{L^+}$ such that $\tilde{L^+}\approx_{\epsilon}L^{+}$, and uses space $O(\log (N\cdot w)\log\log (N\cdot w/\eps))$, where $w=\wmax/\wmin$ is the ratio of the maximum and minimum edge weights in $G$ and $N$ is the bitlength of the input.
\end{lemma}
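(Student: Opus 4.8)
The plan is to reduce the weighted case to the multigraph case of \cref{lem:solver} by rounding the edge weights to integers. It is convenient to work with $L'=L/\wmin$, whose edge weights lie in $[1,w]$ for $w=\wmax/\wmin$; since $(L')^{+}=\wmin\cdot L^{+}$ and spectral approximation is preserved when both matrices are scaled by the same positive constant, it suffices to compute $\widetilde{(L')^{+}}\approx_{\eps}(L')^{+}$ and then output $\tilde{L^{+}}:=\wmin^{-1}\widetilde{(L')^{+}}$. Now fix an integer precision parameter $M$, polynomial in $N$, $w$ and $1/\eps$, and for each edge $(a,b)$ replace its weight $w'_{ab}\in[1,w]$ by the positive integer $w''_{ab}=\lfloor M\cdot w'_{ab}\rceil$; let $G''$ be the resulting multigraph with Laplacian $L''=\sum_{(a,b)\in E}w''_{ab}L_{ab}$. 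Since $w''_{ab}\ge 1$ for every edge, $G''$ stays connected, so $L'$ and $L''$ are PSD with the same kernel $\Span(\onesvec)$.

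The next step is the error analysis. Because $M\cdot w'_{ab}\ge M$, we have $\abs{w''_{ab}-M w'_{ab}}\le \tfrac12\le \tfrac{1}{2M}\cdot M w'_{ab}$, hence $(1-\tfrac{1}{2M})\,w'_{ab}L_{ab}\preceq \tfrac1M w''_{ab}L_{ab}\preceq (1+\tfrac1{2M})\,w'_{ab}L_{ab}$ for every edge; summing these PSD inequalities over $E$ gives $\tfrac1M L''\approx_{1/(2M)}L'$. Since the two matrices share the kernel $\Span(\onesvec)$, inversion reverses the Loewner order on its orthogonal complement, and so this transfers to the pseudoinverses: $M\cdot(L'')^{+}\approx_{O(1/M)}(L')^{+}$. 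I would then run the multigraph solver of \cref{lem:solver} on $G''$ with error parameter $\eps/2$ to obtain a symmetric PSD matrix $\widetilde{(L'')^{+}}\approx_{\eps/2}(L'')^{+}$, and output $\tilde{L^{+}}:=\tfrac{M}{\wmin}\,\widetilde{(L'')^{+}}$. Composing the two spectral approximations, and choosing $M$ large enough that the $O(1/M)$ term is at most $\eps/4$, yields $\tilde{L^{+}}\approx_{\eps}L^{+}$; $\tilde{L^{+}}$ is symmetric PSD since $\widetilde{(L'')^{+}}$ is.

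For the space bound, computing $G''$ from $G$ (rounding each weight to $O(\log(Nw/\eps))$ bits) uses $O(\log(Nw/\eps))$ space, and the edge-list description of $G''$ has bitlength polynomial in $N$, $w$ and $M$, hence $2^{O(\log(Nw/\eps))}$; by \cref{lem:solver} the solver then runs in space $O(\log(Nw/\eps)\cdot\log\log(Nw/\eps))$ on $G''$, and scaling the solver's output by the known constant $M/\wmin$ is free on the write-only output tape. Composing these stages via \cref{prop:composition} gives the stated complexity; keeping the $\log(1/\eps)$ out of the leading factor requires a bit more care (taking $M=\poly(Nw)$ suffices once $\eps$ is not smaller than $1/\poly(Nw)$, the regime that is meaningful for us given the eigenvalue bounds of \cref{claim:norm_of_pinv_stversion}). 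The main obstacle is precisely this tension in choosing $M$: it must be large enough that the relative weight error $1/(2M)$ is $o(\eps)$, yet small enough that the blown-up edge-list representation of the multigraph $G''$ still has a logarithmic-length description, which is exactly what makes the reduction to \cref{lem:solver} run in nearly logarithmic space.
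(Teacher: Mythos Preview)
Your reduction (round the weights to integers and invoke \cref{lem:solver}) is the same as the paper's, and your error analysis is actually cleaner: after normalizing by $\wmin$ every weight is at least $1$, so rounding $M\,w'_{ab}$ to the nearest integer gives a per–edge multiplicative error of at most $1/(2M)$, and summing the PSD inequalities immediately yields $\tfrac1M L''\approx_{1/(2M)}L'$. The paper instead bounds the entrywise error $\|2^{-t}L'-L\|_1\le 2n\,2^{-t}$ and divides by the spectral–gap lower bound of \cref{claim:norm_of_pinv_stversion} to turn this into a spectral approximation; your argument avoids that detour entirely.

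The gap is the one you yourself flag at the end, and your proposed patch does not close it. You need $1/(2M)\le c\eps$ to compose the two approximations, but you also need the edge-list of $G''$ to have bitlength $\poly(N\cdot w)$ (independent of $\eps$) so that \cref{lem:solver} runs in space $O(\log(Nw)\log\log(Nw/\eps))$ rather than $O(\log(Nw/\eps)\log\log(Nw/\eps))$. Your restriction to $\eps\ge 1/\poly(Nw)$ is not justified by \cref{claim:norm_of_pinv_stversion} and does not prove the lemma as stated. The paper resolves this by \emph{decoupling}: it fixes the rounding precision to $2^t=\poly(Nw)$, independent of $\eps$, so the rounding only gives a constant-factor approximation ($\delta=1/6$). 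It then calls only the \emph{first phase} of the algorithm behind \cref{lem:solver} on the multigraph, which produces a constant-error approximate pseudoinverse in space $O(\log(Nw)\log\log(Nw))$ with no $\eps$ involved, obtaining a $c$-approximation to $L^{+}$ with $c<1/2$. Finally it applies Richardson iterations \emph{directly with the original weighted Laplacian} $L$ to boost this constant approximation to an $\eps$-approximation; Richardson is just $O(\log(1/\eps))$ rounds of matrix products and is oblivious to whether the entries of $L$ are integers, so this last step costs $O(\log(Nw)\log\log(Nw/\eps))$ and confines all the $\eps$-dependence to the $\log\log$ factor. If you insert this Richardson step after your (cleaner) rounding analysis with $M=\poly(Nw)$, your proof goes through for all $0<\eps<1$ and matches the stated bound.
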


A proof of \cref{lem:weighted_solver} can be found in \cref{app:weighted_solver}. \cref{lem:weighted_solver}  immediately gives an algorithm for computing strong multiplicative approximations to effective resistances.
\begin{lemma}
\label{lem:comp_ers}
Let $G=(V,E,w)$ be an undirected, connected, weighted graph and let $R_{ab}$ be the effective resistance of $(a,b)\in E$. There is an algorithm that computes a real number $\tilde{R}_{ab}$ such that 
\[
(1-\epsilon)\cdot R_{ab} \leq\tilde{R}_{ab}\leq(1+\epsilon)\cdot R_{ab}
\]
and uses space $O(\log (N\cdot w)\cdot\log\log\frac{N\cdot w}{\eps})$, where $w=\wmax/\wmin$ is the ratio of the maximum and minimum edge weights in $G$ and $N$ is the bitlength of the input.
\end{lemma}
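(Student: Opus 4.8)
The plan is to reduce directly to the weighted Laplacian solver of \cref{lem:weighted_solver}. Recall from \cref{sec:intro-laplacian} that the effective resistance admits the closed form $R_{ab} = (e_a-e_b)^{\top}L^{+}(e_a-e_b)$. So I would first run the algorithm of \cref{lem:weighted_solver} on $G$ with accuracy parameter $\epsilon$ to obtain a symmetric PSD matrix $\widetilde{L^{+}}$ with $\widetilde{L^{+}} \approx_{\epsilon} L^{+}$, and then output
\[
\tilde{R}_{ab} \;=\; (e_a-e_b)^{\top}\widetilde{L^{+}}(e_a-e_b) \;=\; \widetilde{L^{+}}_{aa} - \widetilde{L^{+}}_{ab} - \widetilde{L^{+}}_{ba} + \widetilde{L^{+}}_{bb}.
\]

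For correctness, note that $\widetilde{L^{+}} \approx_{\epsilon} L^{+}$ means $(1-\epsilon)L^{+} \preceq \widetilde{L^{+}} \preceq (1+\epsilon)L^{+}$, so plugging the vector $v = e_a - e_b$ into these quadratic forms gives
\[
(1-\epsilon)\cdot R_{ab} \;=\; (1-\epsilon)\,v^{\top}L^{+}v \;\le\; v^{\top}\widetilde{L^{+}}v \;=\; \tilde{R}_{ab} \;\le\; (1+\epsilon)\,v^{\top}L^{+}v \;=\; (1+\epsilon)\cdot R_{ab},
\]
which is exactly the desired guarantee. (Since $L^{+}$ is PSD we have $R_{ab} \ge 0$, so the one-sided multiplicative bounds are meaningful.)

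For the space bound, the only computation beyond invoking the solver is extracting four designated entries of $\widetilde{L^{+}}$ and forming an alternating sum of them, which is computable in space logarithmic in the bitlength of its input. The solver of \cref{lem:weighted_solver} runs in space $O(\log (N\cdot w)\log\log (N\cdot w/\epsilon))$ and its output has length at most $2^{O(\log (N\cdot w)\log\log (N\cdot w/\epsilon))}$, so composing it with the logspace post-processing via \cref{prop:composition} yields total space $O(\log (N\cdot w)\log\log (N\cdot w/\epsilon)) + O(\log(N\cdot w/\epsilon)) = O(\log (N\cdot w)\cdot\log\log\tfrac{N\cdot w}{\epsilon})$.

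There is no genuinely hard step here; the only point requiring a little care is the composition bookkeeping — we must not store the full matrix $\widetilde{L^{+}}$ but rather treat the solver as a subroutine that recomputes the needed entries on demand, so that the solver's space and the combining step's space add (rather than the solver's \emph{output length} entering the space bound additively). This is precisely what \cref{prop:composition} provides.
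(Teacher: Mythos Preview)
Your proposal is correct and is essentially the same argument as the paper's: invoke \cref{lem:weighted_solver} to get $\widetilde{L^{+}}\approx_{\epsilon}L^{+}$, plug $v=e_a-e_b$ into the quadratic-form inequality to obtain the multiplicative guarantee on $\tilde{R}_{ab}$, and note that the post-processing adds only logarithmic overhead. The paper is terser about the space analysis (it simply says the vector--matrix multiplication ``only adds logarithmic space overhead''), whereas you spell out the composition via \cref{prop:composition}; your extra care there is fine, though note that if you really bound the solver's output length by $2^{O(S)}$ with $S=O(\log(N\cdot w)\log\log(N\cdot w/\epsilon))$, then logspace post-processing on that output costs $O(S)$, not $O(\log(N\cdot w/\epsilon))$---either way the total is $O(S)$, so the final bound is unaffected.
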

\begin{proof}
Let $L$ be the Laplacian of $G$.
By the definition of effective resistance we have $R_{ab}=(e_a-e_b)^{\top} L^+(e_a-e_b)$. From \cref{lem:weighted_solver} we can compute a matrix $\tilde{L}$ such that 
\[
(1-\epsilon)\cdot L^+ \preceq\tilde{L^+}\preceq(1+\epsilon)\cdot L^+
\]
in space $O(\log (N\cdot w)\cdot\log\log\frac{N\cdot w}{\eps})$. By the definition $\preceq$, this implies
\[
(1-\epsilon)\cdot (e_a-e_b)^{\top} L^+(e_a-e_b) \leq(e_a-e_b)^{\top}\tilde{L^+}(e_a-e_b)\leq(1+\epsilon)\cdot (e_a-e_b)^{\top} L^+(e_a-e_b).
\]
Setting $\tilde{R}_{ab}=(e_a-e_b)^{\top}\tilde{L^+}(e_a-e_b)$ and noting that the vector matrix multiplication only adds logarithmic space overhead completes the proof. 
\end{proof}

\subsection{Testing for Spectral Proximity}\label{sec:verify}
In this section we give our deterministic, small-space procedure for verifying that two Laplacians spectrally approximate one another. We will need the following claim about the space complexity of matrix multiplication.

\begin{claim}
\label{lem:matrixprod}
Given $n\times n$ matrices $M_1,\ldots,M_k$, their product $M_1\cdot\ldots\cdot M_k$ can be computed using $O(\log N\cdot \log k)$ space, where $N$ is the bitlength of $(M_{1},\ldots,M_k)$. 
\end{claim}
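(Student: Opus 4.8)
The claim to prove is \Cref{lem:matrixprod}: computing the product $M_1 \cdots M_k$ of $n\times n$ matrices in space $O(\log N \cdot \log k)$, where $N$ is the bitlength of the input.

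\textbf{Approach.} The plan is to use a standard divide-and-conquer (recursive-halving) strategy combined with the composition lemma for space-bounded computation (\Cref{prop:composition}). The key observation is that the product of a \emph{pair} of matrices with bit-complexity $M$ can be computed in space $O(\log M)$: each output entry is a sum of $n$ products of two input entries, and iterated addition/multiplication of integers (or rationals, clearing denominators) is computable in logarithmic space; one cycles through the output coordinates one at a time, recomputing as needed, keeping only $O(\log M)$ bits of state (indices and a running accumulator). Call this base procedure $P_2$.

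\textbf{Key steps.} First I would set up the recursion: to compute $M_1 \cdots M_k$, split into $A = M_1 \cdots M_{\lceil k/2 \rceil}$ and $B = M_{\lceil k/2\rceil + 1}\cdots M_k$, and return $P_2(A,B)$. The recursion has depth $\lceil \log_2 k\rceil$. Next I would track the bitlength growth: each level of multiplication at most roughly doubles the bitlength and adds $O(\log n)$ (from the summation of $n$ terms), so after $\log k$ levels the entries of the final product have bitlength $O(N \cdot k)$ or so — in any case $\poly(N,k)$, hence $\log$ of it is $O(\log N + \log k) = O(\log N)$ when $k \le \poly(N)$ (and in our application $k$ is at most polynomial in $N$). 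Thus every intermediate matrix appearing in the recursion has bitlength $2^{O(\log N)}$, so $\log$ of its size is $O(\log N)$. Then I would apply \Cref{prop:composition} inductively across the $\log k$ levels: at each level we compose a space-$O(\log N)$ routine on top of the recursive calls, and since there are $\log k$ levels of composition each contributing an additive $O(\log N)$ term, the total space is $O(\log N \cdot \log k)$. One has to be slightly careful to phrase this as: let $S(k)$ denote the space to multiply $k$ matrices; then $S(k) \le S(\lceil k/2\rceil) + O(\log N)$ by viewing the top-level pairwise product $P_2$ (space $O(\log N')$ with $N'$ the bitlength of $A,B$, which is $O(\log N)$) as $f_2$ and the recursive computations of $A$ and $B$ as $f_1$ in \Cref{prop:composition}; this solves to $S(k) = O(\log N \cdot \log k)$.

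\textbf{Main obstacle.} The main subtlety is bookkeeping the bitlength growth and confirming it stays $2^{O(\log N)}$ throughout — one must note that $k$ is polynomially bounded in $N$ (true here since the input explicitly lists all $k$ matrices, so $k \le N$), so that $\log(\text{intermediate bitlength}) = O(\log N)$ uniformly across all recursion levels; otherwise the per-level cost would not be $O(\log N)$. The other point requiring care is that \Cref{prop:composition} as stated composes two functions, so the $\log k$-fold composition must be unrolled explicitly (or proved by a short induction on the recursion depth), making sure the constant in the $O(\cdot)$ does not blow up multiplicatively with depth — which it does not, since at each level we only re-incur an additive $O(\log N)$ for the outer pairwise-multiplication routine, while the inner recursive work is handled by the inductive hypothesis on the smaller instance.
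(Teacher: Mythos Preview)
Your proposal is correct and follows exactly the approach the paper indicates: the paper simply states that the proof ``uses the natural divide and conquer algorithm and the fact that two matrices can be multiplied in logarithmic space'' and defers details to \cite{MRSV17}. Your write-up is in fact more detailed than what the paper provides, and the bookkeeping you identify (bitlength growth stays $\poly(N)$ because $k\le N$, and the $\log k$-fold application of \Cref{prop:composition} contributes an additive $O(\log N)$ per level) is precisely what a full proof requires.
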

The proof of \cref{lem:matrixprod} uses the natural divide and conquer algorithm and the fact that two matrices can be multiplied in logarithmic space. A detailed proof can be found in \cite{MRSV17}.

Using \cref{lem:weighted_solver} and \cref{lem:matrixprod}, we prove the following lemma. The high level idea is that testing whether two matrices $L$ and $\tilde{L}$ spectrally approximate each other can be reduced to approximating the spectral radius of a particular matrix 
$$
M = \left( \frac{(\tilde{L}-L)L^{+}}{\eps} \right)^{2}.
$$
In fact, it will be sufficient to check whether the trace of a sufficiently high power of $M$
is below a certain threshold to deduce whether the spectral radius of $M$ does not exceed $1$. For intuition, replace the matrices with scalars $m,\ell,$ and $\tilde{\ell}$ where
 \[
m=\frac{(\tilde{\ell}-\ell)^2}{(\eps\cdot\ell)^2}.
\]
Then, $m\leq 1$ implies $\sqrt{m}\leq 1$, which implies $|\tilde{\ell}-\ell|\leq \eps\cdot\ell$ -- the kind of relative closeness we want between the matrices $\tilde{L}$ and $L$ when aiming for spectral approximation.

\begin{lemma}
\label{lem:verification}
There exists a deterministic algorithm that, given undirected, connected, weighted graphs $\tilde{G}$ and $G$ with Laplacians $\tilde{L},L$, and $\epsilon,\alpha>0$, outputs \texttt{YES} or \texttt{NO} such that
\begin{enumerate}
    \item  If $\tilde{L}\approx_{\epsilon}L$, then the algorithm outputs \texttt{YES}, and,

    \item  If $\tilde{L}\not\approx_{\epsilon\cdot\sqrt{1+\alpha}}L$ then the algorithm outputs \texttt{NO}.
    \end{enumerate}

The algorithm uses space $O(\log (N\cdot w)\cdot \log\log \frac{N\cdot w}{\alpha\eps}+\log (N\cdot w) \cdot \log\frac{1}{\alpha})$, where $w=\wmax/\wmin$ is the ratio of the maximum and minimum edge weights in $G$ and $\tilde{G}$ and $N$ is the bitlength of the input.
\end{lemma}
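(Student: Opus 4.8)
The plan is to reduce the task of deciding whether $\tilde{L}\approx_{\epsilon}L$ to estimating the spectral radius $\rho(M)$ of $M=\left((\tilde{L}-L)L^{+}/\epsilon\right)^{2}$, and to estimate $\rho(M)$ by computing the trace of a large power of a small-space-computable approximation of $M$. First I would establish the reduction. Since $G$ and $\tilde{G}$ are connected, $\Ker(L)=\Ker(\tilde{L})=\Span(\onesvec)=\Ker(L^{+/2})$, so by item (2) of \cref{lem:mult_both_sides} and by \cref{lem:proj_matrix}, $\tilde{L}\approx_{\epsilon}L$ if and only if $-\epsilon\Pi\preceq B\preceq\epsilon\Pi$, where $B:=L^{+/2}(\tilde{L}-L)L^{+/2}$ is symmetric with $\onesvec\in\Ker(B)$ and $\Pi$ is the projection onto $\Span(\onesvec)^{\perp}$. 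Since $\Pi$ is the identity on $\Span(\onesvec)^{\perp}$, this is exactly $\norm{B}\le\epsilon$. Writing $L^{+}=L^{+/2}L^{+/2}$ and using that $XY$ and $YX$ share the same nonzero eigenvalues, $M$ has the same nonzero eigenvalues as the PSD matrix $B^{2}/\epsilon^{2}$; hence all eigenvalues of $M$ are real and nonnegative, $\rho(M)=\norm{B}^{2}/\epsilon^{2}$, and therefore $\tilde{L}\approx_{\epsilon}L\iff\rho(M)\le 1$ while $\tilde{L}\not\approx_{\epsilon\sqrt{1+\alpha}}L\iff\rho(M)>1+\alpha$. So it suffices to distinguish $\rho(M)\le 1$ from $\rho(M)>1+\alpha$.

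For the algorithm, fix $\eta=\Theta(\alpha)$ small (chosen dyadic) and an integer $q=\Theta((\log n)/\alpha)$. Using \cref{lem:weighted_solver}, compute a symmetric PSD matrix $\tilde{L^{+}}\approx_{\eta}L^{+}$; set $\hat{M}:=\left((\tilde{L}-L)\tilde{L^{+}}/\epsilon\right)^{2}$, compute $\Tr(\hat{M}^{q})$ by multiplying the $2q$ factor matrices via \cref{lem:matrixprod}, and output \texttt{YES} iff $\Tr(\hat{M}^{q})$ lies below a threshold $T$ that is slightly above $n(1+\eta)^{2q}$ (a nearby exactly-computable rational). The same algebra as above, with $P:=\tilde{L^{+}}$ playing the role of $L^{+}$ and $\hat{B}:=P^{1/2}(\tilde{L}-L)P^{1/2}$ playing the role of $B$, shows that $\hat{M}$ has nonnegative real eigenvalues with $\rho(\hat{M})=\norm{\hat{B}}^{2}/\epsilon^{2}$, so $\Tr(\hat{M}^{q})=\sum_{i}\mu_{i}^{q}\in[\,\rho(\hat{M})^{q},\,n\cdot\rho(\hat{M})^{q}\,]$. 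The crucial estimate is $(1-\eta)\norm{B}\le\norm{\hat{B}}\le(1+\eta)\norm{B}$: the upper bound follows from the identity $\hat{B}=(P^{1/2}L^{1/2})\,B\,(L^{1/2}P^{1/2})$, which holds because $L^{1/2}L^{+/2}=\Pi$ and $\Pi(\tilde{L}-L)\Pi=\tilde{L}-L$ (as $\tilde{L}-L$ annihilates $\onesvec$ on both sides), together with $\norm{P^{1/2}L^{1/2}}^{2}=\norm{L^{1/2}PL^{1/2}}\le 1+\eta$ coming from $P\preceq(1+\eta)L^{+}$; the lower bound is symmetric, using $P^{+/2}$ in place of $P^{1/2}$ and $(1-\eta)L^{+}\preceq P$. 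Consequently $\rho(\hat{M})\in[(1-\eta)^{2},(1+\eta)^{2}]\cdot\rho(M)$.

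Correctness of the test now follows from a gap argument. Because $\eta$ is small relative to $\alpha$, we have $(1-\eta)^{2}(1+\alpha)>(1+\eta)^{2}$ with a multiplicative gap of $1+\Omega(\alpha)$, so $q=\Theta((\log n)/\alpha)$ ensures $n(1+\eta)^{2q}<\left((1-\eta)^{2}(1+\alpha)\right)^{q}$; pick $T$ strictly between these two quantities. If $\tilde{L}\approx_{\epsilon}L$, then $\rho(M)\le1$, so $\Tr(\hat{M}^{q})\le n\cdot\rho(\hat{M})^{q}\le n(1+\eta)^{2q}\le T$ and the algorithm outputs \texttt{YES}. If $\tilde{L}\not\approx_{\epsilon\sqrt{1+\alpha}}L$, then $\rho(M)>1+\alpha$, so $\Tr(\hat{M}^{q})\ge\rho(\hat{M})^{q}\ge\left((1-\eta)^{2}(1+\alpha)\right)^{q}>T$ and the algorithm outputs \texttt{NO}.

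Finally, for the space bound, the dominant term is the call to \cref{lem:weighted_solver} with accuracy $\eta=\Theta(\alpha)$, costing $O(\log(N\cdot w)\log\log\frac{N\cdot w}{\alpha})$ space; its output $\tilde{L^{+}}$, and hence $\hat{M}$, have entries of polynomial magnitude and bounded bitlength (using $\norm{L^{+}}\le\poly(N\cdot w)$ from \cref{claim:norm_of_pinv_stversion}). Computing $\Tr(\hat{M}^{q})$ and comparing it with $T$ then costs $O(\log(N\cdot w)\cdot\log q)=O(\log(N\cdot w)(\log\log n+\log\frac{1}{\alpha}))$ space by \cref{lem:matrixprod}, and composing the two stages via \cref{prop:composition} gives the claimed $O(\log(N\cdot w)\cdot\log\log\frac{N\cdot w}{\alpha\eps}+\log(N\cdot w)\cdot\log\frac{1}{\alpha})$. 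I expect the main obstacle to be the error analysis around the approximate pseudoinverse: proving the conjugation identity $\hat{B}=(P^{1/2}L^{1/2})B(L^{1/2}P^{1/2})$ and its matching lower bound (both of which rely on $\tilde{L}-L$ mapping into $\Span(\onesvec)^{\perp}$), so that replacing $L^{+}$ by $\tilde{L^{+}}$ perturbs $\rho(M)$ only multiplicatively by $(1\pm\eta)^{2}$, and then carrying out the bit-precision bookkeeping for the matrix-power step carefully enough that it fits inside the stated space bound while the approximations from $\tilde{L^{+}}$ and from the two-sided bracket $\rho^{q}\le\Tr(\cdot^{q})\le n\cdot\rho^{q}$ are all absorbed into the $\sqrt{1+\alpha}$ slack.
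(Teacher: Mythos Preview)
Your proposal is correct and follows essentially the same approach as the paper: reduce checking $\tilde L\approx_\epsilon L$ to bounding $\rho(M)$ for $M=((\tilde L-L)L^{+}/\epsilon)^2$, replace $L^{+}$ by an $\eta$-approximation from \cref{lem:weighted_solver}, and decide by thresholding the trace of a high power of the resulting $\hat M$. The only cosmetic differences are that the paper compares $\Tr(\hat M^{\hat t})$ against the self-calibrating threshold $\Tr(\hat M)$ (with $\hat t=\lceil\log\Tr(\hat M)/\log(1+\alpha')\rceil$) rather than your fixed threshold and exponent $q=\Theta((\log n)/\alpha)$, and it normalizes $\hat M$ by $1/(1+\gamma)$ to make the spectral-radius comparison one-sided; your conjugation-identity perturbation bound $(1-\eta)\|B\|\le\|\hat B\|\le(1+\eta)\|B\|$ is exactly the content of the paper's \cref{lem:norm_of_prod_ineq}.
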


\begin{proof}
Let 
$$
M = \left( \frac{(\tilde{L}-L)L^{+}}{\eps} \right)^{2}.
$$ 
Set $T = \Tr(M)$ and $t = \left\lceil \frac{\log T}{\log(1+\alpha)} \right\rceil$. The following claim shows that if we can compute $\Tr(M^t)$ exactly then we can check the two cases in \cref{lem:verification}. However we won't be able to compute $\Tr(M^t)$ exactly because that would require computing $L^+$ exactly. This will be addressed later. 

\begin{claim}
\label{claim:trace_alg}
If $\tilde{L}\approx_{\epsilon}L$ then $\Tr(M^t)\leq T$ and if $\tilde{L}\not\approx_{\epsilon\cdot\sqrt{1+\alpha}}L$ then $\Tr(M^t) > T$.
\end{claim}

\begin{proof}
Let $\Pi=I-J$ be the orthogonal projection onto $\Image(L) = \Image(\tilde{L})$ (note
that both $L$ and $\tilde{L}$ are the Laplacians of connected graphs). Using \cref{lem:proj_matrix} and \cref{lem:mult_both_sides}, we know that $\tilde{L}\approx_{\epsilon}L$ if and only if for all $v\in\R^n$ we have
\[
-v^{\top} \Pi v \leq \frac{1}{\epsilon}\cdot v^{\top} L^{+/2}\left(\tilde{L}-L\right)L^{+/2}v \leq v^{\top} \Pi v.
\]
Note that $\Pi$ and $L^{+/2}\left(\tilde{L}-L\right)L^{+/2}$ have the same kernel, namely $\Span(\set{\onesvec})$, and being perpendicular to $\onesvec$ is preserved under both operators. Thus, the above holds if and only if it holds on all vectors $v\perp \onesvec$. For such vectors we have $\Pi v=v$ and hence $v^{\top}\Pi v=\|v\|^2$. So we have $\tilde{L}\approx_{\epsilon}L$ if and only if for all vectors $v\perp \onesvec$ we have  
\[
\left|\frac{1}{\epsilon}\cdot v^{\top} L^{+/2}\left(\tilde{L}-L\right)L^{+/2}v\right|\leq \|v\|^2,
\]
or equivalently,
 
\[
\lm\left(\frac{1}{\epsilon}\cdot L^{+/2}\left(\tilde{L}-L\right)L^{+/2}\right)\leq 1.
\]

Note that $L^{+/2}(\tilde{L}-L)L^{+/2}$ is symmetric so has real eigenvalues and is similar to the matrix $(\tilde{L}-L)L^{+}$ on the space orthogonal to the kernel. Thus, we can rewrite the above condition as 
\[
\lm\left(\frac{1}{\epsilon}\cdot\left(\tilde{L}-L\right)L^{+}\right)\leq 1.
\]
Furthermore, we have that $M=((\tilde{L}-L)L^+/\epsilon)^2$ has real, non-negative eigenvalues. 

Consider the contrapositives of the implications stated in the claim. Namely: If $\Tr(M^t)>T$ then $\tilde{L}\not\approx_{\epsilon}L$ and if
$\Tr(M^t)\leq T$ then $\tilde{L}\approx_{\epsilon\cdot\sqrt{1+\alpha}}L$. Now, note that if $\Tr(M^t)>T = \Tr(M)$ then $\lm(M)>1$ because the only way that the trace of a matrix with real non-negative eigenvalues can increase under powering is if at least one of its eigenvalues exceeds 1. So, on the one hand, if $\Tr(M^t)>T = \Tr(M)$
then $\tilde{L} \not\approx_{\eps} L$. On the other hand, 
\begin{align*}
    \Tr(M^t)\leq T &\implies \lm(M)^t \leq T\\
    &\implies \rho(M)\leq T^{1/t}\leq (1+\alpha).
\end{align*}
When $\lm(M)\leq 1+\alpha$, we have $\rho(L^{+/2}(\tilde{L}-L)L^{+/2}/\eps)\leq\sqrt{1+\alpha}$, so
\[
-\sqrt{1+\alpha}\cdot\Pi \preceq \frac{1}{\epsilon}\cdot L^{+/2}(\tilde{L}-L)L^{+/2}\preceq \sqrt{1+\alpha}\cdot\Pi, 
\]
or in other words, $\tilde{L}\approx_{\epsilon\cdot\sqrt{1+\alpha}}L$.
\end{proof}
Recall that 
we cannot compute $M$ exactly in small space because we do not know how to compute $L^+$ exactly in small space. Let 
$$\hat{M} = \left( \frac{1}{1+\gamma}\cdot\frac{(\tilde{L}-L)\hat{L}}{\eps} \right)^{2},$$ where $\hat{L} \approx_{\gamma} L^{+}$ for 
\[
\gamma=1-\frac{2}{1+\sqrt{1+\frac{\alpha}{2+\alpha}}}.
\]
We have chosen $\gamma$ so that 
$
\left(\frac{1+\gamma}{1-\gamma}\right)^2 = 1+\frac{\alpha}{2+\alpha}
$,
and hence 
\[
\left(\frac{1+\gamma}{1-\gamma}\right)^2\cdot\left(1+\frac{\alpha}{2}\right) = 1+\alpha.
\]
Now we will show that $\hat{M}$ is sufficient for our purposes. For this,
we will need the following claim, whose proof is deferred to \cref{proofs-appendix}.
\begin{claim}
\label{lem:norm_of_prod_ineq}
Let $C$ be a real, symmetric matrix and let $A$ and $B$ be real, symmetric, PSD matrices such that $A\preceq B$. Suppose $\ker(A)=\ker(B)=\ker(C)$. Then, 
\[
\lm(CA)\leq \lm(CB).
\]
\end{claim}
Next, we show that $\hat{M}$ is sufficient for our purposes.

\begin{claim}\label{claim:dist}
It holds that
$\left(\frac{1-\gamma}{1+\gamma}\right)^2\cdot \lm(M) \leq \lm(\hat{M}) \leq \lm(M)$.
\end{claim}
This claim is sufficient because we can use the procedure implied by \cref{claim:trace_alg} to distinguish the case of $\rho(\hat{M}) \leq1$ from the case of $\rho(\hat{M})>1+\alpha'$, for $\alpha' = \frac{\alpha}{2}$.
From \cref{claim:dist}, we know that
\[
\rho(M)\leq 1 \implies \rho(\hat{M})\leq 1
\]
and that
\[
\rho(M)>1+\alpha\implies\rho(\hat{M})>\left(\frac{1-\gamma}{1+\gamma}\right)^2\cdot (1+\alpha)=1+\alpha'
\]
Thus, from the arguments above, we can distinguish the case of $\tilde{L}\approx_{\epsilon} L$ from $\tilde{L}\not\approx_{\epsilon\cdot\sqrt{1+\alpha}} L$ by computing $\Tr(\hat{M}^{\hat{t}})$ for $\hat{t}=\left\lceil{\log\Tr(\hat{M})/\log(1+\alpha')}\right\rceil$ and comparing the result to $\Tr(\hat{M})$. 
Now we prove \cref{claim:dist}.
\begin{proof}
By assumption we have $\hat{L}\approx_{\gamma}L^+$ and hence
\[
\frac{1-\gamma}{1+\gamma}\cdot L^+\preceq \frac{1}{1+\gamma}\cdot\hat{L}\preceq L^+.
\]
Let $C=\frac{1}{\eps}(\tilde{L}-L)$ and note that $C$ is symmetric. By \cref{lem:norm_of_prod_ineq}, we have that 
\[
\frac{1-\gamma}{1+\gamma}\cdot \rho(CL^+)\leq \frac{1}{1+\gamma}\cdot \rho(C\hat{L})\leq \rho(CL^+),
\]
and hence that 
\[
\left(\frac{1-\gamma}{1+\gamma}\cdot \rho(CL^+)\right)^2\leq \left(\frac{1}{1+\gamma}\cdot \rho(C\hat{L})\right)^2\leq \rho(CL^+)^2.
\]
Noting that for all matrices $A$ with real eigenvalues, we have $\rho(A^2)=\rho(A)^2$ and that $(CL^+)^2=M$ and $(C\hat{L}/(1+\gamma))^2=\hat{M}$, the above becomes 
\[
\left(\frac{1-\gamma}{1+\gamma}\right)^2\cdot \rho(M)\leq \rho(\hat{M})\leq \rho(M),
\]
as desired.
\end{proof}

Thus, our distinguishing algorithm goes as follows: Approximate
$\hat{L} \approx_{\gamma} L^{+}$ using the Laplacian solver algorithm given in \cref{lem:weighted_solver},
compute $\Tr(\hat{M}^{\hat{t}})$ and answer according to whether it is greater than or less than $\Tr(\hat{M})$. \cref{claim:trace_alg} and \cref{claim:dist} establishes
its correctness. We are left with establishing the space complexity. 

\begin{claim}
The distinguishing algorithm uses space $O(\log (N\cdot w)\cdot \log\log \frac{N\cdot w}{\alpha\eps}+\log (N\cdot w) \cdot \log\frac{1}{\alpha})$, where $w=\wmax/\wmin$ is the ratio of the maximum and minimum edge weights in $G$ and $\tilde{G}$ and $N$ is the bitlength of the input.
\end{claim}
\begin{proof}
By \cref{lem:weighted_solver}, we can compute $\hat{L}$ in space $S=O(\log (N\cdot w)\log\log\frac{N\cdot w}{\gamma})=O(\log (N\cdot w)\cdot\log\log\frac{N\cdot w}{\alpha})$.  \cref{lem:matrixprod} and composition of space-bounded algorithms (\cref{prop:composition}) say that we can compute $\hat{M}^{\hat{t}}$ using additional 
\[
S'=O\left(S\cdot \log \hat{t}\right)=O\left(\log (N\cdot w)\cdot \left(\log\log \Tr(\hat{M})+\log\frac{1}{\log(1+\alpha)}\right)\right)
\] 
space. Finally, computing $\Tr(\hat{M}^{\hat{t}})$ only takes an addition $O(S')$ space to add $n$ numbers.

Now we argue that we can (loosely) bound $\Tr(\hat{M})$ by $\poly(N\cdot w,1/\alpha,1/\eps)$.  
If $\dmax$ is the maximum weighted degree of the graph corresponding to $L$ and $\tilde{L}$ then we have that the spectral norms of $L$ and $\tilde{L}$ are at most $2\cdot \dmax=O(N\cdot\wmax)$. 
\cref{claim:norm_of_pinv_stversion} says that the smallest nonzero eigenvalue of $L$ is lower bounded by $w_{\mathrm{min}}/n^2$. Note that $\|L^+\|$ equals the reciprocal of the smallest nonzero eigenvalue of $L$ and hence  we have $\|L^+\|\leq n^2/w_\mathrm{min}$ and therefore  $\|\hat{L}\|\leq (1+\gamma)\cdot n^2/w_\mathrm{min}=\poly(N/\wmin,\frac{1}{\alpha})$. It follows that $\Tr(\hat{M})=\poly(N\cdot w,1/\alpha,1/\epsilon)$. Plugging this into the space complexity gives us a space bound of
\[
O\left(\log (N\cdot w)\cdot \left(\log\log \frac{N\cdot w}{\alpha \epsilon}+\log\frac{1}{\log(1+\alpha)}\right)\right).
\]
Noting that $\log\frac{1}{\log(1+\alpha)}=O(\log\frac{1}{\alpha})$ completes the analysis.
\end{proof}
\end{proof}

\subsection{Completing the Proof of \cref{thm:derand}}

We can now prove \cref{thm:derand}. As noted above, the algorithm proceeds 
by first approximating the sampling probabilities and then
sparsifying $G$ where the surviving edges are chosen from a small $k$-wise independent
sample space whose marginals are set properly. Each potential sparsifier
is checked using the algorithm given in \cref{sec:verify}.

\begin{proof}[Proof of \cref{thm:derand}]
Set $\delta = \frac{1}{4}$, $\hat{\epsilon}=\frac{4\epsilon}{5}$ and
$$
s= \frac{18 e\log n}{\hat{\eps}^2} \cdot \left(\frac{n}{\delta}\right)^{2/k},
$$ 
for $\alpha$ soon to be determined.
These parameters are chosen in accordance with the parameters
required for $\Sparsify$ to succeed with probability $1/2$
and approximation error $\hat{\eps}$ (see \cref{lem:approx_er}). Set $\alpha' = \alpha/(4+\alpha)$.
We compute approximate effective resistances $\tilde{R}_{ab}$ for each edge $(a,b)$ in $G$ using \cref{lem:comp_ers}, so that 
$$
(1-\alpha')R_{ab} \le \tilde{R}_{ab} \le (1+\alpha')R_{ab}.
$$
This takes $O(\log (N\cdot w) \log\log((N\cdot w)/\alpha))$ space. Then, we compute approximate sampling probabilities as follows: 
$$\tilde{p}_{ab}=\alpha' \cdot\left\lfloor{\frac{1}{\alpha'}\cdot \min\set{1,w_{ab}\cdot\tilde{R}_{ab}\cdot s/(1-\alpha')}}\right\rfloor$$ 
That is, we truncate the required (approximate) sampling probabilities 
to $\log\frac{1}{\alpha'}$ bits of precision.
In particular, denoting the precise sampling probabilities by
$p_{ab}^\star = \min\set{1,w_{ab} \cdot R_{ab} \cdot s}$, we
have 
\begin{align*}
\min\{1,w_{ab}\cdot\tilde{R}_{ab}\cdot s/(1-\alpha')\}-p_{ab}^* &\leq w_{ab}\cdot s\cdot R_{ab}\cdot\left(\frac{1+\alpha'}{1-\alpha'}-1\right)\\
&=p_{ab}^*\cdot \frac{2\alpha'}{1-\alpha'}\\
&\leq \alpha/2
\end{align*}
Furthermore, we have an additional error of $\alpha'$ due to the truncation so  $\left|\tilde{p}_{ab}-p_{ab}^\star\right| \le \alpha/2+\alpha'\leq \alpha$.

We want to set $\alpha$ so that $\tilde{p}_{ab}$ is a multiplicative approximation to $p^*_{ab}$ for all $(a,b)\in E$, which requires $\alpha$ to be smaller than $\min_{(a,b)\in E)}\{p^*_{ab}\}$. 
\begin{claim}
Let $\dmax$ be the maximum weighted degree over all vertices in $G$. Then, for all $(a,b)\in E$, $p^{\star}_{ab}\geq 1/\dmax$.
\end{claim}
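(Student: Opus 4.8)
The statement follows at once from a single ingredient: the effective-resistance lower bound $R_{ab}\ge 1/\dmax$ for every edge $(a,b)\in E$. The plan is to establish this bound and then read off the claim. Granting it, recall that $s\ge 1$ (indeed $s\ge 18e\log n$ by the choice of $s$ in the proof of \cref{thm:derand}) and that $w_{ab}\ge 1$, so
\[
p^{\star}_{ab}=\min\set{1,\ s\cdot w_{ab}\cdot R_{ab}}\ \ge\ \min\set{1,\ s\cdot w_{ab}/\dmax}\ \ge\ \min\set{1,\ 1/\dmax}\ =\ \frac{1}{\dmax},
\]
the last two steps using $s\cdot w_{ab}\ge 1$ and $\dmax\ge w_{ab}\ge 1$.

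To prove $R_{ab}\ge 1/\dmax$ I would invoke the variational form of the pseudoinverse: since $e_a-e_b\in\Image(L)=\Span(\onesvec)^{\perp}$ ($G$ being connected), for every $y$ with $y^{\top}Ly>0$ we have
\[
R_{ab}=(e_a-e_b)^{\top}L^{+}(e_a-e_b)\ \ge\ \frac{\big((e_a-e_b)^{\top}y\big)^{2}}{y^{\top}Ly},
\]
which is just Cauchy--Schwarz in the semi-inner product $\langle u,v\rangle_L=u^{\top}Lv$. Choosing the test vector $y=e_a$ makes the numerator $\big(e_a^{\top}e_a-e_b^{\top}e_a\big)^{2}=1$ (using $a\neq b$) and the denominator $e_a^{\top}Le_a=L_{aa}=d_a$, the weighted degree of $a$, so $R_{ab}\ge 1/d_a\ge 1/\dmax$. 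If one would rather not quote the variational identity, the same conclusion comes from Gershgorin's theorem: $\lambda_{\max}(L)\le\max_i\big(L_{ii}+\sum_{j\neq i}|L_{ij}|\big)=\max_i 2d_i=2\dmax$, hence $L\preceq 2\dmax\cdot\Pi$ (both sides annihilating $\onesvec$ and agreeing elsewhere up to the bound), hence $L^{+}\succeq\frac{1}{2\dmax}\Pi$ by \cref{lem:proj_matrix}; since $e_a-e_b\perp\onesvec$ and $\norm{e_a-e_b}^{2}=2$, this yields $R_{ab}=(e_a-e_b)^{\top}L^{+}(e_a-e_b)\ge\frac{2}{2\dmax}=1/\dmax$.

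I do not anticipate any genuine difficulty; the content is entirely in the one-line effective-resistance estimate, and neither route to it is hard. The only point worth flagging is the interplay of the truncation $\min\set{1,\cdot}$ with the edge weight $w_{ab}$: the conclusion $p^{\star}_{ab}\ge 1/\dmax$ as stated leans on $s\ge 1$ together with $w_{ab}\ge 1$. Without assuming weights are at least $1$, the identical argument gives the slightly weaker $p^{\star}_{ab}\ge \wmin/\dmax$, which is still $\ge 1/\poly(N\cdot w)$ and hence equally serviceable for bounding the precision parameter $\alpha$ — and therefore the space — in the rest of the proof of \cref{thm:derand}.
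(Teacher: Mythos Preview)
Your proposal is correct and essentially matches the paper. The paper's proof is exactly your Route~B: it bounds $R_{ab}\ge\lambda_{\min}(L^{+})\cdot\|e_a-e_b\|^{2}=2/\|L\|\ge 1/\dmax$ via the variational characterization of eigenvalues, and then uses $s>1$ and $w_{ab}\ge 1$ to conclude $p^{\star}_{ab}\ge R_{ab}\ge 1/\dmax$. Your Route~A (Cauchy--Schwarz in the $L$-inner product with test vector $e_a$, yielding the sharper $R_{ab}\ge 1/d_a$) is a clean alternative not in the paper, and your flag on the assumption $w_{ab}\ge 1$ is apt: the paper asserts it parenthetically (``all edge weights are positive integers''), which sits awkwardly with \cref{thm:derand} being stated for general weighted graphs; your fallback $p^{\star}_{ab}\ge\wmin/\dmax$ is the right patch.
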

\begin{proof}
Since $s>1$ and $w_{ab}\geq 1$ (all edge weights are positive integers) we have $p^{\star}_{ab}\geq R_{ab}$. Let $\lambda_{\mathrm{min}}(C)$ denote the minimal nonzero eigenvalue of a matrix $C$. To lower bound $R_{ab}$, we use the variational characterization of eigenvalues and the definition of effective resistance to write
\begin{align*}
R_{ab}&=(e_a-e_b)^{\top}L^+(e_a-e_b)\\
&\geq \lambda_{\mathrm{min}}(L^+)\cdot \|e_a-e_b\|^2\\
&=\frac{2}{\|L\|}\\
&\geq \frac{1}{\dmax}.
\end{align*}
Note that we can indeed consider the minimal nonzero eigenvalue of $L^+$ because $e_a-e_b$ is perpendicular to the one-dimensional kernel of $L$ (the all-ones vector).
\end{proof}
 In light of the above, we can set $\alpha$ so that $1/\alpha = 2 \cdot \dmax=O(N\cdot w)$ and get a $1/2$-multiplicative approximation to the sampling probabilities. 

Now, consider the $k$-wise independent sample space $\mathcal{D} \subseteq \B^{|E|}$
guaranteed to us by \cref{lem:sampling}, substituting $t = \lceil{\log(1/\alpha')}\rceil$.
By \cref{lem:sampling}, each element of $\mathcal{D}$ can be sampled using
$$O(k\cdot\max\{\log(1/\alpha'),\log |E|\})=O(k\cdot \log (N\cdot w))$$ space.
For each element of $\mathcal{D}$, construct the corresponding sparse graph.
Note that the space used to cycle through each element can be reused.
\cref{lem:approx_er} tells us that at least $1-2\delta=1/2$ of the Laplacians of the resulting graphs $\hat{\eps}$-approximate the Laplacian of $G$ and have
\[
O\left(\frac{1+1/2}{1-1/2} \cdot \frac{1}{\delta^{1+2/k}} \cdot \frac{\log n}{\hat{\eps}^{2}}\cdot n^{1+\frac{2}{k}}\right)=O\left(\frac{\log n}{\eps^{2}}\cdot n^{1+\frac{2}{k}}\right)
\]
edges. For each of these graphs, we run the verification algorithm with accuracy parameter $9/16$, which is guaranteed to find a graph with the above sparsity whose Laplacian approximates the Laplacian of $G$ with error
\[
\hat{\eps}\cdot\sqrt{1+\frac{9}{16}}=\frac{4\eps}{5}\cdot\sqrt{\frac{25}{16}}=\eps
\]
in space 
\[
O\left(\log (N\cdot w) \log\log \frac{16N\cdot w}{9\hat{\eps}}+\log (N\cdot w) \log\frac{16}{9}\right)=O\left(\log (N\cdot w) \log\log \frac{N\cdot w}{\eps}\right).
\]
Again, the space used for the verification process can be reused.
Adding up the space complexities gives us a total of 
\[
O\left(k \log (N\cdot w) +\log (N\cdot w) \log\log \frac{N\cdot w}{\eps}\right)
\]
space. Note that the final result is vacuous when $\eps \le 1/n$ so we can without loss of generality assume that $\epsilon \ge 1/n$. This gives a total space complexity of $O(k \log (N\cdot w) +\log (N\cdot w)\log\log (N\cdot w))$.

\end{proof}

\section{Acknowledgments}
We thank Jelani Nelson for his insights on spectral sparsification via $k$-wise independent sampling. We also thank Jaros\l{}aw B\l{}asiok for helpful discussions about random matrices. The first author would like to thank Tselil Schramm and Amnon Ta-Shma for interesting conversations.

\bibliographystyle{alpha}
\bibliography{references}
\appendix
\section{Deferred Proofs} \label{proofs-appendix}

\begin{namedtheorem}[\cref{lem:norm_of_sum} restated]
Let $A,B, C$ be $n \times n$ symmetric PSD matrices and suppose that $B\preceq C$. Then,
\[
\|A+B\|\leq \|A+C\|.
\]
\end{namedtheorem}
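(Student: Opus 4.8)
The plan is to exploit the fact that the spectral norm of a symmetric PSD matrix coincides with its largest eigenvalue, which in turn equals the maximum of its Rayleigh quotient over the unit sphere. First I would note that since $A$ and $B$ are symmetric PSD, so is $A+B$, and likewise $A+C$ is symmetric PSD; hence $\|A+B\| = \lambda_{\max}(A+B) = \max_{\|x\|=1} x^{\top}(A+B)x$, and similarly for $A+C$.

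Next I would use the hypothesis $B \preceq C$, which by definition means $x^{\top}Bx \le x^{\top}Cx$ for every $x \in \mathbb{R}^n$. Adding $x^{\top}Ax$ to both sides gives $x^{\top}(A+B)x \le x^{\top}(A+C)x$ for all $x$. Taking the maximum over the unit sphere $\{x : \|x\| = 1\}$ on both sides preserves the inequality, yielding
\[
\|A+B\| = \max_{\|x\|=1} x^{\top}(A+B)x \le \max_{\|x\|=1} x^{\top}(A+C)x = \|A+C\|,
\]
which is the claim.

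There is essentially no obstacle here; the only thing to be slightly careful about is justifying the identity $\|M\| = \max_{\|x\|=1} x^{\top}Mx$ for a symmetric PSD matrix $M$. This follows from the variational (Courant--Fischer) characterization of eigenvalues together with the remark in the preliminaries that for real symmetric matrices the spectral norm equals the spectral radius, combined with the non-negativity of the eigenvalues of a PSD matrix. So I would simply invoke that standard fact in a single sentence and the proof is complete.
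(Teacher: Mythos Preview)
Your proof is correct and is essentially identical to the paper's own argument: both use the identity $\|M\| = \max_{\|x\|=1} x^{\top}Mx$ for the symmetric PSD matrices $A+B$ and $A+C$, combine it with $x^{\top}Bx \le x^{\top}Cx$, and take the maximum over the unit sphere. If anything, you are slightly more careful than the paper in noting that the PSD assumption is what makes the spectral norm equal to $\max_{\|x\|=1} x^{\top}Mx$ rather than $\max_{\|x\|=1} |x^{\top}Mx|$.
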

\begin{proof}
For every symmetric matrix $M$ we have 
$\norm{M}= \max_{x : \norm{x}=1}x^{\top}Mx$. Thus, 
$$
\norm{A+B} = \max_{x : \norm{x}=1}\left(x^{\top}Ax+x^{\top}Bx \right)\le 
\max_{x : \norm{x}=1}\left(x^{\top}Ax+x^{\top}Cx \right)= \norm{A+C}.
$$
\end{proof}

\begin{namedtheorem}[\cref{lem:norm_is_er} restated]
Let $G=(V,E,w)$ be an undirected weighted graph on $n$ vertices with Laplacian $L$. Fix $(a,b)\in E$ and recall that $L_{ab}=(e_a-e_b)(e_a-e_b)^{\top}$. Then,
\[
\left\|L^{+/2}L_{ab}L^{+/2}\right\| = R_{ab}.
\]
\end{namedtheorem}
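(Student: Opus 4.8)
The plan is to exploit the rank-one structure of the edge Laplacian. Write $v = e_a - e_b$, so that $L_{ab} = vv^\top$. Since $L^{+/2}$ is, by definition (see \cref{sec:moore_pen}), the symmetric PSD square root of $L^+$, it is symmetric, and hence
\[
L^{+/2}L_{ab}L^{+/2} = L^{+/2}vv^\top L^{+/2} = (L^{+/2}v)(L^{+/2}v)^\top = ww^\top,
\]
where $w = L^{+/2}v$. A symmetric rank-one matrix $ww^\top$ is PSD and has $\norm{w}^2$ as its unique (potentially) nonzero eigenvalue, so $\norm{ww^\top} = \norm{w}^2$. Thus the claim reduces to identifying $\norm{w}^2$ with the effective resistance.

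That identification is a one-line computation:
\[
\norm{w}^2 = w^\top w = v^\top L^{+/2}L^{+/2}v = v^\top L^+ v = (e_a-e_b)^\top L^+ (e_a - e_b) = R_{ab},
\]
where we used $(L^{+/2})^2 = L^+$ (again from \cref{sec:moore_pen}) and the definition of $R_{ab}$ from \cref{sec:intro-laplacian}. Combining the two displays gives $\norm{L^{+/2}L_{ab}L^{+/2}} = R_{ab}$.

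I do not expect any genuine obstacle here; the argument is essentially two lines. The only points meriting a moment of care are: (i) the symmetry of $L^{+/2}$, which is exactly what makes $L^{+/2}vv^\top L^{+/2}$ a \emph{symmetric} rank-one matrix of the clean form $ww^\top$, whose spectral norm is immediately its squared Euclidean length rather than something one must diagonalize; and (ii) the degenerate case $w = 0$, in which both sides equal $0$. The latter cannot actually occur, since $(a,b)\in E$ means $a$ and $b$ lie in the same connected component, so $v = e_a - e_b$ is a nonzero vector orthogonal to $\ker(L)$ and hence $L^{+/2}v \neq 0$; but observing this costs nothing and keeps the argument self-contained.
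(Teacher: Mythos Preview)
Your proof is correct and takes essentially the same approach as the paper: both exploit that $L^{+/2}L_{ab}L^{+/2}$ is PSD of rank one. The only cosmetic difference is that the paper observes that for a rank-one PSD matrix the norm equals the trace and then uses cyclic invariance of the trace to get $(e_a-e_b)^\top L^+(e_a-e_b)$, whereas you write the matrix explicitly as $ww^\top$ with $w=L^{+/2}(e_a-e_b)$ and compute $\norm{w}^2$ directly; these are two phrasings of the same one-line computation.
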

\begin{proof}
$L_{ab}$ is PSD and has rank 1 so it follows that $L^{+/2}L_{ab}L^{+/2}$ is PSD and has rank 1. Since the trace is the sum of the eigenvalues and the norm is the maximum eigenvalue, this implies that $\|L^{+/2}L_{ab}L^{+/2}\|=\Tr(L^{+/2}L_{ab}L^{+/2})$. Finally, the trace is invariant under cyclic permutations so we have
\begin{align*}
    \Tr\left(L^{+/2}(e_a-e_b)(e_a-e_b)^{\top}L^{+/2}\right)&=\Tr\left((e_a-e_b)^{\top}L^+(e_a-e_b)\right)\\
    &=R_{ab}.
\end{align*}
\end{proof}

\begin{namedtheorem}[\cref{lem:proj_matrix} restated]
Let $G = (V,E,w)$ be an undirected connected weighted graph on $n$ vertices with Laplacian $L$. Let $J$ be the $n\times n$ matrix with $1/n$ in every entry and define $\Pi=I-J$. Then, we have that
\[
\Pi=LL^+=L^+L=L^{+/2}LL^{+/2},
\]
where $\Pi$ is the orthogonal projection onto the image of $L$ (the space orthogonal to the all-ones vector).
\end{namedtheorem}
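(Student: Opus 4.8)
The plan is to diagonalize $L$ and read off all four expressions from a single eigendecomposition. Since $L$ is real symmetric and PSD, write $L = \sum_{i=1}^n \lambda_i v_i v_i^\top$ for an orthonormal eigenbasis $v_1,\dots,v_n$ with $\lambda_i \ge 0$. Because $G$ is connected, the zero eigenvalue of $L$ has multiplicity exactly one and $\onesvec$ spans $\ker(L)$; so I would normalize so that $v_1 = \onesvec/\sqrt{n}$ and $\lambda_1 = 0$, hence $\lambda_2,\dots,\lambda_n > 0$. Then $v_1 v_1^\top = \onesvec\onesvec^\top/n = J$, so $\Pi = I - J = \sum_{i=2}^n v_i v_i^\top$, which is precisely the orthogonal projection onto $\Span(v_2,\dots,v_n) = \Span(\onesvec)^\perp = \ker(L)^\perp = \Image(L)$, the last equality because $L$ is symmetric. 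This settles the parenthetical claim about $\Pi$.

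Next I would recall from \cref{sec:moore_pen} that $L^+$ shares the eigenbasis $v_1,\dots,v_n$ with eigenvalues $\lambda_i^+$, i.e.\ $L^+ = \sum_{i=2}^n \lambda_i^{-1} v_i v_i^\top$, and likewise $L^{+/2} = \sum_{i=2}^n \lambda_i^{-1/2} v_i v_i^\top$. Using orthonormality ($v_i^\top v_j = \delta_{ij}$), multiplying these sums term by term collapses each product: $L L^+ = \sum_{i=2}^n (\lambda_i \cdot \lambda_i^{-1}) v_i v_i^\top = \sum_{i=2}^n v_i v_i^\top = \Pi$, and symmetrically $L^+ L = \Pi$. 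For the last identity, $L^{+/2} L L^{+/2} = \sum_{i=2}^n (\lambda_i^{-1/2} \cdot \lambda_i \cdot \lambda_i^{-1/2}) v_i v_i^\top = \sum_{i=2}^n v_i v_i^\top = \Pi$ as well.

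There is essentially no obstacle here: the only place connectedness is used is in pinning down that $\ker(L)$ is exactly $\Span(\onesvec)$, so that the projection onto its orthogonal complement equals $I - J$; everything else is a routine term-by-term computation in the common eigenbasis. (If one prefers to avoid invoking the spectral form of $L^+$, an alternative is to check directly that $I-J$ satisfies the four defining Moore–Penrose conditions for $L$, but the eigenbasis computation is the cleanest route.)
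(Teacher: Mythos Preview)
Your proof is correct and is essentially the same as the paper's: both identify $\ker(L)=\Span(\onesvec)$ via connectedness, then work in the common eigenbasis of $L$ and $L^{+}$ to verify the three identities. The only cosmetic difference is that you write the spectral decomposition in outer-product form $\sum_i \lambda_i v_i v_i^\top$ and multiply term by term, whereas the paper checks the action of $LL^{+}$ (and the other two products) on each eigenvector $v_i$ and concludes by linearity; these are the same computation.
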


\begin{proof}
It is clear that for all $v = \alpha \onesvec$ where $\alpha \in \mathbb{R}$ we have $\Pi v=v-Jv=\zerosvec$ and for all $v\perp \onesvec$ we have $\Pi v=v$. So $\Pi$ is the orthogonal projection onto the space orthogonal to the all ones vector. Since $G$ is connected, $\Ker(L)=\Span(\set{\onesvec})$ and for every symmetric matrix $M$, $\Image(M)=\Ker(M)^{\perp}$, it follows that $\Pi$ is also the orthogonal projection onto the image of $L$. 

By the definition of the pseudoinverse, we have that $L$ and $L^+$ have identical eigenvectors. Letting $v_1 = \onesvec,v_2,\ldots,v_n$ be their orthonormal basis of eigenvectors, $\lambda_1 = 0,\ldots,\lambda_n$ be the eigenvalues of $L$ and $0,\frac{1}{\lambda_2},\ldots,\frac{1}{\lambda_n}$ be the eigenvalues of $L^+$, we have that 
\[
LL^+v_1=\zerosvec
\]
and for each $i\in\set{2,\ldots,n}$,
\[
LL^+v_i= \frac{1}{\lambda_i}Lv_i=\frac{\lambda_i}{\lambda_i}v_i=v_i.
\]
Thus, $LL^+$ and $\Pi$ are identical on a set of basis vectors and hence are the same matrix. Similar calculations yield the same result for $L^+L$ and $L^{+/2}LL^{+/2}$.
\end{proof}

\begin{namedtheorem}[\cref{lem:mult_both_sides} restated]
Let $A,B, C$ be symmetric $n\times n$ matrices and suppose $A$ and $B$ are PSD. Then the following hold 
\begin{enumerate}
\item $A\approx_{\eps} B\implies C^{\top}AC\approx_{\eps}C^{\top}BC$
\item If $\ker(C)\subseteq\ker(A)=\ker(B)$ then $A\approx_{\eps} B\iff C^{\top}AC\approx_{\eps}C^{\top}BC$
\end{enumerate}
\end{namedtheorem}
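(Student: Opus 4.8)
The plan is to argue directly from the definition $A \approx_{\eps} B \iff (1-\eps)B \preceq A \preceq (1+\eps)B$, using that a congruence $M \mapsto C^{\top}MC$ preserves both the PSD cone and the Loewner order $\preceq$. For part (1), I would first note that $C^{\top}AC$ and $C^{\top}BC$ are symmetric PSD whenever $A,B$ are, so the notation $\approx_{\eps}$ is meaningful for them. Then I would record the one-line fact that $M \preceq N$ implies $C^{\top}MC \preceq C^{\top}NC$, since $y^{\top}C^{\top}MC\,y = (Cy)^{\top}M(Cy) \le (Cy)^{\top}N(Cy) = y^{\top}C^{\top}NC\,y$ for every $y$. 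Applying this to the two inequalities $(1-\eps)B \preceq A$ and $A \preceq (1+\eps)B$, and using $C^{\top}\big((1\pm\eps)B\big)C = (1\pm\eps)\,C^{\top}BC$, yields $(1-\eps)\,C^{\top}BC \preceq C^{\top}AC \preceq (1+\eps)\,C^{\top}BC$, which is exactly $C^{\top}AC \approx_{\eps} C^{\top}BC$.

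For part (2), the forward implication is part (1), so only the converse needs work. Assume $C^{\top}AC \approx_{\eps} C^{\top}BC$ with $\ker(C) \subseteq \ker(A) = \ker(B)$, recalling that $C$ is symmetric. Unwinding the hypothesis on a vector of the form $Cy$ gives, for every $z \in \Image(C)$, the scalar inequality $(1-\eps)\,z^{\top}Bz \le z^{\top}Az \le (1+\eps)\,z^{\top}Bz$. The goal is to upgrade this to all $z \in \R^{n}$, which I would do by decomposing $z = z_0 + z_1$ with $z_0 \in \ker(A) = \ker(B)$ and $z_1 \perp \ker(A)$; since $Az_0 = Bz_0 = 0$, the $z_0$-terms and cross terms vanish, so $z^{\top}Az = z_1^{\top}Az_1$ and $z^{\top}Bz = z_1^{\top}Bz_1$, and it suffices to treat $z_1$. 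Here the key point is that $\ker(C) \subseteq \ker(A)$ together with symmetry of $C$ gives $\Image(C) = (\ker C)^{\perp} \supseteq (\ker A)^{\perp} \ni z_1$, so $z_1 \in \Image(C)$ and the inequality already holds for it; since $A,B$ are symmetric with a common kernel, $\Image(A) = \Image(B) = (\ker A)^{\perp}$, so this decomposition is consistent for both quadratic forms. Hence $(1-\eps)\,z^{\top}Bz \le z^{\top}Az \le (1+\eps)\,z^{\top}Bz$ for all $z$, i.e.\ $A \approx_{\eps} B$.

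The only genuinely delicate step is this last reduction in part (2): passing from the quadratic inequality on $\Image(C)$ to one on all of $\R^{n}$. It uses both that $C$ is symmetric (so $\Image(C) = (\ker C)^{\perp}$) and the kernel containment, and it is precisely where the equality $\ker(A) = \ker(B)$ — not mere containment — is needed, since a vector killed by one form but not the other would otherwise break the lower bound. Everything else is a routine unwinding of the definition of $\approx_{\eps}$ and of congruence.
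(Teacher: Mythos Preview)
Your proof is correct and follows essentially the same approach as the paper: part (1) is identical, and for the converse in part (2) both arguments rest on the observation that $(\ker A)^{\perp} \subseteq (\ker C)^{\perp} = \Image(C)$, so it suffices to verify the quadratic inequality on $\Image(C)$. The only cosmetic difference is that the paper produces an explicit preimage via the pseudoinverse (taking $y = C^{+}x$ and noting $CC^{+}x = x$ for $x \in \Image(A)$), whereas you reach the same conclusion by orthogonally decomposing $z = z_0 + z_1$ and observing $z_1 \in \Image(C)$.
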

\begin{proof}
First we prove item 1. Note that since $A\approx_{\eps} B$, we have that $\ker(A)=\ker(B)$, otherwise the spectral inequalities can be violated by a vector in the kernel of one of the matrices but not in the kernel of the other.  Fix $x\in\R^n$ and let $y=Cx$. By assumption we have 
\[
(1-\eps)\cdot y^{\top}By\leq  y^{\top}Ay\leq (1+\eps)\cdot y^{\top}By.
\]
Observing that $y^{\top}Ay= x^{\top}C^{\top}ACx$ and $y^{\top}By= x^{\top}C^{\top}BCx$ and noting that $x$ was arbitrary completes the proof. 

For item 2, assume $\ker(C)\subseteq\ker(A)=\ker(B)$ and we will show
\[
 C^{\top}AC\approx_{\eps}C^{\top}BC\implies A\approx_{\eps} B.
 \]
The other direction follows from item 1.  Fix $x\in\R^n$. We want to show that 
\begin{equation}
\label{eq:mult_on_both_sides}
(1-\eps)\cdot x^{\top}Bx\leq  x^{\top}Ax\leq (1+\eps)\cdot x^{\top}Bx.
\end{equation}
If $x\in\ker(A)=\ker(B)$ then the above is trivially true. So without loss of generality, we can take $x\in \ker(A)^{\perp}=\Image(A)=\Image(B)$. Let $y=C^+x$. By assumption we have
\[
(1-\eps)\cdot y^{\top}C^{\top}BCy \leq y^{\top}C^{\top}ACy\leq (1+\eps)\cdot y^{\top}C^{\top}BCy
\]
Let $\Pi=CC^+$. We can rewrite the above as 
\[
(1-\eps)\cdot x^{\top}\Pi^{\top}B\Pi x \leq x^{\top}\Pi^{\top}A\Pi x\leq (1+\eps)\cdot x^{\top}\Pi^{\top}B\Pi x
\]
Note that by the definition of the Moore-Penrose pseudoinverse, $\Pi$ is the projection onto $\ker(C)^{\perp}\supseteq \Image(A)=\Image(B)$. Since we assumed without loss of generality that $x\in\Image(A)=\Image(B)$, it follows that $\Pi x=x$. Substituting this into the above establishes \cref{eq:mult_on_both_sides} and completes the proof.
\end{proof}

\begin{namedtheorem}[\cref{lem:cc_spec_approx} restated]
Let $G$ and $\tilde{G}$ be undirected graphs on $n$ vertices with Laplacians $L$ and $\tilde{L}$, respectively. If $G$ is connected and $\tilde{G}$ is disconnected then $\tilde{L}\not\approx_{\epsilon} L$ for any $\epsilon>0$.
\end{namedtheorem}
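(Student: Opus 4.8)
The plan is to exploit the gap between the kernels of $L$ and $\tilde{L}$. Since $G$ is connected, its Laplacian has a one-dimensional kernel, $\ker(L) = \Span(\onesvec)$, as recalled in \cref{sec:intro-laplacian}; being PSD, $L$ then satisfies $v^{\top} L v > 0$ for every $v \notin \Span(\onesvec)$. Since $\tilde{G}$ is disconnected it has a connected component $S$ with $\emptyset \neq S \subsetneq V$, and I would use the indicator vector $v := \sum_{i \in S} e_i$ as the witness. Because $S$ is a connected component, every edge of $\tilde{G}$ has both endpoints in $S$ or both endpoints in $V \setminus S$, so $v$ is constant along every edge; hence $\tilde{L} v = \zerosvec$ and in particular $v^{\top}\tilde{L}v = 0$.

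Next I would check that $v^{\top} L v > 0$. Since $G$ is connected and $S$ is a proper nonempty subset of $V$, there is at least one edge $(a,b) \in E$ with $a \in S$ and $b \notin S$; for that edge $w_{ab}(v_a - v_b)^2 = w_{ab} > 0$, so $v^{\top} L v = \sum_{(a,b) \in E} w_{ab}(v_a - v_b)^2 \geq w_{ab} > 0$ (equivalently, $v \notin \ker(L)$ together with $L \succeq 0$).

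Finally I would derive the contradiction: if $\tilde{L} \approx_{\epsilon} L$, then by \cref{def:approx} we have $(1-\epsilon) L \preceq \tilde{L}$, and evaluating this quadratic-form inequality at $v$ yields $(1-\epsilon)\, v^{\top} L v \leq v^{\top}\tilde{L}v = 0$; since $v^{\top}Lv>0$ this is impossible for $0 < \epsilon < 1$ (and more generally for any $\epsilon$ in the range relevant to sparsification). I do not expect a genuine obstacle here — the only step needing a word of justification is the strict inequality $v^{\top} L v > 0$, which is precisely the standard fact that the kernel of a connected graph's Laplacian is spanned by $\onesvec$ alone. An equivalent, slightly slicker packaging: for $\epsilon < 1$, $A \approx_{\epsilon} B$ forces $\ker(A) = \ker(B)$, whereas here $\dim\ker(L) = 1 < 2 \le \dim\ker(\tilde{L})$.
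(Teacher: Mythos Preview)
Your proposal is correct and follows essentially the same approach as the paper: take the indicator vector of a connected component of $\tilde{G}$, observe it lies in $\ker(\tilde{L})$ but not in $\ker(L)$, and conclude that the quadratic forms cannot satisfy the multiplicative inequality in \cref{def:approx}. Your write-up is in fact slightly more explicit than the paper's in justifying $v^\top L v>0$ (by exhibiting a cut edge) and in spelling out the final contradiction, and your closing remark that the argument as stated handles $0<\epsilon<1$ is apt---the paper's proof has the same implicit restriction.
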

\begin{proof}
Since $G$ is a connected, undirected graph we have that $\ker(L)=\Span(\set{\onesvec})$. We will show that there is a vector $v\in\ker(\tilde{L})$ such that $v\not\in\ker(L)$. This will complete the proof because it implies $v^{\top}\tilde{L} v=0$ but $v^{\top}L v\neq 0$ and hence the quadratic forms of these laplacians cannot multiplicatively approximate each other. 

Let $C$ be a connected component of $\tilde{G}$ and let $V\setminus C$ be the remaining vertices. By assumption $V\setminus C\neq \emptyset$. Let $\tilde{L}(C)$ be the Laplacian of $\tilde{G}$ if all of the edges in $V\setminus C$ were deleted and define $\tilde{L}(V\setminus C)$ analogously. Note that $\tilde{L}=\tilde{L}(C)+\tilde{L}(V\setminus C)$.

Define $v$ so that $v_i=0$ for all $i\in V\setminus C$ and $v_i=1$ for all $i\in C$. Then we have that 
\[
v^{\top}\tilde{L}v = v^{\top} \tilde{L}(C)v+v^{\top}\tilde{L}(V\setminus C)v = 0,
\]
but $v\not\in\ker(L)$.
\end{proof}

\begin{namedtheorem}[\cref{lem:norm_of_prod_ineq} restated]
Let $C$ be a real, symmetric matrix and let $A$ and $B$ be real, symmetric, PSD matrices such that $A\preceq B$. Suppose $\ker(A)=\ker(B)=\ker(C)$. Then, 
\[
\lm(CA)\leq \lm(CB).
\]
\end{namedtheorem}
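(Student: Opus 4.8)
The plan is to reduce the spectral-radius comparison to a comparison of spectral norms of genuinely symmetric matrices, and then to exploit $A\preceq B$ through a containment of the associated ellipsoids.

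First I would symmetrize. Since square matrices of the same size have the same characteristic polynomial under cyclic reordering of a product (so $\lm(XY)=\lm(YX)$), we get
$\lm(CA)=\lm(AC)=\lm\bigl(A^{1/2}(A^{1/2}C)\bigr)=\lm\bigl((A^{1/2}C)A^{1/2}\bigr)=\lm\bigl(A^{1/2}CA^{1/2}\bigr)$, and likewise $\lm(CB)=\lm\bigl(B^{1/2}CB^{1/2}\bigr)$. Because $C$ is symmetric, $A^{1/2}CA^{1/2}$ and $B^{1/2}CB^{1/2}$ are real symmetric, so their spectral radii equal their spectral norms. Hence it suffices to show $\bigl\|A^{1/2}CA^{1/2}\bigr\|\le\bigl\|B^{1/2}CB^{1/2}\bigr\|$.

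Next I would give each norm an ellipsoid description. Let $K=\ker(A)=\ker(B)$ and $K^{\perp}=\Image(A)=\Image(B)$. Using $\|M\|=\max_{\|x\|=1}|x^{\top}Mx|$ for symmetric $M$, the substitution $y=A^{1/2}x$ (with $x$ taken in $K^{\perp}$, so that $x=A^{+/2}y$ since $A^{+/2}A^{1/2}$ is the orthogonal projection onto $K^{\perp}$, and $\|x\|^{2}=y^{\top}A^{+}y$) yields
$\bigl\|A^{1/2}CA^{1/2}\bigr\|=\sup\{\,|y^{\top}Cy| : y\in K^{\perp},\ y^{\top}A^{+}y\le 1\,\}$, and the analogous identity for $B$. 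Now the hypothesis $A\preceq B$ together with $\ker(A)=\ker(B)$ gives $B^{+}\preceq A^{+}$: on $K^{\perp}$ the matrices $A,B$ restrict to positive-definite operators $\hat A\preceq\hat B$, hence $\hat B^{-1}\preceq\hat A^{-1}$, and these lift back to $B^{+}\preceq A^{+}$ since both pseudoinverses vanish on $K$. Therefore $y^{\top}B^{+}y\le y^{\top}A^{+}y$ for every $y$, so $\{y\in K^{\perp}:y^{\top}A^{+}y\le1\}\subseteq\{y\in K^{\perp}:y^{\top}B^{+}y\le1\}$. Taking the supremum of $|y^{\top}Cy|$ over the smaller set is at most the supremum over the larger one, which is exactly $\bigl\|A^{1/2}CA^{1/2}\bigr\|\le\bigl\|B^{1/2}CB^{1/2}\bigr\|$, completing the proof.

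The step that requires the most care is the anti-monotonicity $B^{+}\preceq A^{+}$: it genuinely uses the equal-kernel hypothesis (it fails already for $A=\mathrm{diag}(0,1)\preceq\mathrm{diag}(1,1)=B$), so I would spell out the reduction to $K^{\perp}$, where everything is honestly invertible. I also note that the argument only uses $\ker(A)=\ker(B)$; the extra hypothesis $\ker(C)=\ker(A)$ is harmless but not needed for this claim.
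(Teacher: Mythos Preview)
Your proof is correct. You and the paper make the same opening move---use $\lm(XY)=\lm(YX)$ to pass to the symmetric matrices $A^{1/2}CA^{1/2}$ and $B^{1/2}CB^{1/2}$, and then reduce to the spectral-norm inequality $\|A^{1/2}CA^{1/2}\|\le\|B^{1/2}CB^{1/2}\|$---but you diverge in how you establish that inequality. You reparametrize by $y=A^{1/2}x$ to write $\|A^{1/2}CA^{1/2}\|=\sup\{|y^{\top}Cy|:y\in K^{\perp},\ y^{\top}A^{+}y\le1\}$ and then invoke the anti-monotonicity $B^{+}\preceq A^{+}$ (which, as you correctly flag, needs $\ker A=\ker B$) to get ellipsoid containment. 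The paper instead proves the one-line lemma $\|A^{1/2}M\|\le\|B^{1/2}M\|$ for arbitrary $M$ (immediate from $x^{\top}M^{\top}AMx\le x^{\top}M^{\top}BMx$) and sandwiches: $\|A^{1/2}CA^{1/2}\|\le\|A^{1/2}CB^{1/2}\|\le\|B^{1/2}CB^{1/2}\|$. The paper's route is a bit shorter and avoids pseudoinverses entirely, so its norm inequality holds without any kernel hypothesis; your route is slightly longer but yields a transparent geometric picture (optimizing the same quadratic form over nested ellipsoids). Your closing remark that $\ker C=\ker A$ is not needed is also on target---indeed, combining your cyclic-reordering reduction with the paper's sandwich, one sees that even $\ker A=\ker B$ is unnecessary for the conclusion.
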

\begin{proof}
We will show that $\lm(A^{1/2}CA^{1/2})\leq\lm(B^{1/2}CB^{1/2})$. This will complete the proof because $A^{1/2}CA^{1/2}$ is similar to $CA$ on the space orthogonal to $\ker(A)=\ker(C)$ (multiply on the left by $A^{+/2}$ and on the right by $A^{1/2}$). Therefore, $A^{1/2}CA^{1/2}$ and $CA$ have the same spectrum and likewise that $B^{1/2}CB^{1/2}$ and $CB$ have the same spectrum. So establishing the inequality would show that $\lm(CA)\leq\lm(CB)$ as desired. To show $\lm(A^{1/2}CA^{1/2})\leq\lm(B^{1/2}CB^{1/2})$ we prove the following claim.

\begin{claim}\label{claim:a1}
Let $M\in\mathbb{R}^{n\times n}$ be a (possibly asymmetric) matrix and let $A,B\in\mathbb{R}^{n\times n}$ be symmetric PSD matrices such that $A\preceq B$. Then,
\begin{enumerate}
    \item $\|A^{1/2}M\|\leq\|B^{1/2}M\|$,
    \item $\|MA^{1/2}\|\leq\|MB^{1/2}\|$.
\end{enumerate}
\end{claim}
\begin{proof}
To prove item (1), we write
 \begin{align*}
  \|A^{1/2}M\|&=\max_{x\colon\|x\|=1}\|A^{1/2}Mx\| \\
  &=\max_{x\colon\|x\|=1}\sqrt{x^{\top}M^{\top}AMx}\\
  &\leq \max_{x\colon\|x\|=1}\sqrt{x^{\top}M^{\top}BMx}\\
  &= \|B^{1/2}M\|,
\end{align*}
where the inequality follows from the assumption that $A\preceq B$. For the second item, first use item (1) to get $\|A^{1/2}M^{\top}\|\leq \|B^{1/2}M^{\top}\|$. Recall that for all matrices $N$, we have $\|N\|=\|N^{\top}\|$. Therefore, since $A$ and $B$ are symmetric. we get $\|MA^{1/2}\|\leq \|MB^{1/2}\|$.
\end{proof}
Now set $M'=A^{1/2}C$ and apply item (2) of \cref{claim:a1} to conclude that $\|A^{1/2}CA^{1/2}\|\leq\|A^{1/2}CB^{1/2}\|$. Set $M''=CB^{1/2}$ and apply item (1) of the same claim to get $\|A^{1/2}CB^{1/2}\|\leq \|B^{1/2}CB^{1/2}\|$. It follows that $\|A^{1/2}CA^{1/2}\|\leq\|B^{1/2}CB^{1/2}\|$. The claim follows from the fact that $A^{1/2}CA^{1/2}$ and $B^{1/2}CB^{1/2}$ are symmetric matrices and for all symmetric matrices $N$, we have $\|N\|=\lm(N)$. 
\end{proof}

\section{Small Space Laplacian Solver for Weighted Graphs}
\label{app:weighted_solver}
In this section we show how to compute an approximate pseudoninverse of a Laplacian of a weighted undirected graph deterministically in small space, extending the work of \cite{MRSV17} who only gave an algorithm for undirected multigraphs (graphs with integer edge weights). We sketch the proof of the following lemma.

\begin{namedtheorem}[\cref{lem:weighted_solver} restated]
Given an undirected connected weighted graph $G=(V,E,w)$ with Laplacian $L=D-A$, and $0 < \epsilon < 1$, there exists a deterministic algorithm that computes a symmetric PSD matrix $\tilde{L^+}$ such that $\tilde{L^+}\approx_{\epsilon}L^{+}$, and uses space $O(\log (N\cdot w)\log\log (N\cdot w/\eps))$, where $w=\wmax/\wmin$ is the ratio of the maximum and minimum edge weights in $G$ and $N$ is the bitlength of the input.
\end{namedtheorem}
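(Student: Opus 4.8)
The plan is to reduce the weighted case to the integer-weighted (multigraph) case, which is already handled by \cite{MRSV17} (\cref{lem:solver}), via a ``scale-and-round'' step, and then bound the error and space overhead of the reduction. First note that for any scalar $c>0$ the Laplacian of the graph obtained from $G$ by multiplying all edge weights by $c$ is exactly $cL$, and $(cL)^{+}=c^{-1}L^{+}$; hence it suffices to compute an $\eps$-spectral approximation of the pseudoinverse of a conveniently rescaled Laplacian and then multiply the answer by the known factor $c^{-1}$, an operation that adds only $O(\log (N\cdot w/\eps))$ space. We may assume $\eps \ge 1/(N\cdot w)$, which is the regime in which the lemma is applied. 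Let $M$ be the smallest power of two with $M \ge 3/\eps$, and for each $(a,b)\in E$ set $\hat{w}_{ab}=\lceil M\cdot w_{ab}/\wmin\rceil \in \Z$. Because $w_{ab}/\wmin \ge 1$, we get $M \le \hat{w}_{ab}\le 2Mw$ and, using $1/M \le \eps/3$,
\[
\left(1-\tfrac{\eps}{3}\right)\cdot\tfrac{M}{\wmin}\, w_{ab} \;\le\; \hat{w}_{ab} \;\le\; \left(1+\tfrac{\eps}{3}\right)\cdot\tfrac{M}{\wmin}\, w_{ab}.
\]
Each $\hat{w}_{ab}$ is computable in $O(\log(N\cdot w/\eps))$ space, and since every $\hat w_{ab}\ge 1$ the integer-weighted graph $\hat{G}$ with these weights is connected; let $\hat{L}$ be its Laplacian.

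Writing $\hat{L}=\sum_{(a,b)\in E}\hat{w}_{ab}L_{ab}$ and $(M/\wmin)L=\sum_{(a,b)\in E}(M w_{ab}/\wmin)L_{ab}$, and using that every edge Laplacian $L_{ab}$ is PSD, the displayed per-edge bounds sum up to $\hat{L}\approx_{\eps/3}(M/\wmin)L$. Since $\hat{G}$ and $G$ are both connected, $\ker(\hat{L})=\ker(L)=\Span(\onesvec)$, so this spectral approximation passes to the pseudoinverses: using the standard fact that $0\preceq A\preceq B$ with $\ker A=\ker B$ implies $B^{+}\preceq A^{+}$, and that $\tfrac{1}{1\pm\eps/3}=1\mp\Theta(\eps)$ for $0<\eps<1$, we obtain $\hat{L}^{+}\approx_{\eps/2}(\wmin/M)L^{+}$. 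Now run the algorithm of \cref{lem:solver} on the \emph{integer-weighted} graph $\hat{G}$ with accuracy $\eps/10$, obtaining a symmetric PSD matrix $\tilde{K}$ with $\tilde{K}\approx_{\eps/10}\hat{L}^{+}$, and output $\tilde{L^{+}}:=(M/\wmin)\,\tilde{K}$, which is again symmetric PSD. Chaining the three approximations (via $A\approx_{a}B$ and $B\approx_{b}C$ implying $A\approx_{a+b+ab}C$, together with \cref{lem:mult_both_sides}) gives $\tilde{L^{+}}\approx_{\eps}L^{+}$.

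For the space bound, the compact integer-weight description of $\hat{G}$ has bitlength $\hat{N}=\poly(N,\log w,\log(1/\eps))$ and maximum weight $\hat{W}\le 2Mw=\poly(w,1/\eps)$, so $\hat{N}\cdot\hat{W}=\poly(N,w,1/\eps)$. Invoking \cref{lem:solver} with the remark that, for an integer-weighted input, its edge-list bitlength is replaced by (bitlength)$\cdot$(maximum weight), the solver call uses $O(\log(N\cdot w/\eps)\cdot\log\log(N\cdot w/\eps))$ space, which under our assumption $\eps\ge 1/(N\cdot w)$ is $O(\log(N\cdot w)\cdot\log\log(N\cdot w/\eps))$. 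Computing $\hat G$ from $G$ and the final rescaling each take $O(\log(N\cdot w/\eps))$ space, and composing the three stages through \cref{prop:composition} preserves the bound, since each intermediate output has $\poly(N,w,1/\eps)$ bitlength.

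The main obstacle is the bookkeeping of the space bound: one must verify that the rounded multigraph $\hat{G}$, measured in the ``bitlength $\times$ maximum weight'' currency in which \cref{lem:solver} is stated, remains $\poly(N,w,1/\eps)$ so that the black-box solver composes within the target budget; and one must take care that the step from spectral approximation of $L$ to spectral approximation of $L^{+}$ genuinely uses the common-kernel hypothesis (which holds since both graphs are connected) and only costs a constant factor in $\eps$. The remaining pieces — the rounding estimate, the PSD summation, and the composition of space-bounded subroutines — are routine given \cref{lem:solver}, \cref{prop:composition}, and \cref{lem:mult_both_sides}.
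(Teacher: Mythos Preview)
Your reduction via scale-and-round is essentially the same idea as the paper's, and your per-edge multiplicative bound $(1-\eps/3)\cdot\tfrac{M}{\wmin}w_{ab}\le \hat w_{ab}\le (1+\eps/3)\cdot\tfrac{M}{\wmin}w_{ab}$ is in fact cleaner than the paper's argument, which goes through an additive entrywise bound $\|2^{-t}L'-L\|_1\le 2n\cdot 2^{-t}$ together with the eigenvalue lower bound of \cref{claim:norm_of_pinv_stversion}. The one substantive difference is \emph{where} you place the $\eps$-dependence. You round to accuracy $\Theta(\eps)$, so your integer weights are $\Theta(w/\eps)$, and then invoke \cref{lem:solver} once; this forces you to assume $\eps\ge 1/(N\cdot w)$ to keep $\log(\hat N\cdot \hat W)=O(\log(N\cdot w))$. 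The paper instead rounds to a \emph{fixed constant} accuracy $\delta=1/6$ (so the integer weights are $\poly(n,w)$, independent of $\eps$), calls only the constant-accuracy first phase of \cite{MRSV17} on the multigraph, and then boosts to accuracy $\eps$ via Richardson iterations applied with the \emph{original weighted} $L$. Since Richardson iterations are agnostic to whether the weights are integral and contribute only $O(\log(1/\eps))$ iterations, the $\eps$-dependence lands solely in the $\log\log$ factor, and the lemma holds for all $0<\eps<1$ as stated. Your approach proves a slightly weaker statement (the restricted range of $\eps$), which, as you note, suffices for the applications in the paper; the paper's two-phase route buys the full range at the cost of invoking the Richardson machinery explicitly.
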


In \cite{MRSV17}, the authors assume that multigraphs are given as a list of edges and hence the length of the input $N$ is always at least the sum of the edge weights. When moving to general weighted graphs, we will think of rational edge weights as being given as a numerator and a denominator each in binary. This is a more concise representation and hence incurs a space complexity that is dependent on the edge weights. 

The following claim about spectral approximation will be useful for proving \cref{lem:weighted_solver}

\begin{claim}
\label{claim:more_approx_facts}
Let $A,B,C$ be symmetric PSD matrices and let $0 < \eps,\eps' < 1$. Then:
\begin{enumerate}
\item If $A\approx_{\eps} B$ then $A^+\approx_{\eps/(1-\eps)} B^+$.
\item  If $A\approx_{\eps} B$ and $B\approx_{\eps'} C$ then $A\approx_{\eps+\eps'+\eps\cdot\eps'}C$.
\end{enumerate}
\end{claim}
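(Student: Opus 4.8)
The plan is to reduce both items to manipulations of ordinary Löwner ($\preceq$) inequalities, unpacking \cref{def:approx}, with the only slightly delicate point being the passage to a common image subspace in item~(1).

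I would start with item~(2), which is a direct composition. The hypotheses give $(1-\eps)B \preceq A \preceq (1+\eps)B$ and $(1-\eps')C \preceq B \preceq (1+\eps')C$. For the upper bound, multiply $B \preceq (1+\eps')C$ by the nonnegative scalar $1+\eps$ and chain with $A \preceq (1+\eps)B$ to get $A \preceq (1+\eps)(1+\eps')C = (1+\eps+\eps'+\eps\eps')C$. For the lower bound, since $\eps<1$ gives $1-\eps>0$, multiply $B \succeq (1-\eps')C$ by $1-\eps$ and chain with $A \succeq (1-\eps)B$ to get $A \succeq (1-\eps)(1-\eps')C = (1-\eps-\eps'+\eps\eps')C \succeq (1-(\eps+\eps'+\eps\eps'))C$, using $\eps\eps' \ge 0$. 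Hence $A \approx_{\eps+\eps'+\eps\eps'} C$.

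For item~(1), the first step is to record that $A \approx_\eps B$ forces $\ker(A) = \ker(B)$: a vector $v$ with $Av=0$ but $Bv \ne 0$ would contradict $(1-\eps)B \preceq A$, since the quadratic form at $v$ reads $0 = v^\top A v \ge (1-\eps)\,v^\top B v > 0$; symmetrically $A \preceq (1+\eps)B$ rules out $Bv=0$, $Av\ne 0$. Let $W = \ker(A)^\perp$ be the common image. Then $A^+$ and $B^+$ are supported on $W$, are PSD, and on $W$ they invert the positive definite restrictions $A|_W$ and $B|_W$. The key tool is the anti-monotonicity of inversion: if $0 \prec X \preceq Y$ then $Y^{-1} \preceq X^{-1}$ (conjugate by $Y^{-1/2}$ to get $Y^{-1/2}XY^{-1/2} \preceq I$; invert to flip all eigenvalue inequalities, giving $Y^{1/2}X^{-1}Y^{1/2} \succeq I$; conjugate back). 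Applying this on $W$ to $(1-\eps)B \preceq A$ yields $A^+ \preceq \tfrac{1}{1-\eps}B^+$, and to $A \preceq (1+\eps)B$ yields $\tfrac{1}{1+\eps}B^+ \preceq A^+$; both extend to all of $\R^n$ since the two sides vanish on $W^\perp$. Finally I would check the arithmetic with $\eps' := \eps/(1-\eps)$: one has $1+\eps' = \tfrac{1}{1-\eps}$ exactly, and $1-\eps' = \tfrac{1-2\eps}{1-\eps} \le \tfrac{1}{1+\eps}$ (equivalently $(1-2\eps)(1+\eps) = 1-\eps-2\eps^2 \le 1-\eps$), so $(1-\eps')B^+ \preceq \tfrac{1}{1+\eps}B^+ \preceq A^+ \preceq \tfrac{1}{1-\eps}B^+ = (1+\eps')B^+$, i.e. $A^+ \approx_{\eps'} B^+$.

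Neither part uses heavy machinery; the main obstacle — such as it is — is the reduction to the common image subspace $W$ in item~(1), i.e. verifying that the pseudoinverses genuinely act as inverses there and that the resulting $\preceq$ inequalities lift back to all of $\R^n$. The scalar anti-monotonicity of inversion and the two short arithmetic checks are routine.
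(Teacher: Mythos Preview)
Your proof is correct. Both items are handled cleanly: item~(2) is the straightforward chaining of the two pairs of L\"owner inequalities, and in item~(1) you correctly identify the common kernel, restrict to the image where the matrices are positive definite, invoke operator anti-monotonicity of inversion, and verify the arithmetic $1+\eps/(1-\eps)=1/(1-\eps)$ and $1-\eps/(1-\eps)\le 1/(1+\eps)$.

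The paper itself does not prove this claim; it simply writes ``See \cite{MRSV17} for the proofs'' and notes that the cited reference uses the $e^{\pm\eps}$ variant of spectral approximation (which is equivalent up to constants). So there is no in-paper argument to compare against---your self-contained proof is exactly what a reader would want here, and is essentially the standard argument one would reconstruct from that citation.
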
 
See \cite{MRSV17} for the proofs.\footnote{\cite{MRSV17} uses a variant of spectral approximation that measures closeness with $e^{\pm \eps}$ rather than $(1\pm\eps)$ but these two notions are equivalent up to constant factors and the proofs of these basic facts are nearly identical for both notions.} 

\begin{proof}[Proof sketch of \cref{lem:weighted_solver}]
The algorithm of \cite{MRSV17} has two steps, much like most of the efficient Laplacian solvers in the literature. Given the Laplacian $L$ of an undirected multigraph, they first compute a $c$-spectral approximation to $L^+$ for some constant $c<1/2$, using space $O(\log (N\cdot\wmax)\cdot\log\log (N\cdot\wmax))$ and then they boost this to an $\eps$-spectral approximation using an iterative method known as the Richardson iterations. This latter step uses an additional $O(\log (N\cdot\wmax)\cdot \log\log \frac{N\cdot\wmax}{\eps})$ space, giving the final space complexity. The first step is more delicate and is where the authors required integer edge weights, while Richardson iterations are agnostic to the edge weights and boosts any $c$-spectral approximation to the pseudoinverse to an $\eps$-spectral  approximation in space $O(\log (N\cdot\wmax)\cdot \log\log (N\cdot\wmax)/\eps)$, as long as $c<1/2$. 

Now, given a weighted Laplacian $L$, we will construct a multigraph having Laplacian $L'$ such that a scalar multiple of $(L')^+$ is a constant spectral approximation to $L^+$. Applying the first step of \cite{MRSV17} to $L'$ will allow us to compute a constant spectral approximation to $(L')^+$, which in turn will be a constant spectral approximation to $L^+$ by transitivity (Item (2) of \cref{claim:more_approx_facts}). With this, we can boost to an $\eps$-spectral approximation using Richardson iterations.

Set $z=\min\{1,w_{\mathrm{min}}\}$ where $w_{\mathrm{min}}$ is the minimum edge weight in $G$. Set $\delta=\frac{1}{6}$, $\gamma=\frac{\delta}{1-\delta}=\frac{1}{5}$ and $t=\left\lceil{\log \frac{2 n^3}{\delta  z}}\right\rceil$. We construct a multigraph $G'$ on $n$ vertices as follows. For each edge $(a,b)\in E(G)$, add an edge between $a$ and $b$ in $G'$ with weight $\lfloor{2^t\cdot w_{ab}}\rfloor$. Note that from the way we set $t$, these new weights are all positive integers. Let $L'=D'-A'$ be the Laplacian of $G'$ and note that for all $a,b\in[n]$ we have 
\begin{align*}
\left|2^{-t}\cdot A'_{ab}-A_{ab}\right|&=\left|2^{-t}\cdot \lfloor{2^t\cdot w_{ab}}\rfloor-w_{ab}\right|\\
&\leq 2^{-t}.
\end{align*}
Since for all $i$, $D_{ii} = \sum_{j}A_{ij}$ and $2^{-t}\cdot D'_{ii}=2^{-t}\cdot\sum_{j}A'_{ij}$, it follows that $|2^{-t}\cdot D'_{ii}-D_{ii}|\leq n\cdot 2^{-t}$. Letting $E=2^{-t}\cdot L'-L$, we can conclude that the sum of absolute values of the entries of each column of $E$ is bounded by $2\cdot n\cdot 2^{-t}$, denoted by $\| E\|_1 \le 2\cdot n\cdot 2^{-t}$. 

We will show that $2^{-t}\cdot L'\approx_{\delta} L$, which from \cref{claim:more_approx_facts} will imply that $2^t\cdot (L')^+=(2^{-t}\cdot L')^+\approx_{\gamma} L^+$. To see that $2^{-t}\cdot L'\approx_{\delta} L$, fix $v\in\mathbb{R}^n$. Since $L$ and $L'$ are both Laplacians of undirected graphs with the same connectivity status, they share a kernel. So for all vectors in the kernel, $L$ and $L'$ have equal quadratic forms and without loss of generality we can assume $v$ is in the orthogonal complement of the kernel. 

For any symmetric matrix $M$ we have that $|v^{\top}Mv|\leq v^{\top}v\cdot \|M\| \leq v^{\top}v\cdot \|M\|_1$. Setting $M=E$ and recalling our bound on the $\| \cdot \|_1$-norm of $E$ we get
\[
 \left|v^{\top}Ev\right|\leq v^{\top}v\cdot 2 n\cdot 2^{-t}.
\]
From \cref{claim:norm_of_pinv_stversion}, we have that $v^{\top}Lv\geq v^{\top}v\cdot \frac{z}{n^2}$.
Combining this with the above gives 

\begin{align*}
 \left|v^{\top}Ev\right|&\leq v^{\top}v\cdot 2 n\cdot 2^{-t}\\
 &\leq  v^{\top}v\cdot \delta \cdot \frac{z}{n^2}\\
 &\leq \delta\cdot v^{\top}Lv,
\end{align*}
from which it follows that $2^{-t}\cdot L'\approx_{\delta} L$. Thus, $2^t\cdot (L')^+\approx_{\gamma} L^+$. 

Using the first step in \cite{MRSV17}, we compute $\tilde{L}$, which is a $\gamma$-approximation to $2^t\cdot (L')^+$. Note that the maximum edge weight in $G'$ is upper-bounded by $\wmax\cdot 2^t=\poly(n\cdot \wmax/\wmin)$.  So the invocation of \cite{MRSV17} uses space $O(\log (N\cdot \wmax/\wmin)\cdot\log\log (N \cdot\wmax/\wmin))$.  From Item (2) of \cref{claim:more_approx_facts}, we have that $\tilde{L}\approx_{2 \gamma+\gamma^2} L^+$. Since $2 \gamma+\gamma^2<\frac{1}{2}$, we can apply Richardson iterations to $\tilde{L}$ to compute an $\eps$-spectral approximation to $L^+$ in total space $$O(\log (N\cdot\wmax/\wmin)\log\log (N\cdot\wmax/\eps\cdot\wmin)),$$ as desired.
\end{proof}

\end{document}